\numberwithin{equation}{section}
\newtheorem{theorem}{Theorem}[section]
\newtheorem{lemma}[theorem]{Lemma}
\newtheorem{proposition}[theorem]{Proposition}
\newtheorem{corollary}[theorem]{Corollary}
\theoremstyle{definition}
\newtheorem{definition}[theorem]{Definition}
\newtheorem{remark}[theorem]{Remark}
\newtheorem{example}[theorem]{Example}
\newcommand{\Xn}{{\op{X}}_n}
\newcommand{\Xd}{{\op{X}}_{\bf d}}
\newcommand{\CH}{\op{CoHa}(A_1)}
\newcommand{\cS}{\mathcal{S}}
\newcommand{\mV}{\mathbb{V}}
\newcommand{\bb}{\mathbf b} 
 \newcommand{\ba}{\mathbf a}
\newcommand{\bd}{\mathbf d} 
\newcommand{\oS}{\op{S}} 
\newcommand{\C}{\op{C}} 
\newcommand{\Func}{\op{Func}_n}
\newcommand{\eu}{\op{eu}}  
\newcommand{\inc}{\op{inc}}  
\newcommand{\incl}{\op{incl}}
\newcommand{\Supp}{\op{Supp}} 
\newcommand{\bw}{\op{w}}   
\newcommand{\Eins}{\mathbbm{1}}
\def\Cross1{%
  \unitlength=1 pt\thicklines
 \begin{picture}(30,30)
 \put(15,0){\color{red}\line(0,1){30}}

 \put(0,15){\line(1,0){30}}

  \end{picture}}
\newcommand{\op}{\operatorname}
\newcommand{\mC}{\mathbb{C}}
\newcommand{\ov}{\overline}
\newcommand{\la}{\lambda}
\newcommand{\bH}{{\bf H}}
\newcommand{\bP}{{\bf P}}
\newcommand{\End}{\op{End}}
\newcommand{\cU}{\mathcal{U}}
\newcommand{\cZ}{\mathcal{Z}}
\newcommand{\mg}{\mathfrak{gl}_2}
\author{Vassily Gorbounov, Christian Korff and Catharina Stroppel}
\address{V. G.: Institute of Mathematics, University of Aberdeen, UK and 
National Research University Higher School of Economics, Russian Federation. }
\email{vgorb10@gmail.com}
\address{C. K.: School of Mathematics and Statistics, Glasgow University, UK. } \email{christian.korff@glasgow.ac.uk}
\address{C. S.: Hausdorff Center of Mathematics, University of Bonn, Germany. }
\email{stroppel@math.uni-bonn.de}
\begin{document}
\title[Yang-Baxter algebras as convolution algebras]{Yang-Baxter algebras as convolution algebras:\\ The Grassmannian case}
\begin{abstract} 
We present a simple but explicit example of a recent development which connects quantum integrable models with Schubert calculus: there is a purely geometric construction of solutions to the Yang-Baxter equation and their associated Yang-Baxter algebras which play a central role in quantum integrable systems and exactly solvable lattice models in statistical physics. We consider the degenerate five-vertex limit of the asymmetric six-vertex model and identify its associated Yang-Baxter algebra as convolution algebra arising from the equivariant Schubert calculus of Grassmannians.  We show how our method can be used to construct (Schur algebra type) quotients of the current algebra  $\mathfrak{gl}_2[t]$ acting on the tensor product of copies of its evaluation representation $\mathbb{C}^2[t]$. Finally we connect it with the COHA for the  $A_1$-quiver.
\end{abstract}
\maketitle

\tableofcontents
\addtocontents{toc}{\protect\setcounter{tocdepth}{0}}
\vspace{-0.5cm}
\section*{Introduction}
Quantum groups first arose in the physics literature, particularly in the work
of the Faddeev school, from the so-called quantum inverse scattering
method, \cite{quantumscatt}, \cite{BIK}, \cite{Fadeev}, which had been developed to construct and solve quantum integrable systems. 
In their original form, quantum groups are associative algebras whose defining relations are expressed in terms of a matrix of constants (depending on the
integrable system under consideration) called a {\it quantum R-matrix}. It was
realised independently by Drinfeld  \cite{Drinfeld} and Jimbo \cite{Jimbo} around 1985, that
these algebras are Hopf algebras, which, in many cases, are deformations of universal enveloping algebras of Lie algebras.
Thus, although many of the fundamental papers on quantum groups are
written in the language of integrable systems, their properties are accessible
by more conventional mathematical techniques, such as the theory of topological and algebraic groups and of Lie algebras.

The task of placing quantum groups in the setup of representations of quivers and geometric representation theory, was initiated by several 
researchers including Ringel \cite{Ringel}, Beilinson, Lusztig, MacPherson \cite{BLM} and Ginzburg  \cite{Ginzburg} just to name a few. The quantum groups studied by Drinfeld and Jimbo were hereby interpreted as the cohomology of certain complex algebraic varieties or as algebras of functions on certain varieties defined over finite fields. The associative algebra structure comes in both cases from a {\it convolution product construction}, in other words, these quantum groups were realised as subalgebras of some convolution algebra. 

In the last years several groups of researchers suggested a further geometrisation of the theory around quantum R-matrices (for instance in the work of Nekrasov and Shatashvilli, \cite{NSh}, Braverman, Maulik and Okounkov, \cite{BMO}, \cite{MO}, \cite{O}, Schiffmann and Vasserot \cite{SV})  leading into a slightly different direction. All  these approaches provide a  geometric construction of (an action of) {\it $R$-matrices}. Arising naturally as solutions of the Yang-Baxter equations, $R$-matrices are the major ingredient of the construction of a {\it Yang-Baxter algebra} and therefore it allows a geometric construction of Yang-Baxter algebras, often realised inside geometric constructions of quantum groups.  

In the current paper we will present some new results in this direction, based on an explicit example. 
The starting point is the observation (Proposition~\ref{startingpoint} and Corollary~\ref{corstartingpoint}) that the transition matrix between two natural bases in the equivariant cohomology of a Grassmannian 
  is a solution to the Yang-Baxter equation.  These two bases are formed by the classes of the attracting manifolds of two different one-dimensional torus actions (obtained by choosing two regular cocharacters of the natural $n$-dimensional torus, separated by a wall of the Weyl chamber). This is parallel to the approach of Maulik and Okounkov, \cite{MO}, where the classical rational $R$-matrix serves as the transition matrix between different choices of {\it stable envelope bases}, a basis of the equivariant cohomology of the cotangent bundle to Grassmannians which naturally arises in the context of symplectic geometry and quiver varieties. 

We instead work with the Grassmannians themselves and construct an action of the  $R$-matrix (see Corollary~\ref{startingpoint}) on equivariant cohomologies. Working with the base space instead of the cotangent bundle might be viewed as passing to an interesting degeneration  of the existing constructions. Taking the $R$-matrix as an input for the so-called $\op{RTT}=\op{TTR}$-construction from \cite{FRT} we produce two {\it Yang-Baxter algebras}, (Definition~\ref{DefYBA} and Proposition~\ref{PropYBA}), which we will study in this paper. These Yang-Baxter algebras are both a degeneration of the so-called asymmetric {\it six vertex model} studied in statistical physics, see \eqref{6vertex}, which allows us to connect it with the combinatorics and with Bethe vectors known in the theory of integrable systems, see also Remark~\ref{degeneration}.

Our first main result (Theorem~\ref{main}) describes these Yang-Baxter algebras as an explicit subalgebra of a convolution algebra acting on the equivariant cohomology $H^*_T=\bigoplus_{k=0}^N H^*_T(\op{Gr}(k,N))$ of the {\it Grassmannian variety} of all linear subspaces in a fixed vector space of dimension $N$.  
The two algebras are very similar, one involves twists with the Chern classes of the tautological bundle, the other Chern classes of the quotient bundle, a symmetry which was not obvious to us from the algebraic definition.

The (ordinary and equivariant) cohomology algebra of Grassmannians has been studied under the name {\it Schubert calculus} for more than a hundred years, see e.g. \cite{Schubertcalc} for an overview. In particular, its (rational) cohomology comes with two distinct bases over $H^*(\op{pt})$ (the $T$-equivariant cohomology of a point), namely the so-called {\it Schubert basis} arising as the classes of the Schubert varieties (which are closures of cells in a natural cell decomposition or attracting cells for the canonical torus action), and on the other hand the {\it torus fixed point basis}. 

To make the connection with the theory of quantum integrable systems we identify the Schubert basis in $H^*_T$ with the {\it standard} or {\it spin basis} of the tensor product $V[t]^{\otimes N}$ for a two dimensional vector space $V=\mC^2$.  It turns out, see Corollary~\ref{Bethe},  that the  torus fixed point basis then corresponds to the {\it basis of Bethe vectors}, see \cite{Bax} and references therein, in the context of quantum integrable systems. Our results might provide some new insights into Schubert calculus. 

Working with the interplay of different bases of $V^{\otimes N}$ (as $\mathfrak{sl}_2$-module) related to the  geometry of Grassmannians is not new and was used for instance frequently as the guiding principle for categorified representations of the quantum group $U_q(\mathfrak{sl}_2)$, see e.g.  \cite{FK}, \cite{Savage}, \cite{FKS}. The focus there is however on the realisation of  {\it integral} bases (like the {\it canonical} bases of Lusztig and Kashiwara) and the combinatorics of Kazdhan-Lusztig polynomials. In these frameworks,  $V^{\otimes N}$ is either identified with the direct sum of certain vector spaces of functions related to the Grassmannians or with the Grothendieck group of the direct sum of the categories of perverse sheaves on the Grassmannians. In contrast, the basis of Bethe vectors  is not integral. 

Moreover, in our framework,  the {\it current algebra} $\mathfrak{gl}_2[t]$ appears (instead of $\mathfrak{sl}_2$) with its action on the tensor product $V[t]^{\otimes N}$ of its evaluation module for {\it generic evaluation parameter}, as for instance also in \cite{Varchenko} (in the classical situation of $\mathfrak{gl}_2$ the situation would become trivial). We connect the universal enveloping algebra of the current Lie algebra $\mathfrak{gl}_2[t]$ in Theorem~\ref{YBAandcurrent} with the two Yang-Baxter algebras. More precisely we realise the algebra of endomorphisms of $H^*_T$ (after some localisation) generated by these two algebras with a quotient of the universal enveloping algebra of  $\mathfrak{gl}_2[t]$ localised with respect to a certain central subalgebra. This localisation is directly connected with the denominators appearing in the Bethe vectors and in the localisation theorem for equivariant cohomology.  The connection is established via a Schur-Weyl duality type statement between this current Lie algebra and a subalgebra of the group algebra of the affine Weyl group, Corollary~\ref{UmodI}

This also illustrates that our setup can be viewed as an interesting degeneration of the situation described e.g. in \cite{MO} or \cite{GRV}, where the Yangian and the affine Hecke algebra instead of the loop algebra and the group algebra of the affine Weyl group occur. The case of Yangians (and more general quantum groups) with a similar approach to ours was developed parallel and independently to our work in \cite{SV}. Our contribution should be seen as supplementary, giving an {\it explicit} example using classical geometry and elementary tools. We focus on the specific Grassmannian varieties, since we like to have explicit and elementary formulae. Our principal approach can however be generalised in the obvious way to arbitrary partial flag varieties of type $A$.  

In the last section we finally connect our construction with cohomological Hall algebras (COHAs) in the sense of Kontsevich and Soibelman, \cite{KS}. We equip $H_T^*$ with an action of a (in fact the easiest possible) COHA and show in Theorem~\ref{ThmgammasYB} that its image is contained in the endomorphism algebra generated by the two Yang-Baxter algebras.
 
\subsection*{Acknowledgement}
We are grateful to C. De Concini, O. Foda, H. Franzen, L. Michalcea,  R. Rim{\'a}nyi, N. Reshethikin, Y. Soibelman, V. Tarasov, A. Varchenko and P. Zinn-Justin for sharing ideas and  knowledge generously, and T. Przezdziecki for comments on a draft version. We thank MPI and the HCM in Bonn where most of this research was done. The first author has been partially funded by the  Russian Academic Excellence Project 5-100. 
\\\\
Conventions: For the whole paper we fix the ground field $\mC$ of complex numbers and abbreviate $\otimes=\otimes_\mC$, $\op{Hom}=\op{Hom}_\mC$, $\End=\End_\mC$ etc. For $A$, $B\in\End(V)$ we denote by $BA$ the composition $BA(v)=B(A(v))$ for $v\in V$.  For a natural number $N$ set $[N]=\{1,2,\ldots,N\}$.

\addtocontents{toc}{\protect\setcounter{tocdepth}{1}}
\section{Diagrammatic from integrable systems}
\label{diagrams}
While the construction of the Yang-Baxter algebra from certain L-matrices is standard in the physics literature, we feel it is much less known to mathematicians. We therefore develop it here in detail adapted to our situation,  and also refer to \cite{Haz} and \cite{Bump} for a similar setup.
\begin{center}
{\it For the whole paper we fix a natural number $N\geq 2$.} 
\end{center}
\subsection{Lax matrices and monodromy matrices}
In the following let $V=\mC^2$ with a fixed basis $v_0, v_1$. For any $\mC$-algebra $S$ we denote by  $\op{Mat}_2(S)$ the vector space of $2\times 2$-matrices with entries in $S$.  Any such matrix defines an endomorphism of $\mC^2\otimes S$ in the $S$-basis $v_0,v_1$. We often view elements in  $\op{Mat}_2(\op{Mat}_2(S))$ as endomorphisms of  $V\otimes V\otimes S=\mC^2\otimes \mC^2\otimes S$ in the ordered basis $v_0\otimes v_0$, $v_0\otimes v_1$, $v_1\otimes v_0$, $v_1\otimes v_1$.  For a finite set of variables $X$ denote by $\mathbb{C}[X]$ the polynomial ring in the variables $x\in X$ and let $W[X]:=W\otimes \mathbb{C}[X]$ for any vector space $W$. Set $\bP=\mathbb{C}[t_1,t_2,\ldots, t_N]$ and $V_N=V^{\otimes N}$  
 and let $$\mV_N=V^{\otimes N}\otimes\bP=V^{\otimes N}[t_1,t_2,\ldots, t_N].$$ 
 Let $S_N$  be the symmetric group of order $N!$ generated by the simple transpositions $s_i=(i,i+1)$, $i\in [N-1]$ and $\mC[S_N]$ its group algebra. It acts on $\bP$ by algebra automorphisms $w( t_i)=t_{w(i)}$. We denote by $\bP^{S_N}$ the invariants, that is the subalgebra of $\bP$ consisting of symmetric polynomials.
 
For $0\leq n\leq N$ let $\Lambda_n=\{(\la_1,\ldots,\la_N)\in\{0,1\}^N\mid \sum_{i=1}^N{\la_i}=n\}$ be the set of words of length $N$ in the letters $0$, $1$  with $n$ $1$'s. Set $\Lambda=\bigcup_{n=0}^N\Lambda_n=\{0,1\}^N.$  In the following we identify the vector spaces
 $$(V[t])^{\otimes N}=V^{\otimes N}[t_1,t_2,\ldots, t_N]=\mV_N.$$
 by sending a vector $(w_1\otimes p_1(t))\otimes (w_2\otimes p_2(t))\otimes \cdots\otimes(w_N\otimes p_N(t))$ with $w_i\in V$ and $p_i(t)\in \mC[t]$ to $(w_1\otimes w_2\otimes w_N)\otimes (p_1(t_1)p_2(t_2)\cdots p_N(t_N))$.

\begin{definition}
We pick a matrix $L(x,t)\in\op{Mat}_2(\op{Mat}_2(\mathbb{C}[x,t]))$, called {\it Lax matrix}, that is 
\begin{eqnarray}
\label{Lmatrix}
L(x,t)&=&\begin{pmatrix}
A(x,t)&B(x,t)\\
C(x,t)&D(x,t)
\end{pmatrix}
\end{eqnarray}
with entries in $\op{Mat}_2(\mathbb{C}[x,t])$. It defines with our choice of basis a $\mC[x,t]$-linear endomorphism of $V\otimes V[x,t]$.
\end{definition}

Consider now the tensor product $V[x]\otimes V[t]^{\otimes N}$. We call $V[x]$ the $0$th factor and then number the remaining tensor factors from $1$ to $N$.

 \begin{definition}
The {\it Monodromy matrix} is the endomorphism 
\begin{eqnarray}
\label{mono0}
M(x,t_1,\ldots ,t_N)&=&L_{0N}(x,t)\cdots L_{02}(x,t) L_{01}(x,t),
\end{eqnarray}
of $V[x]\otimes V[t]^{\otimes N}$, where $L_{ij}$ is the endomorphism acting as $L(x,t)$ on the  tensor factors $i$ and $j$ and as the identity on the remaining factors. 
\end{definition}

In the {\it standard} $\bP[x]$-basis, or {\it spin basis}, defined as
\begin{eqnarray}
\label{basisvec}
v_{\la}=v_{\la_{1}}\otimes v_{\la_{2}}\otimes\cdots\otimes v_{\la_{N}}, &\text{where}& \la=(\la_1,\la_2,\ldots,\la_N)\in\Lambda,
\end{eqnarray}
$M(x,t_1,t_2,\ldots, t_N)$ can be viewed as a block $2\times 2$-matrix with blocks, $A=A(x,t_1,t_2,\ldots, t_N)$ etc. of size $2^N\times 2^N$,  i.e.
\begin{eqnarray}
\label{Mono}
M(x,t_1,\ldots, t_N)&=&\begin{pmatrix}
A(x,t_1,\ldots, t_N)&B(x,t_1,\ldots, t_N)\\
C(x,t_1,\ldots, t_N)&D(x,t_1,\ldots, t_N)
\end{pmatrix}.\quad
\end{eqnarray}
and the blocks are matrices of $\bP[x]$-linear operators acting on $V[x]\otimes\mV_N$.

To calculate the matrix entries of \eqref{Mono} and algebraic relations between them we use some diagrammatic calculus which is a common tool in the theory of integrable systems. 
For this we first identify the standard basis vectors  \eqref{basisvec} with their $\{0,1\}$-words $(\la_1, \la_2,\dots,\la_N)$ of length $N$.  A {\it crossing} is a diagram of the form as shown on the left of \eqref{crosses} built out of an ordered pair of directed line intervals intersecting in their midpoints.
The $4$ line segments obtained by removing the intersection point are called {\it edges}. The directions assigned to the lines split the edges into {\it inputs} and {\it outputs}. The order of the lines puts an order on the inputs and on the outputs. In the picture, the edge on the left is the {\it first input}, the one on the top the {\it second input}, the edge on the right is the {\it first output} and the one on the bottom the {\it second output}. We will often omit drawing the arrows and the ordering.

A {\it labelled crossing} is a crossing, where all the four edges are labelled by an element from $\{0,1\}$ which we also display by drawing the edges {\it solid} if they are labelled with $1$ and {\it dotted} if they are labelled with $0$. For example, the following two pictures represent the same labelled crossing with input pair  (solid, dotted) and output pair (dotted, solid).
\begin{equation}\label{crosses}
\begin{tikzpicture}
\draw[very thick] (-3.0,0) -- (-3.5,0) node[anchor=east]{\ding{172}}; 
\draw[very thick] (-3,0) -- (-3,0.5) node[anchor=south]{\ding{173}};
\draw[very thick,->] (-3,0) -- (-2.4,0);
\draw[very thick,->] (-3,0) -- (-3,-0.6);
\draw[ultra thick] (-0.5,0) -- (0,0); 
\draw[ultra thick, dotted] (0,0.5) -- (0,0);
\draw[ultra thick, dotted] (0,0) -- (0.5,0);
\draw[ultra thick] (0,0) -- (0,-0.5);
\draw[very thick] (3,0) -- (2.5,0) node[anchor=east]{1}; 
\draw[very thick] (3,0) -- (3,0.5) node[anchor=south]{0};
\draw[very thick] (3,0) -- (3.5,0) node[anchor=west]{0};
\draw[very thick] (3,0) -- (3,-0.5) node[anchor=north]{1};
\end{tikzpicture}
\end{equation}

The 16 possible labelled crossings are displayed in the matrix below. 
\\
\begin{eqnarray}
\label{sixteen}
\begin{pmatrix}
\begin{tikzpicture}
\draw[ultra thick,dotted] (-0.4,0) -- (0.4,0); 
\draw[ultra thick, dotted] (0,0.4) -- (0,-0.4);
\end{tikzpicture}
\;\;
\begin{tikzpicture}
\draw[ultra thick,dotted] (-0.4,0) -- (0,0); 
\draw[ultra thick] (0,0.4) -- (0,0);
\draw[ultra thick, dotted] (0,0) -- (0.4,0);
\draw[ultra thick,dotted] (0,0) -- (0,-0.4);
\end{tikzpicture}
\;\;
\begin{tikzpicture}
\draw[ultra thick] (-0.4,0) -- (0,0); 
\draw[ultra thick,dotted] (0,0.4) -- (0,0);
\draw[ultra thick, dotted] (0,0) -- (0.4,0);
\draw[ultra thick,dotted] (0,0) -- (0,-0.4);
\end{tikzpicture}
\;\;
\begin{tikzpicture}
\draw[ultra thick] (-0.4,0) -- (0,0); 
\draw[ultra thick] (0,0.4) -- (0,0);
\draw[ultra thick, dotted] (0,0) -- (0.4,0);
\draw[ultra thick,dotted] (0,0) -- (0,-0.4);
\end{tikzpicture}\\
\begin{tikzpicture}
\draw[ultra thick,dotted] (-0.4,0) -- (0,0); 
\draw[ultra thick,dotted] (0,0.4) -- (0,0);
\draw[ultra thick, dotted] (0,0) -- (0.4,0);
\draw[ultra thick] (0,0) -- (0,-0.4);
\end{tikzpicture}
\;\;
\begin{tikzpicture}
\draw[ultra thick,dotted] (-0.4,0) -- (0.4,0); 
\draw[ultra thick] (0,0.4) -- (0,-0.4);
\end{tikzpicture}
\;\;
\begin{tikzpicture}
\draw[ultra thick] (-0.4,0) -- (0,0); 
\draw[ultra thick,dotted] (0,0.4) -- (0,0);
\draw[ultra thick, dotted] (0,0) -- (0.4,0);
\draw[ultra thick] (0,0) -- (0,-0.4);
\end{tikzpicture}
\;\;
\begin{tikzpicture}
\draw[ultra thick] (-0.4,0) -- (0,0); 
\draw[ultra thick] (0,0.4) -- (0,0);
\draw[ultra thick, dotted] (0,0) -- (0.4,0);
\draw[ultra thick] (0,0) -- (0,-0.4);
\end{tikzpicture}\\
\begin{tikzpicture}
\draw[ultra thick,dotted] (-0.4,0) -- (0,0); 
\draw[ultra thick,dotted] (0,0.4) -- (0,0);
\draw[ultra thick] (0,0) -- (0.4,0);
\draw[ultra thick,dotted] (0,0) -- (0,-0.4);
\end{tikzpicture}
\;\;
\begin{tikzpicture}
\draw[ultra thick,dotted] (-0.4,0) -- (0.4,0); 
\draw[ultra thick] (0,0.4) -- (0,0);
\draw[ultra thick] (0,0) -- (0.4,0);
\draw[ultra thick,dotted] (0,0) -- (0,-0.4);
\end{tikzpicture}
\;\;
\begin{tikzpicture}
\draw[ultra thick] (-0.4,0) -- (0.4,0); 
\draw[ultra thick,dotted] (0,0.4) -- (0,0);
\draw[ultra thick,dotted] (0,0) -- (0,-0.4);
\end{tikzpicture}
\;\;
\begin{tikzpicture}
\draw[ultra thick] (-0.4,0) -- (0,0); 
\draw[ultra thick] (0,0.4) -- (0,0);
\draw[ultra thick] (0,0) -- (0.4,0);
\draw[ultra thick,dotted] (0,0) -- (0,-0.4);
\end{tikzpicture}\\
\begin{tikzpicture}
\draw[ultra thick,dotted] (-0.4,0) -- (0,0); 
\draw[ultra thick,dotted] (0,0.4) -- (0,0);
\draw[ultra thick] (0,0) -- (0.4,0);
\draw[ultra thick] (0,0) -- (0,-0.4);
\end{tikzpicture}
\;\;
\begin{tikzpicture}
\draw[ultra thick,dotted] (-0.4,0) -- (0,0); 
\draw[ultra thick] (0,0.4) -- (0,0);
\draw[ultra thick] (0,0) -- (0.4,0);
\draw[ultra thick] (0,0) -- (0,-0.4);
\end{tikzpicture}
\;\;
\begin{tikzpicture}
\draw[ultra thick] (-0.4,0) -- (0,0); 
\draw[ultra thick,dotted] (0,0.4) -- (0,0);
\draw[ultra thick] (0,0) -- (0.4,0);
\draw[ultra thick] (0,0) -- (0,-0.4);
\end{tikzpicture}
\;\;
\begin{tikzpicture}
\draw[ultra thick] (-0.4,0) -- (0.4,0); 
\draw[ultra thick] (0,0.4) -- (0,-0.4);
\end{tikzpicture}
\end{pmatrix}
\end{eqnarray}

A {\it (labelled) lattice diagram} is a diagram obtained by composing finitely many labelled crossings vertically and horizontally by gluing an output edge with an input edge of the same colour,  see e.g. \eqref{diag}, together with a total ordering on the lines. If not otherwise specified, the ordering is always taking the left inputs from bottom to top followed by the top inputs from left to right, see \eqref{diag}.

\begin{definition}
 A Lax matrix $L\in\op{Mat}_2(\op{Mat}_2(\mathbb{C}[x,t]))$ assigns to a labelled crossing a {\it (Boltzmann) weight}, namely the $(i,j)$-entry of the matrix $L$ if the crossing is as the $(i,j)$-entry in \eqref{sixteen}.
The {\it weight of a labelled lattice diagram} is then the product of all weights of the crossings, but evaluated at $x=x_a,t=t_b$ if the $a$-th row and $b$-th column cross. If there is only one row (resp. one column) then we usually keep the variable $x$ (resp. t) instead of evaluating at $x_1$ (resp. $t_1$).
 \end{definition}

\begin{example}
\label{exs}
In case of $L=L(x,t)$ respectively $L=L'(x,t)$ from \eqref{L1L2} below, we have the following: the weight of the purely black  crossing equals $1$ respectively $x-t$. The weight of the lattice diagram in the middle  of \eqref{threediag} equals  $x+t_2$ respectively $x-t_1$, whereas for the one on the right it is $1$ and $(x-t_1)(x-t_2)$ respectively. 
\end{example}

  \subsection{Calculating matrix entries}  
It is not hard to check now that the matrix entries in  \eqref{Mono} can be calculated as follows: consider a {\it $1$-row lattice of length $N$}, that is a labelled lattice diagram obtained by putting $N$ labelled crossings next to each other, e.g. for $N=3$,

\begin{eqnarray}
\label{diag}
\begin{tikzpicture}
\draw[ultra thick, dotted] (0,0) -- (-1,0) node[anchor=east]{\ding{172}}; 
\draw[ultra thick,dotted] (-0.5,0) -- (-0.5,0.5) node[anchor=south]{\ding{173}};
\draw[ultra thick] (-0.5,0) -- (-0.5,-0.5);
\draw[ultra thick] (0,0) -- (0,0.5) node[anchor=south]{\ding{174}};
\draw[ultra thick] (0,0) -- (1,0);
\draw[ultra thick] (0,0) -- (0,-0.5);
\draw[ultra thick] (0.5,0) -- (0.5,0.5) node[anchor=south]{\ding{175}};
\draw[ultra thick, dotted] (0.5,0) -- (0.5,-0.5);
\draw[ultra thick, dotted] (6,0) -- (5,0); 
\draw[ultra thick,dotted] (5.5,0) -- (5.5,0.5) node[anchor=south]{0};
\draw[ultra thick] (5.5,0) -- (5.5,-0.5) node[anchor=north]{1};
\draw[ultra thick] (6,0) -- (6,0.5) node[anchor=south]{1};
\draw[ultra thick] (6,0) -- (7,0);
\draw[ultra thick] (6,0) -- (6,-0.5) node[anchor=north]{1};
\draw[ultra thick] (6.5,0) -- (6.5,0.5) node[anchor=south]{1};
\draw[ultra thick, dotted] (6.5,0) -- (6.5,-0.5) node[anchor=north]{0};
\end{tikzpicture}
%
\end{eqnarray}

The top sequence of labels defines a $\{0,1\}$-word  $\la_{in}$, and thus a vector $v_{\la_{in}}\in V_N$, called the {\it input vector}; likewise, the bottom sequence of labels defines the {\it output vector} $v_{\la_{out}} \in V_N$. 
The two (left and right) outer horizontal edges specify one of the operators $A$, $B$, $C$, $D$ from  \eqref{Mono} via the assignments
$$A\leftrightarrow (0,0),\, B\leftrightarrow (1,0),\,C\leftrightarrow (0,1),\,D\leftrightarrow (1,1).$$
The corresponding Boltzmann weight is then a polynomial in $\bP[x]$, and for the Lax matrices \eqref{L1L2} of total degree at most $N$. From the definitions we deduce: 

\begin{proposition}
Let $O\in\{A,B,C,D\}$ and $v_{\la_{in}}, v_{\la_{out}}\in V^{\otimes N}$ standard basis vectors. 
Then the matrix coefficient $(O(v_{\la_{in}}),v_{\la_{out}})$ is equal to the sum of the weights of all  $1$-row lattice diagrams  of length $N$ whose external edge labelling is fixed and given by the triple ($O$, $v_{\la_{in}}$, $v_{\la_{out}}$).
\end{proposition}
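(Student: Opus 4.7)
The plan is to verify the identity by directly unwinding the definition of the monodromy matrix and matching the resulting sum of products with the sum over lattice diagrams; this is essentially a bookkeeping exercise, but the conventions have to be lined up carefully.

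\textbf{Step 1.} Translate the claim into an auxiliary-space matrix element. Writing $M(x,t_1,\ldots,t_N)$ as the block $2\times 2$-matrix \eqref{Mono} in the auxiliary factor means, by the assignment $A\leftrightarrow(0,0)$, $B\leftrightarrow(1,0)$, $C\leftrightarrow(0,1)$, $D\leftrightarrow(1,1)$, that for the pair $(l,r)$ associated with $O$ the value $(O(v_{\la_{in}}),v_{\la_{out}})$ equals the coefficient of $v_r\otimes v_{\la_{out}}$ in $M(v_l\otimes v_{\la_{in}})$. This reduces the statement to computing a single matrix coefficient of $M$ in the full standard basis of $V[x]\otimes\mV_N$.

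\textbf{Step 2.} Expand $M=L_{0N}\cdots L_{02}L_{01}$ by iterated application and insertion of complete bases. Since $L_{0k}$ is the identity outside the $0$-th and the $k$-th tensor factor, each application changes only the auxiliary index and the $k$-th quantum index. Inserting the standard basis between successive factors yields
\begin{equation*}
M(v_l\otimes v_{\la_{in}})
=\sum_{\mu_1,\ldots,\mu_{N-1}\in\{0,1\}}\;\sum_{\la'_1,\ldots,\la'_N\in\{0,1\}}\;\sum_{\mu_N\in\{0,1\}}
\Bigl(\prod_{k=1}^N\ell_k\Bigr)\, v_{\mu_N}\otimes v_{\la'_1}\otimes\cdots\otimes v_{\la'_N},
\end{equation*}
where $\mu_0:=l$ and $\ell_k$ is the entry of the Lax matrix $L(x,t_k)$ sending $v_{\mu_{k-1}}\otimes v_{\la_k}$ to the $v_{\mu_k}\otimes v_{\la'_k}$-component. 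Extracting the coefficient of $v_r\otimes v_{\la_{out}}$ forces $\mu_N=r$ and $\la'_k=(\la_{out})_k$ for every $k$.

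\textbf{Step 3.} Identify the surviving summands with labelled $1$-row lattice diagrams. For each choice of internal auxiliary labels $\mu_1,\ldots,\mu_{N-1}$, the data $(l,\mu_1,\ldots,\mu_{N-1},r)$ together with the prescribed top labels from $v_{\la_{in}}$ and bottom labels from $v_{\la_{out}}$ gives precisely one labelling of a $1$-row lattice diagram of length $N$ whose external edges are fixed by the triple $(O,v_{\la_{in}},v_{\la_{out}})$, and conversely every such labelling arises this way. By construction each factor $\ell_k$ is, in the convention of \eqref{sixteen}, exactly the Boltzmann weight of the $k$-th crossing (row variable $x$, column variable $t_k$); hence $\prod_{k=1}^N\ell_k$ is the total weight of the diagram, and summing over $\mu_1,\ldots,\mu_{N-1}$ produces the sum in the proposition.

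The main obstacle is not conceptual but notational: one has to check that the composition order $L_{0N}\cdots L_{01}$ in \eqref{mono0} corresponds, after inserting completeness, to the left-to-right reading of crossings in a $1$-row lattice (so that $\mu_0$ sits on the outer left edge and $\mu_N$ on the outer right edge), and that the index conventions in \eqref{sixteen} together with $A\leftrightarrow(0,0)$, $B\leftrightarrow(1,0)$, $C\leftrightarrow(0,1)$, $D\leftrightarrow(1,1)$ match the identification of $O$ with a particular auxiliary-space matrix element. With these conventions fixed, each step is a direct unpacking of definitions.
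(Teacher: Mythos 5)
Your proof is correct and carries out exactly the computation the paper leaves implicit (the paper states the proposition with only the remark that it follows "from the definitions"). The key points — expanding $M=L_{0N}\cdots L_{01}$ by inserting complete bases so that each internal auxiliary label $\mu_k$ becomes the label of the $k$-th internal horizontal edge, identifying each factor $\ell_k$ with the Boltzmann weight of the $k$-th crossing via the conventions of \eqref{sixteen}, and matching the block $(l,r)$ in \eqref{Mono} with the outer left/right edge labels via $A\leftrightarrow(0,0)$, $B\leftrightarrow(1,0)$, $C\leftrightarrow(0,1)$, $D\leftrightarrow(1,1)$ — are all verified correctly and in the right order of composition ($L_{01}$ applied first, hence the leftmost crossing).
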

\begin{example} Let us calculate the coefficient $c$ of $Cv_{(0,1,1)}$  at $v_{(0,1,0)}$. The labelling of the external edges is as displayed in (i) which for $L(x,t)$ or $L'(x,t)$ as in \eqref{L1L2} can be extended to a unique lattice diagram with non-zero weight as in (ii), and thus  $x+t_2$ respectively $x-t_1$ is its weight.

\begin{equation}
\label{threediag}
\begin{tikzpicture}
\draw[ultra thick, dotted] (-0.5,0) -- (-1,0) node[anchor=east]{(i)\qquad}; 
\draw[ultra thick,dotted] (-0.5,-0.5) -- (-0.5,0.5);
\draw[ultra thick] (0,-0.5) -- (0,0.5);
\draw[ultra thick] (0.5,0) -- (0.5,0.5);
\draw[ultra thick] (0.5,0) -- (1,0);
\draw[ultra thick, dotted] (0.5,0) -- (0.5,-0.5);
\end{tikzpicture}
\qquad\qquad
\begin{tikzpicture}
\draw[ultra thick, dotted] (0.5,0) -- (-1,0) node[anchor=east]{(ii)\qquad}; 
\draw[ultra thick,dotted] (-0.5,-0.5) -- (-0.5,0.5);
\draw[ultra thick] (0,-0.5) -- (0,0.5);
\draw[ultra thick] (0.5,0) -- (0.5,0.5);
\draw[ultra thick] (0.5,0) -- (1,0);
\draw[ultra thick, dotted] (0.5,0) -- (0.5,-0.5);
\end{tikzpicture}
\qquad\qquad
\begin{tikzpicture}
\draw[ultra thick, dotted] (0.5,0) -- (-1,0) node[anchor=east]{(iii)\qquad}; 
\draw[ultra thick,dotted] (-0.5,-0.5) -- (-0.5,0.5);
\draw[ultra thick,dotted] (0,-0.5) -- (0,0.5);
\draw[ultra thick] (0.5,0) -- (0.5,0.5);
\draw[ultra thick] (0.5,0) -- (1,0);
\draw[ultra thick, dotted] (0.5,0) -- (0.5,-0.5);
\end{tikzpicture}
\end{equation}
 In comparison, the coefficient of $Cv_{(0,0,1)}$  at $v_{(0,0,0)}$ equals the weight of  (iii), see Example~\ref{exs}.
\end{example}

\subsection{Yang-Baxter equation}

\begin{definition} A pair $(R(x,y), L(x,t))$ from $\op{Mat}_2(\op{Mat}_2(\mathbb{C}[x,t]))$ is {\it a solution of the Yang-Baxter equation} if the following holds
\begin{eqnarray}
\label{RLL}
L_{23}(x_2,t)L_{13}(x_1,t)R_{12}(x_1,x_2)&=&R_{12}(x_1,x_2)L_{13}(x_1,t)L_{23}(x_2,t)\quad
\end{eqnarray}
as endomorphisms of $V^{\otimes 3}[x_1,x_2,t]$. The subindices indicate the factors on which the respective operator acts non-trivially. 
\end{definition}
By incorporating $R$, the diagrammatics from above can easily be adopted to express \eqref{RLL} graphically. To write the identity \eqref{RLL} as a system of  equalities for the matrix entries of the above $4\times 4$ matrices with blocks of size $2\times 2$, attach to $R(x,y)$ a St. Andrew's cross  {\huge${\times}$}
with first input at the bottom left and the second input at the top left, and consider the following two diagrams.

\[
\begin{tikzpicture}[scale=0.8]
\draw (-4.5,0) node{(LHS)};
\draw[very thick, rounded corners=8pt] (0,0.5) -- (-2,0.5) -- (-3,-0.5) node[anchor=east]{\ding{172}};
\draw[very thick, rounded corners=8pt] (0,-0.5) -- (-2,-0.5) -- (-3,0.5) node[anchor=east]{\ding{173}};
\draw[very thick] (-1,-1) -- (-1,1) node[anchor=south]{{\ding{174}}};
\end{tikzpicture}
\qquad\qquad
\begin{tikzpicture}[scale=0.8]
\draw (-4.5,0) node{(RHS)};
\draw[very thick, rounded corners=8pt] (0,0.5) -- (-1,-0.5) -- (-3,-0.5) node[anchor=east]{\ding{172}};
\draw[very thick, rounded corners=8pt] (0,-0.5) -- (-1,0.5) -- (-3,0.5) node[anchor=east]{\ding{173}};
\draw[very thick] (-2,-1) -- (-2,1) node[anchor=south]{{\ding{174}}};
\end{tikzpicture}
\]

Here the three lines correspond to the three tensor factors involved in \eqref{RLL},                                                                                                                                            with the inputs numbered as indicated. Reading the diagrams from left to right and top to bottom, the crossings correspond precisely to the factors involved in the compositions \eqref{RLL} (with the inputs always on the left and at the top and the outputs on the bottom and to the right). By a labelling of such a diagram we mean, as before, a diagram with labels $\{0,1\}$ attached to all the edges, and the weight of such a diagram is again defined as the product of the weights attached to the vertices (except that now the new type of crossings get weights from the matrix $R$). \\
The following is a consequence of the definitions:

\begin{proposition} 
\label{RL}
A pair $(R(x,y),L(x,t))$ from $\op{Mat}_2(\op{Mat}_2(\mathbb{C}[x,t]))$ gives  a solution of the Yang-Baxter equation \eqref{RLL} if and only if the following holds: for each common labelling of the external edges of the diagrams (LHS) and (RHS) the respective sums of weights (for all possible complete labellings) agree.
\end{proposition}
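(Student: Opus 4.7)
The plan is to reduce the operator equation \eqref{RLL} to a system of scalar equations indexed by labellings of external edges, and then identify each scalar equation with the equality of the two weighted sums from the two diagrams (LHS) and (RHS). Since two endomorphisms of $V^{\otimes 3}[x_1,x_2,t]$ agree iff all their matrix coefficients in the standard basis $\{v_{\mu_1}\otimes v_{\mu_2}\otimes v_{\mu_3}\}$ coincide, \eqref{RLL} is equivalent to the family of identities
\begin{equation*}
\bigl(L_{23}L_{13}R_{12}\,v_{\mu},\,v_{\nu}\bigr) \;=\; \bigl(R_{12}L_{13}L_{23}\,v_{\mu},\,v_{\nu}\bigr)
\end{equation*}
ranging over all pairs $(\mu,\nu)\in\Lambda_3\times\Lambda_3$, where I write $v_\mu=v_{\mu_1}\otimes v_{\mu_2}\otimes v_{\mu_3}$ and similarly for $v_\nu$. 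These $\mu$ and $\nu$ are precisely the labels of the three external inputs and three external outputs of the diagrams.

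Next I would expand each side by inserting resolutions of the identity between the three factors. For the left hand side this gives
\begin{equation*}
\bigl(L_{23}L_{13}R_{12}\,v_{\mu},\,v_{\nu}\bigr)\;=\;\sum_{\kappa,\kappa'\in\Lambda_3}(R_{12}v_\mu,v_\kappa)(L_{13}v_\kappa,v_{\kappa'})(L_{23}v_{\kappa'},v_\nu),
\end{equation*}
and analogously for the right hand side. I would then match the indexing set of intermediate states $(\kappa,\kappa')$ with the set of labellings of the internal edges of the diagram (LHS). Concretely, each of the two internal edges connecting the three crossings in (LHS) corresponds to one of the three tensor factors passing from one matrix to the next, and the remaining factors are unchanged by the corresponding $L_{ij}$ or $R_{12}$; thus a choice of labels on the internal edges is precisely a choice of $(\kappa,\kappa')$ compatible with the external data. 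By definition of the Boltzmann weight assigned by $R$ and $L$, each factor in the above product is exactly the weight of the corresponding crossing. Multiplying gives the weight of a complete labelling of (LHS), and summing over intermediate states yields the sum of weights over all labellings extending the fixed external labelling. The same analysis applies to (RHS), with two internal edges arranged in the other configuration.

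Hence the equality of the two matrix coefficients is exactly the equality of the two weighted sums over complete labellings extending the common external labelling $(\mu,\nu)$. Since this equivalence holds for every external labelling, the proposition follows. The bookkeeping step is the only delicate one: one must verify that the identification of internal edges with intermediate states respects the tensor factor on which each $L_{ij}$, $R_{12}$ acts non-trivially (the subscripts $12$, $13$, $23$) and that the identity operators on the untouched factors force the external and internal labels on those factors to match across each crossing. This is a direct but careful check against the convention fixed in \eqref{sixteen} and the definition of the Boltzmann weight, and is the main (though entirely routine) obstacle in the argument.
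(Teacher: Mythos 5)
Your argument is correct and spells out the routine check the paper leaves implicit: the paper simply asserts that Proposition~\ref{RL} is ``a consequence of the definitions'' and gives no proof, and the chain of identifications you use (matrix coefficients in the standard basis $\leftrightarrow$ external labellings; resolutions of identity between the consecutive operators in $L_{23}L_{13}R_{12}$ $\leftrightarrow$ choices of internal labelling; identity on the untouched factor of each $L_{ij}$ or $R_{12}$ forcing the labels on that strand to match across the crossing) is exactly the content of that assertion. One small counting slip: the diagram (LHS) has \emph{three} internal edges connecting its three crossings (one on each of lines $1$, $2$ and $3$), not two; the ``two'' you have in mind are the two intermediate states $\kappa,\kappa'$ arising from the two resolutions of identity. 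After the forced equalities $\kappa_3=\mu_3$, $\kappa'_2=\kappa_2$, $\kappa'_1=\nu_1$ from the identity factors, the remaining free labels $\kappa_1,\kappa_2,\kappa'_3$ range over $2^3$ possibilities and are in bijection with the internal edge labellings, so the count does come out right and your conclusion is unaffected.
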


\begin{proposition} 
\label{RMM}
If $(R(x,y),L(x,t))$ is a solution to the Yang-Baxter equation, then the  identity
\begin{eqnarray*}
&&M_{2}(x_2,t_1,\ldots,t_N)M_{1}(x_1,t_1,\ldots,t_N)R_{12}(x_1,x_2)\\
&=&
R_{12}(x_1,x_2)M_{1}(x_1,t_1,\ldots,t_N)M_{2}(x_2,t_1,\ldots,t_N)
\end{eqnarray*}
holds in $\op{End}_{P[x_1,x_2]}(V\otimes V\otimes \mV_N[x_1,x_2])$ 
for any $N\geq 1$, where $M_i$ acts on the $i$th factor $V$  and on $\mV_N[x_1,x_2]$ as in \eqref{mono0}.
\end{proposition}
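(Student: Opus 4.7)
The plan is to proceed by induction on $N$, factoring each monodromy matrix into a product of $L$-operators and combining the Yang-Baxter relation \eqref{RLL} with the elementary commutativity $L_{ij}L_{i'j'}=L_{i'j'}L_{ij}$ valid whenever the index sets $\{i,j\}$ and $\{i',j'\}$ are disjoint (in which case the two operators act on disjoint tensor factors of $V\otimes V\otimes \mV_N[x_1,x_2]$).

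The base case $N=1$ is immediate: since $M_i(x_i,t_1)=L_{i1}(x_i,t_1)$, the assertion reduces, after relabelling the single physical factor as the factor indexed by $3$, to the hypothesised Yang-Baxter relation \eqref{RLL} evaluated at $t=t_1$.

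For the inductive step I would split off the $L$-operator acting on the $N$-th physical factor, writing $M_i(x_i,t_1,\ldots,t_N)=L_{iN}(x_i,t_N)\,\widetilde M_i(x_i,t_1,\ldots,t_{N-1})$, with $\widetilde M_i$ the monodromy built on the first $N-1$ physical factors. Since $L_{1N}$ and $\widetilde M_2$ act on disjoint tensor factors (auxiliary $1$ and physical $N$ on the one hand, auxiliary $2$ and physical $1,\ldots,N-1$ on the other), they commute, so
\[M_2 M_1 R_{12}=L_{2N}L_{1N}\,\widetilde M_2\widetilde M_1\,R_{12}.\]
The inductive hypothesis rewrites $\widetilde M_2\widetilde M_1 R_{12}=R_{12}\widetilde M_1\widetilde M_2$ in the middle; then \eqref{RLL} applied to the triple $\bigl(L_{2N}(x_2,t_N),L_{1N}(x_1,t_N),R_{12}(x_1,x_2)\bigr)$ converts $L_{2N}L_{1N}R_{12}$ into $R_{12}L_{1N}L_{2N}$; and a final commutation of $L_{2N}$ past $\widetilde M_1$, justified again by the disjointness principle, reassembles the right-hand side as $R_{12}M_1 M_2$.

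The only point that requires attention is the bookkeeping of which pairs of $L_{ij}$'s commute: operators sharing the same auxiliary factor (for instance $L_{2j}$ and $L_{2k}$ with $j\neq k$) do not commute in general, which is precisely why the ordering in the definition \eqref{mono0} of $M$ matters. All commutations actually invoked above involve disjoint index sets, so they go through automatically. A purely diagrammatic alternative is to iterate Proposition~\ref{RL}: stack two $1$-row lattices of length $N$ vertically and attach an $R$-crossing at the left end, then pull that crossing through the double row column by column, applying Proposition~\ref{RL} at each column to perform the local RLL move.
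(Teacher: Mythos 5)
Your proof is correct and amounts to the same underlying argument as the paper's: the algebraic induction you set up, peeling off the factor $L_{iN}$ and pushing $R_{12}$ past it via the RLL relation and disjoint-factor commutativity, is precisely the ``railroad'' move the paper performs diagrammatically (and which you also describe in your closing paragraph). Your version has the merit of making explicit which commutations are invoked and why they are legitimate, whereas the paper leaves this implicit in the picture.
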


\begin{proof} This follows from Proposition~\ref{RL} by the usual rail road principle, which means diagrammatically that we move the crossing representing $R$ to the right
\[
\begin{tikzpicture}[scale=0.7]
\draw (0.5,0) node{=};
\draw[very thick, rounded corners=8pt] (0,0.5) -- (-2,0.5) -- (-3,-0.5); 
\draw[very thick, rounded corners=8pt] (0,-0.5) -- (-2,-0.5) -- (-3,0.5); 
\draw[very thick] (-0.3,-1) -- (-0.3,1); 
\draw[very thick] (-0.9,-1) -- (-0.9,1);
\draw[very thick] (-1.5,-1) -- (-1.5,1);
\end{tikzpicture}
\quad
\begin{tikzpicture}[scale=0.7]
\draw[very thick, rounded corners=8pt] (0,0.5) -- (-1.5,0.5) -- (-2,-0.5) -- (-3,-0.5); 
\draw[very thick, rounded corners=8pt] (0,-0.5) -- (-1.5,-0.5) -- (-2,0.5) -- (-3,0.5); 
\draw[very thick] (-0.5,-1) -- (-0.5,1); 
\draw[very thick] (-1,-1) -- (-1,1);
\draw[very thick] (-2.5,-1) -- (-2.5,1);
\end{tikzpicture}
\quad
\begin{tikzpicture}[scale=0.7]
\draw (-4,0) node{=\quad$\cdots$\quad=};
\draw[very thick, rounded corners=8pt] (0,0.5) -- (-1,-0.5) -- (-3,-0.5); 
\draw[very thick, rounded corners=8pt] (0,-0.5) -- (-1,0.5) -- (-3,0.5); 
\draw[very thick] (-2.7,-1) -- (-2.7,1); 
\draw[very thick] (-2.1,-1) -- (-2.1,1);
\draw[very thick] (-1.5,-1) -- (-1.5,1);
\end{tikzpicture}
\]
by applying iteratively the Yang-Baxter equation from Proposition~\ref{RL}.
\end{proof}
\subsection{Six and five vertex models}

Usually one restricts to $R$-matrices which preserve the weight spaces if we view $V$ as the vector representation of $\mathfrak{sl}_2(\mC)$. Matrices of this form
\begin{eqnarray}
\label{6vertex}
\begin{pmatrix}
\omega_1&0&0&0\\
0&\omega_3&\omega_5&0\\
0&\omega_6&\omega_4&0\\
0&0&0&\omega_2
\end{pmatrix}
\end{eqnarray}
appear for instance in the  {\it  asymmetric six vertex model}, \cite{Bax}, in the physics literature. 

\begin{definition}
\label{ex}
The two examples of $R$-matrices and Lax matrices, which will be treated here explicitly, are the following
\small
\begin{eqnarray*}
R(x,y):=
\begin{pmatrix}
1&0&0&0\\
0&x-y&1&0\\
0&1&0&0\\
0&0&0&1
\end{pmatrix}
&&
R'(x,y)=
\begin{pmatrix}
1&0&0&0\\
0&0&1&0\\
0&1&y-x&0\\
0&0&0&1
\end{pmatrix}\;
\end{eqnarray*}
\begin{eqnarray}
\label{L1L2}
L(x,t)=
\begin{pmatrix}
1&0&0&0\\
0&x+t&1&0\\
0&1&0&0\\
0&0&0&1
\end{pmatrix}
&&
L'(x,t)=
\begin{pmatrix}
x-t&0&0&0\\
0&1&1&0\\
0&1& 1&0\\
0&0&0&0
\end{pmatrix}
\end{eqnarray}

\end{definition}
The Lax operators \eqref{L1L2}  are so-called {\it 5-vertex degenerations}, see also Remark~\ref{degeneration}, of
the asymmetric 6-vertex model which is used to model ferroelectrics in external
electromagnetic fields \cite{Bax} and are representatives of the two classes studied e.g. in \cite{PR} in the theory of quantum integrable systems, \cite{HWKK}, going back to \cite{Wu}. The pair $(R,L)$  is called in \cite{Korff} (observing the switch in notation) the {\it osculating walkers} model, and the pair $(R',L')$ the {\it vicious walkers} model.
\normalsize

\begin{remark}  
\label{fivevertex}
 It is known that a (generic) pair $(R(x,y), L(x,t))$ of matrices, each of
the form \eqref{6vertex}, is a solution of \eqref{RLL}, if the Boltzmann weights $(\omega_1,\omega_2,\omega_3,\omega_4,\omega_5,\omega_6)$  for each matrix yield constant values for the following two ratios, that is 
\begin{eqnarray*}
\Delta_1 = \frac{ \omega_1\omega_2 + \omega_3\omega_4 - \omega_5\omega_6}{
2\omega_1\omega_3},\quad\text{and}\quad
\Delta_2 = \frac{\omega_1\omega_2 }{
\omega_1\omega_3}
\end{eqnarray*}
are well-defined complex numbers for either of the two matrices. This criterion is originally due to Baxter \cite{Bax}, but can be also found in \cite[Theorem 2]{BBF}. Note that this criterion fails for the (degenerate) vicious walker model.   
\end{remark}

\begin{remark}
\label{physics}
There are two types of transformations one can perform on the weights of a solution of \eqref{RLL} which produces a new solution, namely
\begin{eqnarray}
(\omega_1,\omega_2,\omega_3,\omega_4,\omega_5,\omega_6)&\leadsto& (\omega_2,\omega_1,\omega_4,\omega_3,\omega_6,\omega_5)\\
(\omega_1,\omega_2,\omega_3,\omega_4,\omega_5,\omega_6)&\leadsto& (\omega_3,\omega_4,\omega_1,\omega_2,\omega_6,\omega_5) \label{secondtype}
\end{eqnarray}
This follows easily diagrammatically, since the first exchanges $0$ and $1$ labels on the vertical and the horizontal edges of each vertex configuration, whereas the second exchanges $0$ and $1$ on the vertical edge and swaps the entries on the horizontal edge.
\end{remark}
 
\begin{proposition} \label{yangbaxter} The pairs $(R(x,y),L(x,t))$ and $(R'(x,y),L'(x,t))$ are both  solutions to the Yang-Baxter equation \eqref{RLL}.
\end{proposition}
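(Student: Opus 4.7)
The plan is to apply the diagrammatic criterion of Proposition~\ref{RL}: the Yang-Baxter equation \eqref{RLL} holds if and only if, for each labelling of the three external inputs and three external outputs of the diagrams (LHS) and (RHS) by elements of $\{0,1\}$, the sum over all compatible completions of the internal edges of the Boltzmann weights agrees on both sides. Since each of the matrices $R,R',L,L'$ has the six-vertex form \eqref{6vertex}, both sides preserve the number of $1$-labels among the external edges, so only labellings in which the inputs and outputs carry the same number of $1$'s can possibly contribute.

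I would then enumerate these remaining labellings according to the number $k$ of $1$-labels among the three external inputs (which must equal the number among the outputs), with $k\in\{0,1,2,3\}$. The extremal cases $k=0$ and $k=3$ are trivial: on each side there is a unique non-zero completion, whose weight is a product of three $\omega_1$'s or three $\omega_2$'s respectively. The intermediate cases $k=1$ and $k=2$ split into a short list of sub-cases indexed by the positions of the $1$-labels on the inputs and outputs. Using the additional $5$-vertex vanishings in \eqref{L1L2} (the vanishing of $\omega_4$ for $L$ and of $\omega_2$ for $L'$), many of these sub-cases have at most one completion on each side and the equality is immediate from Definition~\ref{ex}. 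The actual content is concentrated in the few sub-cases in which one side has two completions whose weighted sum must reproduce a single completion on the other side; these are precisely the sub-cases where the linear entries $x_1-x_2$ (respectively $y-x$) of $R$ (respectively $R'$) combine with the linear entries $x\pm t$ of $L$ (respectively $L'$) to produce a short polynomial identity of degree at most two in $\mC[x_1,x_2,t]$.

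The pair $(R',L')$ is to be treated by an entirely parallel enumeration. One may alternatively exploit the symmetries collected in Remark~\ref{physics}: a suitable combination of the two transformations listed there, together with a renaming of the spectral parameters, exchanges the Boltzmann weights of $(R,L)$ with those of $(R',L')$, so that the verification for one pair implies the verification for the other without additional case analysis.

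I expect the main obstacle to be organisational rather than conceptual: one needs to match carefully the ordering of the inputs and outputs at the St.\ Andrew's cross that represents $R$ to the matrix conventions of Definition~\ref{ex}, read off the correct Boltzmann weight at each of the many configurations, and keep track of the signs of the linear factors $x_1-x_2$ or $y-x$ when they appear.
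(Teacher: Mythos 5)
Your plan is valid but takes a genuinely different route from the paper's. The paper performs no case-by-case diagrammatic check at all: it invokes Baxter's $\Delta_1,\Delta_2$-criterion (Remark~\ref{fivevertex}) for one pair and passes to the other via the transformation \eqref{secondtype} together with $t\mapsto -t$ (Remark~\ref{physics}). You instead propose a self-contained direct verification via Proposition~\ref{RL}: use weight conservation to restrict to equal-weight in/out labellings, enumerate by $k\in\{0,\dots,3\}$, and keep the Remark~\ref{physics} symmetry only as an optional shortcut for the second pair. Both routes are legitimate and the trade-offs are the expected ones. Your route uses only the machinery already built in Section~\ref{diagrams}, at the price of a finite but somewhat tedious check (in effect a $3\times3$ matrix identity in each of the two non-trivial weight sectors, where the whole content is the cancellation $(x_2+t)+(x_1-x_2)=x_1+t$ and its analogues). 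The paper's route is one paragraph, but leans on an external sufficiency criterion whose applicability in this degenerate setting is delicate: Remark~\ref{fivevertex} itself warns that the $\Delta$-criterion fails for the vicious-walker pair $(R',L')$, and indeed $\omega_3(R')=0$ makes $\Delta_2(R')$ undefined; your direct check sidesteps this subtlety entirely. Two refinements worth recording for your outline: (a) you should exploit all four five-vertex vanishings, $\omega_4(R)=\omega_4(L)=0$, $\omega_3(R')=0$, $\omega_2(L')=0$ — not only those of the $L$-matrices — since both $R$-vanishings are needed to force unique internal completions in most $k=1,2$ configurations; (b) passing between the two pairs via Remark~\ref{physics} is clean on the $L$-side (the type-two transformation plus $t\mapsto -t$ turns $L'$ into $L$) but on the $R$-side requires a bit more than a renaming of spectral parameters, so if you go that way the bookkeeping there deserves to be spelled out.
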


\begin{proof}
For $(R',L')$ this follows directly from Remark~\ref{fivevertex}, as we find $\Delta_2 = 0$ in these cases.  Applying the transformation \eqref{secondtype} to $(R',L')$ and replacing $t$ by $-t$ we obtain the other pair $(R,L)$. By Remark~\ref{physics} it solves the Yang-Baxter equation as well. 
\end{proof}

\section{Yang-Baxter algebras and Bethe vectors}
\label{secYBA}
From now on we work with the solutions $(R,L)$ and $(R',L')$ from \eqref{L1L2} to the Yang-Baxter equation. We define now the main object of our studies. 

\subsection{Yang-Baxter algebras}
Let $O=O(x,t_1,\ldots, t_N)\in\{A,B,C,D\}$. Expand it as
\begin{eqnarray}
\label{ABCD}
O(x)=\sum_{i\geq 0} O^{(i)}x^i
\end{eqnarray}
in the variable $x$, with coefficients in $\op{End}_{\bP}(\mV_N)=\op{End}_\bP(V^{\otimes N}[t_1,t_2,\ldots, t_N])$.

\begin{proposition} 
\label{PropYBA}
In case of  $(R,L)$, the identity from Proposition~\ref{RMM}, 
\begin{eqnarray}
\label{YBEconcrete}
R_{12}(x_1,x_2)M_{1}(x_1,t)M_{2}(x_2,t)&=&M_{2}(x_2,t)M_{1}(x_1,t)R_{12}(x_1,x_2)
\end{eqnarray}
is equivalent to the following sixteen relations 
\begin{eqnarray*}
A(x_1)A(x_2)\;=\;A(x_2)A(x_1), && B(x_1)B(x_2)\;=\;B(x_2)B(x_1),\\
B(x_1)A(x_2)\;=\;B(x_2)A(x_1), && A(x_1)C(x_2)\;=\;A(x_2)C(x_1),\\
C(x_1)C(x_2)\;=\;C(x_2)C(x_1), && D(x_1)D(x_2)\;=\;D(x_2)D(x_1),\\
B(x_2)D(x_1)\;=\;B(x_1)D(x_2),&&D(x_2)C(x_1)\;=\;D(x_1)C(x_2),\\
B(x_1)C(x_2)\;=\;B(x_2)C(x_1), && B(x_1)A(x_2)\;=\;B(x_2)A(x_1),\\
A(x_2)B(x_1)- A(x_1)B(x_2)&=&(x_2-x_1)B(x_2)A(x_1), \\
C(x_2)A(x_1)-C(x_1)A(x_2)&=&(x_1-x_2)A(x_1)C(x_2),\\
A(x_1)D(x_2)-A(x_2)D(x_1)&=&(x_1-x_2)B(x_2)C(x_1), \\
D(x_2)B(x_1)-D(x_1)B(x_2)& =&(x_1-x_2)B(x_1)D(x_2),\\
C(x_1)D(x_2)-C(x_2)D(x_1)&=&(x_1-x_2)D(x_2)C(x_1),\\
C(x_2)B(x_1)-C(x_1)B(x_2)&=&(x_1-x_2)(A(x_1)D(x_2)-D(x_2)A(x_1)).
\end{eqnarray*}
\end{proposition}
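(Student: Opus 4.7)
The plan is a direct block-matrix computation. Expand each side of \eqref{YBEconcrete} as a $4\times 4$ matrix with entries in $\op{End}_\bP(\mV_N)[x_1,x_2]$ in the ordered auxiliary basis $v_0\otimes v_0$, $v_0\otimes v_1$, $v_1\otimes v_0$, $v_1\otimes v_1$ of $V\otimes V$, and then read off the claimed identities by comparing the sixteen matching entries. Equivalence of \eqref{YBEconcrete} with the listed relations is tautological, since two $4\times 4$ matrices agree if and only if all their entries do.

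First I would record the explicit block forms. Writing $A_i = A(x_i)$ and similarly for $B,C,D$, and using that $M_1(x_1)$ acts as $M(x_1)$ on the first auxiliary factor and trivially on the second (and dually for $M_2$), a short calculation in $V\otimes V$ gives
\begin{equation*}
M_1(x_1) = \begin{pmatrix} A_1 & 0 & B_1 & 0 \\ 0 & A_1 & 0 & B_1 \\ C_1 & 0 & D_1 & 0 \\ 0 & C_1 & 0 & D_1 \end{pmatrix},\qquad
M_2(x_2) = \begin{pmatrix} A_2 & B_2 & 0 & 0 \\ C_2 & D_2 & 0 & 0 \\ 0 & 0 & A_2 & B_2 \\ 0 & 0 & C_2 & D_2 \end{pmatrix},
\end{equation*}
while $R_{12}(x_1,x_2)$ is given in Definition~\ref{ex}.

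Next, I use the fact that $R(x_1,x_2)$, $L(x,t)$, and hence both $M_1$ and $M_2$, preserve the total $\{0,1\}$-weight on the auxiliary $V\otimes V$; consequently both $R_{12}M_1M_2$ and $M_2M_1R_{12}$ respect the weight decomposition $V\otimes V = \mC(v_0\otimes v_0) \oplus \bigl(\mC(v_0\otimes v_1) \oplus \mC(v_1\otimes v_0)\bigr) \oplus \mC(v_1\otimes v_1)$. The one-dimensional weight-$0$ and weight-$2$ summands give at once the identities $A(x_1)A(x_2) = A(x_2)A(x_1)$ and $D(x_1)D(x_2) = D(x_2)D(x_1)$. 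The remaining fourteen listed identities all come from the central $2\times 2$ weight-$1$ sector: multiplying out the two restricted $2\times 2$ products and comparing their four entries (each a polynomial of degree $\leq 1$ in $x_1-x_2$) yields them directly. The "homogeneous" commutation relations (those without a $(x_1-x_2)$-factor) come from the diagonal $1$'s in $R_{12}$; the six "inhomogeneous" relations with a $(x_1-x_2)$-prefactor come from the off-diagonal $1$ and the $(x_1-x_2)$-entry of the middle block of $R_{12}$.

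There is no conceptual obstacle, only bookkeeping. One must accept that certain listed identities arise from more than one matrix-entry comparison (for instance $B(x_1)A(x_2) = B(x_2)A(x_1)$ is listed twice and emerges from two distinct entries). A cleaner alternative is to invoke Proposition~\ref{RL} and run over all $2^6$ external labellings of the three auxiliary edges of the diagrams (LHS) and (RHS); for each labelling only a handful of completions contribute non-zero weights, and the equality of weighted sums reproduces the corresponding listed relation without ever writing the $4\times 4$ matrices explicitly.
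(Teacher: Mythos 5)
Your overall plan—viewing both sides of \eqref{YBEconcrete} as $4\times 4$ matrices over $\op{End}_\bP(\mV_N)[x_1,x_2]$ and comparing the sixteen entries—is the same as the paper's, which does this diagrammatically by fixing the auxiliary input pair $(\alpha,\beta)\in\{0,1\}^2$ and comparing the outputs on $V\otimes V\otimes\mV_N$. The explicit block forms of $M_1(x_1)$ and $M_2(x_2)$ you wrote down are correct.

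However, the "weight" step that you insert between the setup and the conclusion is wrong, and it is not a small slip: it would discard most of the sixteen relations. You claim that $M_1$ and $M_2$ \emph{preserve the $\{0,1\}$-weight on the auxiliary $V\otimes V$} and hence that $R_{12}M_1M_2$ and $M_2M_1R_{12}$ decompose along the splitting $\mC(v_0\otimes v_0)\oplus(\mC(v_0\otimes v_1)\oplus\mC(v_1\otimes v_0))\oplus\mC(v_1\otimes v_1)$. But the very matrices you just wrote down contradict this: $M_1(x_1)$ has $B_1$ in position $(1,3)$ and $C_1$ in position $(3,1)$, which cross the auxiliary weight blocks. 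What is true (Lemma~\ref{weightspaces}) is that the \emph{total} weight on $V\otimes V\otimes\mV_N$ is preserved; the auxiliary $V\otimes V$ weight alone is not, because $B$ and $C$ shift the weight of $\mV_N$ in compensation. If the matrices really were block-diagonal in the auxiliary weight, only $1+4+1=6$ of the sixteen entries would be non-zero and you would recover only six relations (the $AA$, $DD$ and the four weight-one ones), not all sixteen. The relations such as $B(x_1)A(x_2)=B(x_2)A(x_1)$ and $A(x_1)D(x_2)-A(x_2)D(x_1)=(x_1-x_2)B(x_2)C(x_1)$ come precisely from the off-block entries you are discarding. You should simply compare all sixteen entries of the two $4\times 4$ products you already wrote out; no weight truncation is available or needed. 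Your closing "alternative" via the diagrammatics of Proposition~\ref{RL} is in fact how the paper proceeds (fixing the two auxiliary input labels and reading off the output coefficients), though the relevant count of external auxiliary labellings is $2^4=16$, one per matrix entry, not $2^6$.
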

 A similar reformulation of the Yang-Baxter equations can be given for  $(R',L')$.
\begin{proof} We will illustrate how to obtain the claim, but omit the straight-forward calculations. It suffices to assume $N=3$. We have to consider all possible inputs in \eqref{PropYBA} and compare their outputs. Let us consider  the input $(0,1,\la_1,\la_2,\la_3)$, with $\la_i\in\{0,1\}$. 

\begin{equation}
\label{LHSRHS}
\begin{tikzpicture}[scale=0.8]
\draw (1,0) node{=};
\draw[very thick, rounded corners=8pt] (0,0.5) -- (-2,0.5) -- (-3,-0.5) node[anchor=east]{0};
\draw[very thick, rounded corners=8pt] (0,-0.5) -- (-2,-0.5) -- (-3,0.5) node[anchor=east]{1};
\draw[very thick] (-0.3,-1) -- (-0.3,1) node[anchor=south]{$v_{\la_3}$};
\draw[very thick] (-0.9,-1) -- (-0.9,1) node[anchor=south]{$v_{\la_2}$};
\draw[very thick] (-1.5,-1) -- (-1.5,1) node[anchor=south]{$v_{\la_1}$};
\end{tikzpicture}
\quad
\begin{tikzpicture}[scale=0.8]
\draw[very thick, rounded corners=8pt] (0,0.5) -- (-1,-0.5) -- (-3,-0.5) node[anchor=east]{0};
\draw[very thick, rounded corners=8pt] (0,-0.5) -- (-1,0.5) -- (-3,0.5) node[anchor=east]{1};
\draw[very thick] (-2.7,-1) -- (-2.7,1) node[anchor=south]{$v_{\la_1}$};
\draw[very thick] (-2.1,-1) -- (-2.1,1) node[anchor=south]{$v_{\la_2}$};
\draw[very thick] (-1.5,-1) -- (-1.5,1) node[anchor=south]{$v_{\la_3}$};
\end{tikzpicture}
\end{equation}

Then we obtain with  ${\bf v}=v_\la$ for the diagram on the right hand side of  \eqref{LHSRHS} 
\begin{eqnarray*}
&&
R_{12}(x_1,x_2)(v_0\otimes v_0\otimes A(x_1)B(x_2){\bf v}+v_0\otimes v_1\otimes A(x_1)D(x_2){\bf v}\\
&&+
v_1\otimes v_0\otimes C(x_1)B(x_2){\bf v}+v_1\otimes v_1\otimes C(x_1)D(x_2){\bf v})\\
&=&
v_0\otimes v_0\otimes A(x_1)B(x_2){\bf v}+v_0\otimes v_1\otimes C(x_1)B(x_2){\bf v}+v_1\otimes v_0\otimes A(x_1)D(x_2){\bf v}\\
&&+v_0\otimes v_1\otimes (x_1-x_2)A(x_1)D(x_2){\bf v}+v_1\otimes v_1\otimes C(x_1)D(x_2){\bf v}
\end{eqnarray*}
The output of the left hand side of  \eqref{LHSRHS}  is
\begin{eqnarray*}
&&(x_1-x_2)(v_0\otimes v_0\otimes B(x_2)A(x_1){\bf v}+v_0\otimes v_1\otimes D(x_2)A(x_1){\bf v}\\
&+&
v_1\otimes v_0\otimes B(x_2)C(x_1){\bf v}+v_1\otimes v_1\otimes D(x_2)C(x_1){\bf v})\\
&+&
v_0\otimes v_0\otimes A(x_2)B(x_1){\bf v}+
v_0\otimes v_1\otimes C(x_2)B(x_1){\bf v}\\
&+&v_1\otimes v_0\otimes A(x_2)D(x_1){\bf v}
+v_1\otimes v_1\otimes C(x_2)D(x_1){\bf v}
\end{eqnarray*}
Comparing the coefficients of the basis vectors we obtain
\begin{eqnarray*}
A(x_2)B(x_1)- A(x_1)B(x_2)&=&(x_2-x_1)B(x_2)A(x_1),\\
C(x_1)D(x_2)-C(x_2)D(x_1)&=&(x_1-x_2)D(x_2)C(x_1),\\
C(x_2)B(x_1)-C(x_1)B(x_2)&=&(x_1-x_2)(A(x_1)D(x_2)-D(x_2)A(x_1)),\\
A(x_1)D(x_2)-A(x_2)D(x_1)&=&(x_1-x_2)B(x_2)C(x_1).
\end{eqnarray*}
These arguments applied to all possible inputs give the asserted 16 relations.
\end{proof}

Note that in fact each of the 16 equations is a system of equations for the operators $\eqref{ABCD}$. Following  \cite{FRT} we use them to define abstract algebras:

\begin{definition} 
\label{DefYBA}
The {\it Yang-Baxter algebra} $\op{YB}$  (or $\op{YB}'$) is the ${\bP}$-algebra generated by $\hat{A}^{(i)}$, $\hat{B}^{(i)}$, $\hat{C}^{(i)}$, $\hat{D}^{(i)}$ (respectively by $\hat{A'}^{(i)}$, $\hat{B'}^{(i)}$, $\hat{C'}^{(i)}$, $\hat{D'}^{(i)}$), $i\in\mathbb{Z}_{\geq0}$ modulo the relations (see Proposition~\ref{PropYBA}) given by the Yang-Baxter equation \eqref{YBEconcrete} for $(R,L)$ respectively $(R',L')$ from \eqref{L1L2}.
\end{definition}

These algebras act on $\mV_N$, by sending the generators to the endomorphisms 
\begin{eqnarray}
\label{YBABCD}
A^{(i)}_N=A^{(i)}_N(t_1,t_2,\ldots, t_N),&&B^{(i)}_N=B^{(i)}_N(t_1,t_2,\ldots, t_N),\nonumber\\
C^{(i)}_N=C^{(i)}_N(t_1,t_2,\ldots, t_N),&&D^{(i)}_N=D^{(i)}_N(t_1,t_2,\ldots, t_N).
\end{eqnarray}
from \eqref{ABCD} for all $i\geq 0$. In the following we will often omit the lower index $N$. 

Note that the space $V^{\otimes{N}}$ has a natural decomposition $V^{\otimes{N}}= \bigoplus_{i=0}^N V_n$ into subspaces where $V_n$ is spanned by all standard basis vectors 
 $v_{\la}$
from \eqref{basisvec} such that $\la\in\Lambda_n$. Similarly, we have 
 $\mV_N=\bigoplus_{n=0}^N\mV_{N,n}$, the corresponding decomposition into free $\bP$-modules, which we call {\it weight spaces}. The action of $\op{YB}$ is compatible with the weight space decomposition in the following sense.

\begin{lemma}  
\label{weightspaces}  Consider the monodromy matrix for  $(R,L)$ and $(R',L')$ as in \eqref{L1L2}.
For every $i\geq 0$ and $0\leq n\leq N$ it holds
\begin{eqnarray*}
\quad A^{(i)}_N:\mV_{N,n}\rightarrow \mV_{N,n}, \quad B^{(i)}_N:\mV_{N,n}\rightarrow \mV_{N,n+1},\quad
C^{(i)}_N:\mV_{N,n}\rightarrow \mV_{N,n-1}, \quad D^{(i)}_N:\mV_{N,n}\rightarrow \mV_{N,n}.
\end{eqnarray*}
\end{lemma}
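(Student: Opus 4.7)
The plan is to argue diagrammatically, using the matrix-coefficient description from the previous proposition together with a local "ice rule" for the weights coming from \eqref{L1L2}.

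First, I would inspect the two Lax matrices in \eqref{L1L2} entry by entry, checking that a labelled crossing has nonzero Boltzmann weight only when the number of solid (i.e.\ label $1$) input edges equals the number of solid output edges. Concretely: the only potentially nonzero entries of $L$ and $L'$ are $(1,1)$, $(2,2)$, $(2,3)$, $(3,2)$, $(3,3)$, $(4,4)$ (with $L_{33}=0$ and $L'_{44}=0$), and comparing with \eqref{sixteen} these positions correspond precisely to those crossings in which $(\text{left})+(\text{top})=(\text{right})+(\text{bottom})$ as elements of $\{0,1\}$. Thus every labelled vertex of nonzero weight satisfies the local conservation law.

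Next I would propagate this local conservation through a $1$-row lattice of length $N$. At each internal vertical (respectively horizontal) edge the same label appears as output of one crossing and input of the next, so telescoping yields the global identity
\[
h_{\op{in}}+|\la_{\op{in}}|\;=\;h_{\op{out}}+|\la_{\op{out}}|,
\]
where $h_{\op{in}},h_{\op{out}}\in\{0,1\}$ are the labels on the two outer horizontal edges and $|\la|=\sum_i\la_i$ for the top and bottom label sequences. Hence the weight of a $1$-row diagram with fixed external labelling is zero unless this identity holds. By the previous proposition, the matrix coefficient $(O(v_{\la_{\op{in}}}),v_{\la_{\op{out}}})$ is a sum of such weights, so it vanishes unless the identity holds.

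Finally I would plug in the dictionary $A\leftrightarrow(0,0)$, $B\leftrightarrow(1,0)$, $C\leftrightarrow(0,1)$, $D\leftrightarrow(1,1)$: for $A$ and $D$ the identity forces $|\la_{\op{out}}|=|\la_{\op{in}}|$, for $B$ it forces $|\la_{\op{out}}|=|\la_{\op{in}}|+1$, and for $C$ it forces $|\la_{\op{out}}|=|\la_{\op{in}}|-1$. Since this holds for every coefficient of the expansion \eqref{ABCD} in $x$, the claimed weight-shift behaviour is inherited by each $A^{(i)}_N$, $B^{(i)}_N$, $C^{(i)}_N$, $D^{(i)}_N$. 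Both matrices $L$ and $L'$ are handled uniformly by this argument; the only step with any content is the initial case-check of the six nonzero entries, which is routine, so I do not expect a serious obstacle.
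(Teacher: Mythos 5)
Your proof is correct and uses essentially the same idea as the paper: establish the local weight-conservation (ice-rule) property of the Lax matrix and propagate it through the row. The paper states this by checking the $N=1$ Lax-matrix blocks explicitly and invoking the product form of the monodromy matrix, while you make the propagation explicit as a diagrammatic telescoping; the underlying argument is identical.
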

\begin{proof} We only argue for $(R,L)$, the other case is similar. For $N=1$ the matrix $M_1(x,t_1)$  is the Lax matrix, hence the generators of the algebra $\op{YB}_1$ are the operators given by the coefficients of the expansion in $x$ of the $2\times 2$ blocks of $L(x,t)$, explicitly 
\begin{eqnarray*}
A^{(0)}=\begin{pmatrix}
1&0\\
0&t
\end{pmatrix},\quad
A^{(1)}=\begin{pmatrix}
0&0\\
0&1
\end{pmatrix},
\quad
B^{(0)}=
\begin{pmatrix}
0&0\\
1&0
\end{pmatrix},
\quad
C^{(0)}=
\begin{pmatrix}
0&1\\
0&0
\end{pmatrix},
\quad
D^{(0)}=
\begin{pmatrix}
0&0\\
0&1
\end{pmatrix},
\end{eqnarray*}
which obviously have the asserted property. The definition of the monodromy matrix \eqref{mono0} implies immediately the claim for any $N$. 
\end{proof}

\subsection{Symmetric group action and $\bH$-action}
\label{symmetricgroup}
The symmetric group $S_N$  acts on $\mV_N=V\otimes\bP$, such that $w\in S_N$ sends $v_{\la}\otimes f$  to $w(v_{\la})\otimes w(f)=v_{\la'}\otimes w(f)$ with $\la'_j=\la_{w^{-1}(j)}$,  that is, $w$ simultaneously commutes the indices of the basis vector and the variables of the polynomials.\footnote{Some readers might prefer to define ${\bf s}_i=\sigma R(t_i,t_{i+1}) P_{i,i+1}$, where $P_{i,i+1}$ is the $\bP$-linear flip operator on $V^{\otimes N}$ and $\sigma$ is the permutation of the variables $t_i$ and $t_{i+1}$.}
Since our $R$-matrices \eqref{L1L2} satisfy  $s_1 R(t_2,t_1) s_1 R(t_2,t_1)=\op{id}_{\mV_2}$, the operators 
\begin{eqnarray}
\label{defsi}
{\bf s}_i&=&s_i R_{i,i+1}(t_{i+1},t_i) 
\end{eqnarray}
for $i\in [N-1]$ give rise to an action of the symmetric group $S_N$ on $\mV_N$. This action is not $\bP$-linear, but part of an action of the {\it semidirect product algebra} $\bH$ (or {\it skew group ring}) which is defined as follows, where we from now on write ${}^w\! f$ instead of $w(f)$ when $w\in S_N$ is applied to $f\in\bP$. 
\begin{definition}
\label{DefH}
The algebra $\bH$ is the vector space $\bH= \Bbb C[S_N]\otimes\bP$
with the multiplication such that  $\Bbb C[S_N]$ and ${\bP}$ are subalgebras and $fw=  {}^{w^{-1}}\!\!f w$ for $w\in S_N, f\in {\bP}$.
\end{definition}
Explicitly, $f\in \bP$ acts by multiplication with $f$, and the action of ${\bf s}_i$ is given as follows.
\begin{proposition} 
\label{Haction}
On the standard $\bP$-basis vectors $v_{\la}$ we have 
\begin{eqnarray}
\label{sym}
{\bf s}_i(v_{\la})
&=&
\begin{cases}
v_{\la}+(t_i-t_{i+1})v_{s_i\la},& \text{if}\,\la_i<\la_{i+1},\\
v_{\la},&\text{otherwise}.
\end{cases}
\end{eqnarray}
which then extends to the desired action of $\bH$ by $s_i(fv_{\la})=  {}^{s_i}\!f\; {\bf s}_i(v_{\la})$.
\end{proposition}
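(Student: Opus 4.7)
The plan is to compute $\mathbf{s}_i(v_\la) = s_i\bigl(R_{i,i+1}(t_{i+1},t_i)v_\la\bigr)$ directly by splitting into the four cases determined by $(\la_i,\la_{i+1})\in\{0,1\}^2$. Since $R_{i,i+1}(t_{i+1},t_i)$ is the identity on all tensor factors other than the $i$-th and $(i+1)$-th (and is $\bP$-linear), it suffices to read off its action on $v_{\la_i}\otimes v_{\la_{i+1}}$ from the matrix $R(x,y)$ in \eqref{L1L2} specialised at $x=t_{i+1}$, $y=t_i$.

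In the two diagonal cases $(\la_i,\la_{i+1})\in\{(0,0),(1,1)\}$ the matrix $R$ fixes $v_a\otimes v_a$, and $s_i$ then fixes $v_\la$, yielding $\mathbf{s}_i(v_\la)=v_\la$. In the case $(\la_i,\la_{i+1})=(1,0)$ the third column of $R$ gives $R(v_1\otimes v_0)=v_0\otimes v_1$, so $R_{i,i+1}(t_{i+1},t_i)v_\la=v_{s_i\la}$, and a second application of $s_i$ returns $v_\la$. In the interesting case $\la_i<\la_{i+1}$, i.e.\ $(\la_i,\la_{i+1})=(0,1)$, the second column of $R$ produces
\[
R_{i,i+1}(t_{i+1},t_i)\,v_\la \;=\; (t_{i+1}-t_i)\,v_\la + v_{s_i\la}.
\]
Applying $s_i$ now simultaneously swaps the tensor positions $i\leftrightarrow i+1$ and the variables $t_i\leftrightarrow t_{i+1}$, which turns the right hand side into $(t_i-t_{i+1})v_{s_i\la}+v_\la$, matching \eqref{sym}.

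For the extension to an $\bH$-action, the key observation is that $R_{i,i+1}(t_{i+1},t_i)$ has polynomial entries and is $\bP$-linear, whereas $s_i$ acts on $\mV_N$ as an algebra automorphism twisted by permutation of variables, so that $s_i(fw)={}^{s_i}\!f\cdot s_i(w)$. Composing these two gives
\[
\mathbf{s}_i(fv_\la)\;=\;s_i\bigl(f\cdot R_{i,i+1}(t_{i+1},t_i)v_\la\bigr)\;=\;{}^{s_i}\!f\cdot \mathbf{s}_i(v_\la),
\]
which is precisely the twisted commutation relation defining $\bH$. Together with the identity $\mathbf{s}_i^2=\op{id}$ already recorded before Definition~\ref{DefH}, this shows that the assignment $f\mapsto(\text{multiplication by }f)$ and $s_i\mapsto\mathbf{s}_i$ is compatible with the defining relations of $\bH$, so that formula \eqref{sym} indeed determines an action of $\bH$ on $\mV_N$.

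The computation itself is routine bookkeeping; the only subtle point is keeping the order of operations straight, since $R$ acts first with the asymmetric argument $(t_{i+1},t_i)$, after which $s_i$ flips these back. This is exactly the source of the sign $t_i-t_{i+1}$ (rather than $t_{i+1}-t_i$) appearing in the non-trivial case and is the feature that ultimately makes $\mathbf{s}_i$ squared equal to the identity, consistent with the Yang–Baxter/unitarity identity noted in the paragraph preceding Definition~\ref{DefH}.
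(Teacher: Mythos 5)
Your proof is correct, and since the paper itself provides no explicit proof of Proposition~\ref{Haction} (it is stated as a definitional consequence of \eqref{defsi} and the shape of $R$ in \eqref{L1L2}), your computation is precisely the unwinding the authors intended the reader to perform. The case analysis on the four possible values of $(\la_i,\la_{i+1})$, reading off the columns of $R(t_{i+1},t_i)$, and then applying the variable-permuting $s_i$ --- thereby flipping $t_{i+1}-t_i$ into $t_i-t_{i+1}$ --- is exactly right, and the observation that $\bP$-linearity of $R_{i,i+1}$ combined with the twisted-automorphism property of $s_i$ yields the $\bH$-compatibility $\mathbf{s}_i(fv_\la)={}^{s_i}\!f\,\mathbf{s}_i(v_\la)$ cleanly covers the extension claim.
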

\begin{remark}
Note that the result of $\op{id}-{\bf s}_i$ is divisible by $t_i-t_{i+1}$, hence the quotient is well-defined. It is the well-known Demazure operator $\Delta_i$, \cite{Demazure}, a standard tool, see e.g. \cite{Kumar}, \cite{KT03},  in cohomology theory, see also Section~\ref{eqSchubert}.
\end{remark}

Our Yang-Baxter algebras have the important property, pointed out already in \cite{GoKo}, that they commute with this $S_N$-action. More precisely the following holds.
\begin{proposition}
\label{Propcommute}
Let $N\geq 2$ and $i\in [N-1]$. Then  there is an equality  
\begin{eqnarray*}
(1\otimes {\bf s}_i)M(x,t_1,t_2,\ldots, t_N)=M(x,t_1,t_2,\ldots, t_N)(1\otimes {\bf s}_i),
\end{eqnarray*}
of endomorphisms of $V[x]\otimes \mV_N$, with  monodromy matrix $M$ for $L$ or $L'$ from \eqref{L1L2}.
\end{proposition}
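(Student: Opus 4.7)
The plan is a three-step diagrammatic argument based on the locality of ${\bf s}_i$, a bookkeeping of its permutation part, and the Yang-Baxter equation.

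First I would exploit that ${\bf s}_i = s_i R_{i, i+1}(t_{i+1}, t_i)$ involves only the tensor factors $i, i+1$ and the variables $t_i, t_{i+1}$. Since the Lax factors $L_{0, j}(x, t_j)$ for $j \notin \{i, i+1\}$ commute with $(1 \otimes {\bf s}_i)$ trivially (they act on disjoint tensor factors and involve disjoint sets of variables), the claim reduces to the local statement
\begin{equation*}
(1 \otimes {\bf s}_i)\, L_{0, i+1}(x, t_{i+1}) L_{0, i}(x, t_i) = L_{0, i+1}(x, t_{i+1}) L_{0, i}(x, t_i)\, (1 \otimes {\bf s}_i).
\end{equation*}

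Next I would expand ${\bf s}_i = s_i R_{i, i+1}(t_{i+1}, t_i)$. Because conjugation by $s_i$ simultaneously swaps the tensor positions $i \leftrightarrow i+1$ and the variables $t_i \leftrightarrow t_{i+1}$, we have $s_i L_{0, i}(x, t_i) s_i^{-1} = L_{0, i+1}(x, t_{i+1})$ and vice versa, so $s_i$ reverses the order of the two Lax factors. Sliding $s_i$ through both sides of the local identity and cancelling it, we are reduced to the ``quantum-space'' RLL equation
\begin{equation*}
R_{i, i+1}(t_{i+1}, t_i)\, L_{0, i+1}(x, t_{i+1}) L_{0, i}(x, t_i) = L_{0, i}(x, t_i) L_{0, i+1}(x, t_{i+1})\, R_{i, i+1}(t_{i+1}, t_i),
\end{equation*}
in which the $R$-matrix now acts on two \emph{quantum} lines rather than on two auxiliary lines as in \eqref{RLL}.

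Finally, I would deduce this dual RLL identity from Proposition~\ref{yangbaxter}. For $(R, L)$ one reads off from \eqref{L1L2} that $L(x, t) = R(x, -t)$, so \eqref{RLL} turns (after setting $x_3 = -t$) into the braided Yang-Baxter equation $R_{23}(x_2, x_3) R_{13}(x_1, x_3) R_{12}(x_1, x_2) = R_{12}(x_1, x_2) R_{13}(x_1, x_3) R_{23}(x_2, x_3)$ for $R$ alone. Relabelling the three lines as $0, i, i+1$, specialising the parameters to $(x, -t_i, -t_{i+1})$, and using that $R(a, b)$ depends on $(a, b)$ only through $a - b$ (so $R_{i, i+1}(-t_i, -t_{i+1}) = R_{i, i+1}(t_{i+1}, t_i)$), the required identity drops out. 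The $(R', L')$ case is analogous, using Remark~\ref{physics} together with the sign substitution $t \leftrightarrow -t$ invoked in the proof of Proposition~\ref{yangbaxter}. The main obstacle is precisely this passage between the ``auxiliary--auxiliary'' and ``quantum--quantum'' formulations of YBE and the careful matching of the arguments of $R$ and $L$, especially for $(R', L')$, where the two matrices are not related by such a clean substitution.
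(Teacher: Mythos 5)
Your reduction to the local case, the cancellation of $s_i$ via the covariance $(1\otimes s_i)L_{0,i}(x,t_i)(1\otimes s_i)^{-1}=L_{0,i+1}(x,t_{i+1})$, and the resulting ``quantum-space'' identity
\begin{equation*}
R_{i,i+1}(t_{i+1},t_i)\,L_{0,i+1}(x,t_{i+1})\,L_{0,i}(x,t_i) \;=\; L_{0,i}(x,t_i)\,L_{0,i+1}(x,t_{i+1})\,R_{i,i+1}(t_{i+1},t_i)
\end{equation*}
are exactly what the paper does. For the pair $(R,L)$, your derivation of this identity via $L(x,t)=R(x,-t)$ and the pure $R$-braid relation extracted from \eqref{RLL} is a correct and equivalent rephrasing of the paper's substitution $x_2=-t_1$, $t=t_2$ combined with $L(-x,y)=R(y,x)$.

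For the pair $(R',L')$, however, you have a genuine gap, which you half-acknowledge. The reduction to
\begin{equation*}
R'_{i,i+1}(t_{i+1},t_i)\,L'_{0,i+1}(x,t_{i+1})\,L'_{0,i}(x,t_i) \;=\; L'_{0,i}(x,t_i)\,L'_{0,i+1}(x,t_{i+1})\,R'_{i,i+1}(t_{i+1},t_i)
\end{equation*}
still goes through verbatim, but this identity cannot be obtained from \eqref{RLL} by your route: there is no reparametrization sending $L'$ to $R'$ (they do not even vanish in the same matrix entries), so \eqref{RLL} for $(R',L')$ does not collapse to a pure braid relation for a single $R$-matrix that could then be specialised and relabelled. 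Your appeal to Remark~\ref{physics} together with $t\mapsto -t$ does not close the gap either: the transformation \eqref{secondtype} carries the Boltzmann weights of $L'(x,t)$ to those of $L(x,-t)$, but it does \emph{not} carry $R'$ to $R$, so it cannot translate the required identity for $(R',L')$ into the one already established for $(R,L)$. The paper settles this case by directly verifying $R'_{12}(t_2,t_1)L'_{02}(x,t_2)L'_{01}(x,t_1)=L'_{01}(x,t_1)L'_{02}(x,t_2)R'_{12}(t_2,t_1)$ as a short explicit matrix computation on $V^{\otimes 3}$; you should do the same rather than assert analogy.
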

\begin{proof} 
The statement is local and involves only three tensor factors, namely the leftmost factor at position $0$ and the factors $i$ and $i+1$ in  $\mV_N$. We may assume $i=1$. 
First consider the Yang-Baxter equation 
\begin{eqnarray*}
L_{12}(x_2,t)L_{02}(x_1,t)R_{01}(x_1,x_2)&=&R_{01}(x_1,x_2)L_{02}(x_1,t)L_{12}(x_2,t)
\end{eqnarray*}
Substituting $x_2=-t_1$, $t=t_2$ we obtain
\begin{eqnarray*}
R_{12}(t_2,t_1)L_{02}(x_1,t_2)L_{01}(x_1,t_1)&=&L_{01}(x_1,t_1)L_{02}(x_1,t_2)R_{12}(t_2,t_1)
\end{eqnarray*}
using the special shape of $L$ and $R$, e.g. $L(-x,y)=R(y,x)$.
Thus
\begin{eqnarray*}
(1\otimes {\bf s}_1)M(x,t_1)
&=&(1\otimes {\bf s}_1)L_{02}(x,t_2)L_{01}(x,t_1)\\
&\stackrel{\eqref{defsi}}{=}&(1\otimes {s}_1)R_{12}(t_2,t_1)L_{02}(x,t_2)L_{01}(x,t_1)\\
&=&(1\otimes {s}_1)L_{01}(x,t_1)L_{02}(x,t_2)R_{12}(t_2,t_1)\\
&=&L_{02}(x,t_2)L_{01}(x,t_1)(1\otimes {s}_1)R_{12}(t_2,t_1)\\
&=&M(x,t_1)(1\otimes {\bf s}_1).
\end{eqnarray*}
For the pair $(R',L')$ the following relation can be verified by a direct calculation
\begin{eqnarray*}
R'_{12}(t_2,t_1)L'_{02}(x_1,t_2)L'_{01}(x_1,t_1)&=&L'_{01}(x_1,t_1)L'_{02}(x_1,t_2)R'_{12}(t_2,t_1)
\end{eqnarray*}
Since the symmetric group action is defined as ${\bf s}_i=s_i R'_{i,i+1}(t_{i+1},t_i)$, 
the proof that it commutes with the monodromy matrix $M'$ is the same as above.
\end{proof}

\subsection{Bethe vectors}
The Yang-Baxter algebra contains several interesting commutative subalgebras. In the theory of quantum integrable systems there is a procedure, called the {\it (algebraic) Bethe Ansatz} or the {\it quantum inverse scattering method}, \cite{quantumscatt}, \cite{BIK}, for finding a common eigenbasis. In general this is a rather sophisticated procedure, see e.g. \cite{Tak} for a general introduction, but for the purposes of this paper it suffices to consider the simplest version of it.
 
 The relation $A(x_1)A(x_2)=A(x_2)A(x_1)$ in the Yang-Baxter algebra means that the coefficients of the operator $A(x)$ commute, and hence can be diagonalised simultaneously. The key relation  for doing this is
\begin{eqnarray}
\label{CA}
C(x_2)A(x_1)-C(x_1)A(x_2)&=& (x_1-x_2)A(x_1)C(x_2)
\end{eqnarray}

\begin{lemma} 
\label{lemmaAC}
Let $y = (x,y_1,\ldots,y_k)$ be a $k+1$-tuple of pairwise distinct commuting variables. Then we have the following formal identities:
$$ 
A(x)C(y_k)\cdots C(y_1) = \frac{C(y_k)\cdots C(y_1)A(x)}{(x-y_k)\cdots (x-y_1)}-\sum_{i=1}^k\frac{C(y_k)\cdots \overset{i}{C(x)}\cdots C(y_1)A(y_i)}{(x -y_i)\prod_{i\not=j}(y_i-y_j)},
$$
where the superscript $i$ indicates the $i$th factor.
\end{lemma}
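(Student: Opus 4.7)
My plan is to proceed by induction on $k$, using the single commutation relation \eqref{CA} together with the commutativity $C(y_i)C(y_j)=C(y_j)C(y_i)$ listed in Proposition~\ref{PropYBA}. For the base case $k=1$, the relation \eqref{CA} with $x_1=x$ and $x_2=y_1$ rearranges directly to $A(x)C(y_1) = (C(y_1)A(x)-C(x)A(y_1))/(x-y_1)$, which is the claim with the convention that the empty product equals $1$.

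For the inductive step I first apply \eqref{CA} with $x_1=x$, $x_2=y_k$ to pull $A(x)$ past the leftmost factor $C(y_k)$, obtaining
$$A(x)C(y_k)\cdots C(y_1) \;=\; \frac{C(y_k)}{x-y_k}\,A(x)C(y_{k-1})\cdots C(y_1) \;-\; \frac{C(x)}{x-y_k}\,A(y_k)C(y_{k-1})\cdots C(y_1).$$
I then apply the inductive hypothesis to both $A(x)C(y_{k-1})\cdots C(y_1)$ and $A(y_k)C(y_{k-1})\cdots C(y_1)$, expand, and group the resulting summands by which variable $A(\cdot)$ is evaluated at. The unique summand carrying $A(x)$ comes only from the first piece and collapses immediately to $C(y_k)\cdots C(y_1)A(x)/\prod_{j=1}^{k}(x-y_j)$, matching the first term of the formula. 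The unique summand carrying $A(y_k)$ comes only from the second piece and, using $C$-$C$ commutativity to move $C(x)$ into the $k$-th slot, reproduces the $i=k$ term on the right-hand side.

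For each $i<k$ there are two contributions carrying $A(y_i)$: one from applying the inductive hypothesis to $A(x)C(y_{k-1})\cdots C(y_1)$ and one from applying it to $A(y_k)C(y_{k-1})\cdots C(y_1)$. Commutativity of the $C$'s shows that both have the same operator factor $C(y_k)\cdots\overset{i}{C(x)}\cdots C(y_1)A(y_i)$, while their scalar coefficients combine via the elementary partial fraction identity
$$\frac{1}{x-y_k}\left(\frac{1}{y_k-y_i}-\frac{1}{x-y_i}\right) \;=\; \frac{1}{(x-y_i)(y_k-y_i)}.$$
Rewriting $y_k-y_i = -(y_i-y_k)$ absorbs the missing factor into $\prod_{j\neq i,\, j<k}(y_i-y_j)$ to produce $\prod_{j\neq i}(y_i-y_j)$ with the correct overall sign, yielding precisely the $i$-th summand of the claimed formula.

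The main obstacle I anticipate is bookkeeping: one must verify carefully that the two a priori different operator strings arising from the two branches of the induction truly coincide after reordering the $C(y_j)$'s, and that the partial-fraction combination above is exactly what promotes the denominator from the $(k{-}1)$-case to the $k$-case. Beyond these checks, the whole argument is a purely formal manipulation with rational functions in $x$ and $y_1,\ldots,y_k$, so no new ingredient beyond \eqref{CA} and the commutativity of the $C(y_j)$'s is required.
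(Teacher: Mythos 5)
Your proof is correct, and while it is also an induction on $k$ using \eqref{CA} and $C$-$C$ commutativity, it is organized differently from the paper's argument. The paper's induction appends $C(y_1)$ on the \emph{right}: it assumes the formula for the string $C(y_k)\cdots C(y_2)$, multiplies by $C(y_1)$, and then must move $A(\cdot)$ past $C(y_1)$ in every term via \eqref{CA}, producing a single leftover $A(y_1)$-block whose coefficients must satisfy the $k$-term rational identity
$$\frac{1}{(x-y_k)\cdots (x-y_1)}-\sum_{i=2}^{k}\frac{1}{(x -y_i)(y_i-y_1)\prod_{i\not=j}(y_i-y_j)}=\frac{1}{(x -y_1)\prod_{j\not=1}(y_1-y_j)}.$$
You instead peel off the \emph{leftmost} $C(y_k)$ once with \eqref{CA} and then invoke the inductive hypothesis twice, once for $A(x)$ and once for $A(y_k)$ (which is legitimate because all the arguments remain pairwise distinct). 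Matching coefficients of each $A(y_i)$ then reduces to the two-term partial-fraction identity you state, so your route trades $k$ applications of \eqref{CA} plus a $k$-term identity for one application of \eqref{CA} plus two inductive calls plus a two-term identity. Both versions are sound; yours shifts the combinatorial burden into the double use of the inductive hypothesis and leaves a slightly more elementary rational-function check at the end.
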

\begin{proof} We do induction using \eqref{CA} with starting point \eqref{CA} itself. Suppose 
$$A(x)C(y_{k})\cdots C(y_2) = \frac{C(y_{k})\cdots C(y_2)A(x)}{(x-y_{k})\cdots (x-y_2)}-\sum_{i=2}^{k}\frac{C(y_{k})\cdots \overset{i}{C(x)}\cdots C(y_2)A(y_i)}{(x -y_i)\prod_{i\not=j}(y_i-y_j)}$$
holds. Then $A(x)C(y_k)\cdots C(y_1)$ equals 
\begin{eqnarray*}
&&\frac{C(y_k)\cdots C(x_2)A(x)C(y_1)}{(x-y_k)\cdots (x-y_2)}-\sum_{i=2}^{k}\frac{C(y_{k})\cdots \overset{i}{C(x)}\cdots C(y_2)A(y_i)C(y_1)}{(x -y_i)\prod_{i\not=j}(y_i-y_j)}\\
&=&\frac{C(y_k)\cdots C(y_2)(C(y_1)A(x)-C(x)A(y_1))}{(x-y_k)\cdots (x-y_1)}\\
&&-\sum_{i=2}^{k}\frac{C(y_{k})\cdots \overset{i}{C(x)}\cdots C(y_2)(C(y_1)A(y_i)-C(y_i)A(y_1))}{(x -y_i)(y_i-y_1)\prod_{i\not=j}(y_i-y_j)}
\end{eqnarray*}
Hence the proof will be finished if we show that 
\begin{equation}
\label{firstInd}
\frac{C(y_k)\cdots C(y_2)C(y_1)A(x)}{(x-y_k)\cdots (x-y_1)}-\sum_{i=2}^{k}\frac{C(y_{k})\cdots \overset{i}{C(x)}\cdots C(y_2)C(y_i)A(y_1)}{(x -y_i)(y_i-y_1)\prod_{i\not=j}(y_i-y_j)}=\frac{C(y_{k})\cdots C(y_2)C(x)A(y_1)}{(x -y_1)\prod_{j\not=1}(y_1-y_j)}.
\end{equation}
Since the operators $C(x)$ and the $C(y_i)$'s pairwise commute, the numerators on both sides coincide. Therefore the claim is reduced to the equality
\small
$$\frac{1}{(x-y_k)\cdots (x-y_1)}-\sum_{i=2}^{k}\frac{1}{(x -y_i)(y_i-y_1)\prod_{i\not=j}(y_i-y_j)}=\frac{1}{(x -y_1)\prod_{j\not=1}(y_1-y_j)}$$
\normalsize
which can easily be verified.
\end{proof}
The Bethe Ansatz is a method of finding eigenvectors  for the $A^{(i)}_N$, $i\geq 0$ from \eqref{YBABCD} of the form 
\begin{eqnarray}
\label{BetheDef}
\bb(\eta)=C(y_k)\cdots  C(y_1)(v_1\otimes \cdots \otimes v_1)_{|_{y_i=\eta_i}}\in \mV_{N,n}
\end{eqnarray}
depending on tuples $\eta=(\eta_1,\ldots\eta_k)\in\bP^k$. The parameters are solution to a system of algebraic equations called the {\it Bethe Ansatz equations} which are hard to compute in general.  However in our case the solutions are easy to find using Lemma~\ref{lemmaAC} and the identity 
\begin{equation}
\label{Bethecalc}
A(x)v_{(1,1,\dots,1)}=\prod_{i=1}^N(x+t_i)v_{(1,1,\dots,1)}.
\end{equation}
Note that permuting the components of $\eta$ in \eqref{BetheDef} provides the same vector, since the operators $C(y_{i})$ pairwise commute. 

\begin{proposition} For $\eta=(-t_{i_1},\ldots ,-t_{i_k})$ where $I_0=\{i_j\mid 1\leq j\leq k\}\subseteq [N]$ with distinct elements, 
\begin{eqnarray}
\label{Bethevector}
\bb(\eta)&=&C(-t_{i_1})\cdots C(-t_{i_k})v_{(1,1,\dots,1)}\in\mV_N
\end{eqnarray}
is a simultaneous  eigenvector for the $A^{(i)}_N$'s with eigenvalues given by  
\begin{eqnarray}
\label{BetheEV}
A(x)\bb(\eta)&=&\displaystyle{\prod_{j\notin \{i_1,\cdots ,i_k\}}(x+t_j)}\bb(\eta).
\end{eqnarray} 
The $\bb(\eta)$ for ${i_1}<i_2<\cdots <{i_k}$ form a basis of $\mV_{N,n}$, the \emph{basis of Bethe vectors}.
\end{proposition}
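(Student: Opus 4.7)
For the eigenvector identity, I would specialize Lemma~\ref{lemmaAC} at $y_j=-t_{i_j}$ and evaluate both sides on $v_{(1,\ldots,1)}$. The principal term
$$\frac{C(-t_{i_k})\cdots C(-t_{i_1})\,A(x)\,v_{(1,\ldots,1)}}{\prod_{j=1}^k(x+t_{i_j})}$$
reduces, using \eqref{Bethecalc} and the definition \eqref{Bethevector}, to $\prod_{j\notin I_0}(x+t_j)\,\bb(\eta)$. Each of the remaining $k$ summands in Lemma~\ref{lemmaAC} ends in the factor $A(-t_{i_j})v_{(1,\ldots,1)}=\prod_\ell(-t_{i_j}+t_\ell)v_{(1,\ldots,1)}$, which vanishes because of the $\ell=i_j$ term. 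Equating coefficients of $x^i$ then yields \eqref{BetheEV} for each $A^{(i)}_N$.

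For the basis claim, Lemma~\ref{weightspaces} shows that each $C$ lowers weight by one, so $\bb(\eta)\in\mV_{N,N-k}$; hence in $\mV_{N,n}$ we must take $k=N-n$, producing exactly $\binom{N}{N-n}=\binom{N}{n}=\op{rank}_\bP\mV_{N,n}$ Bethe vectors. For linear independence over $\op{Frac}(\bP)$ I would exploit that the $A^{(i)}_N$ pairwise commute by Proposition~\ref{PropYBA}, so for each $I^*\subseteq[N]$ of size $k$ the operator $Q_{I^*}:=\prod_{j\in I^*}A(-t_j)$ is well-defined. By \eqref{BetheEV} it acts on $\bb(\eta_{I_0})$ by the scalar $\prod_{j\in I^*}\prod_{\ell\notin I_0}(-t_j+t_\ell)$. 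If $I_0\neq I^*$ are both of size $k$, any $j\in I^*\setminus I_0$ contributes the vanishing factor $(-t_j+t_j)$, so $Q_{I^*}$ kills $\bb(\eta_{I_0})$; for $I_0=I^*$ the scalar $\prod_{j\in I^*,\,\ell\notin I^*}(t_\ell-t_j)$ is a non-zero element of $\bP$. Applying $Q_{I^*}$ to any hypothetical relation $\sum c_{I_0}\bb(\eta_{I_0})=0$ therefore forces $c_{I^*}\bb(\eta_{I^*})=0$ for each $I^*$, reducing the basis statement to non-vanishing of each $\bb(\eta_{I^*})$.

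The main obstacle is therefore showing $\bb(\eta_{I^*})\neq0$. I would attack this either by induction on $k$, the case $k=0$ being trivial, or by specializing the $t_i$ to pairwise distinct complex numbers and verifying the claim there; by freeness of $\mV_{N,n}$ over $\bP$ this generic non-vanishing lifts. A diagrammatic route through Section~\ref{diagrams} is also available: identify a distinguished output standard basis vector in the expansion of $C(-t_{i_1})\cdots C(-t_{i_k})v_{(1,\ldots,1)}$ whose Boltzmann weight is a non-zero product of differences $t_a-t_b$ and argue it cannot be cancelled by any other diagram. I expect this non-vanishing step, rather than the eigenvalue computation or the linear-algebraic separation, to contain the real technical content.
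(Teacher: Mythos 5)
Your derivation of the eigenvalue identity \eqref{BetheEV} is exactly the paper's: set $y_j=-t_{i_j}$ in Lemma~\ref{lemmaAC}, apply to $v_{(1,\ldots,1)}$, and observe via \eqref{Bethecalc} that every residue term in the expansion vanishes because $A(-t_{i_j})v_{(1,\ldots,1)}$ carries the factor $(t_{i_j}-t_{i_j})=0$, leaving only the principal term.

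For the basis claim the paper's displayed proof is silent; you supply a legitimate argument. The count $k=N-n$ from Lemma~\ref{weightspaces} is right, and your separation argument with the commuting operators $Q_{I^*}=\prod_{j\in I^*}A(-t_j)$ (a standard resolvent trick exploiting that the eigenvalue polynomials $\prod_{j\notin I_0}(x+t_j)$ are pairwise distinct) correctly reduces the claim to non-vanishing of each $\bb(\eta)$ over $\op{Frac}(\bP)$, which is the natural setting for this basis. You are also right that the non-vanishing is where the residual work lies; the paper handles it two statements later, via Lemma~\ref{hvector} (for $\la=(0^k,1^n)$ only one labelled lattice diagram contributes and its weight is $1$, so $\bb(\eta)=v_\la\neq 0$) together with Proposition~\ref{SNonBethe} (the $S_N$-action of \eqref{defsi} permutes the Bethe vectors, so all are nonzero). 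That is precisely the diagrammatic route you sketch, so your plan is sound and consistent with how the paper ultimately closes the gap.
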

\begin{proof}
Indeed setting $y_j=-t_{i_j}$ for $j=1,\cdots ,k$ in the statement of Lemma~\ref{lemmaAC} and applying the operators to the vector $v_{(1,1,\dots,1)}$, we see that on the RHS only the first summand survives due to \eqref{Bethecalc}. Then \eqref{BetheEV} follows.
\end{proof}

\begin{example} Let $k=1$. Then  a typical diagram representing a matrix element of the operator $C(x)$ acting on the vector $v_{(1,\ldots ,1)}$ looks as follows

\[
\begin{tikzpicture}
\draw[ultra thick, dotted] (0,0) -- (2.5,0); 
\draw[ultra thick] (2.5,0) -- (6.5,0);
\draw[ultra thick] (2.5,0) -- (2.5,0.5);
\draw[ultra thick, dotted] (2.5,0) -- (2.5,-0.5);
\foreach \x in {0.5,1,...,2} \draw[ultra thick] (\x,-0.5) -- (\x,0.5);
\foreach \x in {3,3.5,...,6} \draw[ultra thick] (\x,-0.5) -- (\x,0.5);
\end{tikzpicture}
\]

\noindent
and more precisely, we obtain the formula
$$C(x)v_{(1,\ldots,1)}=\sum_{i=1}^N \prod_{j=1}^{i-1}(x+t_j)v_{\la(i)}\quad \text{where}\quad {\la(i)}=(1,\ldots,1,\overset{i}{0},1,\ldots,1),$$
and via evaluation at  $x=-t_{i_1}$ for $1\leq i_1\leq N$,  the corresponding Bethe vector 
$$\bb(\eta)=\bb((i_1))=\sum_{i=1}^N\prod_{j=1}^{i-1}(t_j-t_{i_1})v_{\la(i)}$$
expressed in the standard basis. In particular $\bb(0,1,\ldots,1)=v_{(0,1,\ldots,1)}$.
\end{example}

\begin{lemma}\label{hvector}
For $\la$ of the form $(0,\ldots,0,1,\ldots,1)$ the corresponding Bethe vector equals the standard basis vector, $\bb_{(0,\ldots,0,1,\ldots,1)}=v_{(0,\ldots,0,1,\ldots,1)}$. 
\end{lemma}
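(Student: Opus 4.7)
The plan is to prove the statement by induction on the number $k$ of zeros in $\la$. The base case $k=0$ says $\bb_{(1,\ldots,1)}=v_{(1,\ldots,1)}$, which is immediate from \eqref{Bethevector} as an empty product of $C$'s. For general $k$, the corresponding index set is $I_0=\{1,2,\ldots,k\}$, so by \eqref{Bethevector} together with the commutativity of the $C(x_i)$'s from Proposition~\ref{PropYBA} we may write
$$\bb_\la\;=\;C(-t_k)C(-t_{k-1})\cdots C(-t_1)\,v_{(1,\ldots,1)}.$$
The induction then reduces to the following single claim: if $\mu=(\underbrace{0,\ldots,0}_{m-1},1,\ldots,1)$, then $C(-t_m)v_\mu=v_{(\underbrace{0,\ldots,0}_{m},1,\ldots,1)}$.

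To prove this claim I would first derive the closed-form expansion
$$C(x)\,v_\mu\;=\;\sum_{i=m}^{N}\prod_{j=m}^{i-1}(x+t_j)\,v_{\mu^{(i)}},$$
where $\mu^{(i)}$ is obtained from $\mu$ by flipping its $i$-th entry (necessarily from $1$ to $0$). This is a direct diagrammatic computation of the matrix coefficients of $C$. Since the outer horizontal labels of $C$ are $0$ on the left and $1$ on the right, any contributing $1$-row lattice diagram has a single column $i$ where the horizontal label transitions from $0$ to $1$. Reading off the Boltzmann weights of $L(x,t)$ from \eqref{L1L2} one checks that the transition crossing forces vertical input $1$ and output $0$ (so $\mu_i=1$, hence $i\geq m$) and contributes weight $1$; crossings to the left of the transition keep the vertical label fixed and contribute $1$ if it is $0$ and $x+t_j$ if it is $1$; crossings to the right of the transition are forced to keep the vertical label equal to $1$ throughout (any $0$ there gives weight $0$) and contribute $1$. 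Multiplying these factors yields the displayed formula.

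Substituting $x=-t_m$ now does all the work: for every $i>m$ the product $\prod_{j=m}^{i-1}(-t_m+t_j)$ contains the vanishing factor $-t_m+t_m=0$, while the $i=m$ term has empty product $1$ and $\mu^{(m)}=(\underbrace{0,\ldots,0}_{m},1,\ldots,1)$. Hence $C(-t_m)v_\mu=v_{(\underbrace{0,\ldots,0}_{m},1,\ldots,1)}$, and iterating from $m=1$ up to $m=k$ gives the lemma. The only step that requires genuine thought is the derivation of the closed form for $C(x)v_\mu$; everything else is evaluation at $x=-t_m$ and cancellation, and no real obstacle is anticipated.
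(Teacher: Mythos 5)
Your proof is correct and is in essence the paper's own diagrammatic argument, just reorganised as an induction: the paper exhibits the full $k$-row staircase diagram at once and observes it is the unique non-vanishing configuration with total weight $1$, while you isolate the single-row computation $C(x)v_{(0^{m-1},1^{n+1})}=\sum_{i\ge m}\prod_{j=m}^{i-1}(x+t_j)\,v_{\mu^{(i)}}$ (a direct generalisation of the example $C(x)v_{(1,\ldots,1)}$ stated just before the lemma) and iterate the specialisation $x=-t_m$. Same mechanism, cleaner bookkeeping.
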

\begin{proof}
The only diagram contributing to the Bethe vector \eqref{Bethevector} indexed by $(0,\ldots,0,1,\ldots,1)$ is

\[
\begin{tikzpicture}
\draw[ultra thick] (0.5,0) -- (4,0);
\draw[ultra thick] (1,-0.5) -- (4,-0.5);
\draw[ultra thick] (1.5,-1) -- (4,-1);
\draw[ultra thick, dotted] (0,0) -- (0.5,0);
\draw[ultra thick,dotted] (0,-0.5) -- (1,-0.5);
\draw[ultra thick,dotted] (0,-1) -- (1.5,-1);
\draw[ultra thick] (0.5,0) -- (0.5,0.5);
\draw[ultra thick,dotted] (0.5,-1.5) -- (0.5,0.5);
\draw[ultra thick] (1,-0.5) -- (1,0.5);
\draw[ultra thick,dotted] (1,-1.5) -- (1,-0.5);
\draw[ultra thick] (1.5,-1) -- (1.5,0.5);
\draw[ultra thick,dotted] (1.5,-1.5) -- (1.5,-1);
\foreach \x in {2,2.5,...,3.5} \draw[ultra thick] (\x,-1.5) -- (\x,0.5);
\end{tikzpicture}
\]

Since the product of the weights in this diagram is equal to $1$, we are done.
\end{proof}

 We need the localisation of $\bP$ and $\mV_N$ with respect to the multiplicative set $\op{loc}$ given by products of the form $(t_a-t_b)$, where $a,b\in[N]$,  
 \begin{eqnarray}
 \label{loc}
\bP^{\op{loc}}=\bP[f^{-1}\mid f\in \op{loc}]&\text{and}&\mV_N^{\op{loc}}=V^{\otimes N}[t_1,t_2,\ldots, t_N][f^{-1}\mid f\in \op{loc}],
\end{eqnarray}
 to make sense of this normalisation (giving rise to normed Bethe vectors for an appropriate form).
\begin{definition}
\label{normalizedBethe}
Given  ${\bf \la}\in\Lambda$ let $I_0=\{i\mid \la_i=0\}\subset [N]$. Then 
the associated (normalised) Bethe vector $\bb_{\la}\in\mV_N^{\op{loc}}$ is the vector $\bb(\eta)$ from \eqref{Bethevector}  normalised as 
\begin{eqnarray*}
\bb_\la&=&
\prod_{b\in I_0,\,a\notin I_0}(t_a-t_b)^{-1}\bb(\eta)\quad\in \mV_N^{\op{loc}}.
\end{eqnarray*}
The \emph{(normalised) Bethe basis} of $\mV_N^{\op{loc}}$ is the $\bP^{\op{loc}}$-basis given by $\bb_\la$, $\la\in\Lambda$.
\end{definition}

The $S_N$-action  \eqref{defsi}  extends to $\mV_N^{\op{loc}}$ and just permutes the Bethe basis vectors:
\begin{proposition} \label{SNonBethe}
The action of $S_N$ from \eqref{defsi} satisfies for $j\in [N-1]$
\begin{eqnarray}
{\bf s}_j(\bb{(\la)})=\bb({s_j (\la}) )\quad\text{resp.}\quad {\bf s}_j (\bb_{\la})=\bb_{s_j\la},\quad\text{ where $\la\in\Lambda$}.
\end{eqnarray}
\end{proposition}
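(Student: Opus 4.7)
The plan is to push ${\bf s}_j$ through the $C$-operators defining $\bb(\la)$, using the commutation of ${\bf s}_j$ with the monodromy matrix from Proposition~\ref{Propcommute}. First I would extract from the block form of that proposition the identity ${\bf s}_j C(x) = C(x) {\bf s}_j$ on $\mV_N[x]$, which, after expanding $C(x) = \sum_{r \geq 0} C^{(r)} x^r$ with $C^{(r)} \in \op{End}_\bP(\mV_N)$, is equivalent to ${\bf s}_j C^{(r)} = C^{(r)} {\bf s}_j$ on $\mV_N$ for every $r \geq 0$.

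The main technical step is to promote this to the substituted operator identity ${\bf s}_j \circ C(-t_i) = C(-t_{s_j(i)}) \circ {\bf s}_j$ in $\op{End}_\mC(\mV_N)$. Care is needed because the specialization $x \mapsto -t_i$ does not commute with ${\bf s}_j$ when $i \in \{j, j+1\}$. Combining the coefficient-wise commutation above with the twisted $\bP$-linearity ${\bf s}_j(f v) = ({}^{s_j}\! f)\, {\bf s}_j(v)$ from Definition~\ref{DefH}, for any $v \in \mV_N$ I would compute
$$
{\bf s}_j(C(-t_i) v) = \sum_r {\bf s}_j\bigl((-t_i)^r C^{(r)}(v)\bigr) = \sum_r (-t_{s_j(i)})^r C^{(r)}({\bf s}_j v) = C(-t_{s_j(i)})\, {\bf s}_j v,
$$
using the $\bP$-linearity of each $C^{(r)}$ and ${}^{s_j}(-t_i)^r = (-t_{s_j(i)})^r$.

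Iterating this identity through each factor in $\bb(\la) = C(-t_{i_1}) \cdots C(-t_{i_k})\, v_{(1,\ldots,1)}$, where $I_0 = \{i_1, \ldots, i_k\}$, yields
$$
{\bf s}_j \bb(\la) = C(-t_{s_j(i_1)}) \cdots C(-t_{s_j(i_k)})\, {\bf s}_j v_{(1,\ldots,1)} = C(-t_{s_j(i_1)}) \cdots C(-t_{s_j(i_k)})\, v_{(1,\ldots,1)},
$$
where I used ${\bf s}_j v_{(1,\ldots,1)} = v_{(1,\ldots,1)}$ from \eqref{sym} (the last case, since $\la_j = \la_{j+1} = 1$). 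Because the $C(y)$'s pairwise commute by Proposition~\ref{PropYBA}, the right-hand side depends only on the underlying set $\{s_j(i_1), \ldots, s_j(i_k)\} = s_j(I_0)$, and since $s_j(I_0) = \{i : (s_j\la)_i = 0\}$, this is precisely $\bb(s_j\la)$.

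For the normalized version, I would write $\bb_\la = f_\la\, \bb(\la)$ with $f_\la = \prod_{b \in I_0,\, a \notin I_0}(t_a - t_b)^{-1}$. The bijection $(a, b) \mapsto (s_j(a), s_j(b))$ between the pairs indexing $f_\la$ and those indexing $f_{s_j\la}$ shows ${}^{s_j}\! f_\la = f_{s_j\la}$, and then twisted $\bP$-linearity gives
$$
{\bf s}_j \bb_\la = ({}^{s_j}\! f_\la)\, {\bf s}_j \bb(\la) = f_{s_j\la}\, \bb(s_j\la) = \bb_{s_j\la}.
$$
The hardest part will be the substitution step: one cannot simply commute ${\bf s}_j$ with the specialization $x \mapsto -t_i$ directly, but must instead pass through the $\bP$-linear coefficient operators $C^{(r)}$ and use twisted $\bP$-linearity in order to correctly track the $s_j$-action on the scalar argument $-t_i$ alongside its action on the vector part.
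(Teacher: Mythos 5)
Your proof is correct and takes essentially the same approach as the paper's: it relies on the commutation with the Yang-Baxter operators from Proposition~\ref{Propcommute} together with ${\bf s}_j v_{(1,\ldots,1)}=v_{(1,\ldots,1)}$, then tracks how ${\bf s}_j$ permutes the evaluation points $-t_{i_a}$ and the normalisation factor. You merely spell out in detail the substitution step ($x\mapsto -t_i$ interacting with the twisted $\bP$-linearity of ${\bf s}_j$) that the paper's terse proof leaves implicit.
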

\begin{proof} 
The $S_N$-action from \eqref{defsi} commutes with the Yang-Baxter operators by Proposition~\ref{Propcommute} and ${\bf s}_j( v_{(1,\ldots,1)})=v_{(1,\ldots,1)}$ by the definition of the $R$-matrix. Therefore, acting by ${\bf s}_j$ on the vector $C(-t_{i_k})\ldots C(-t_{i_1})v_{(1,\ldots,1)}$ amounts to applying the permutation $s_j$ to the $-t_{i_j}$'s and the normalisation factor. 
\end{proof}

\begin{remark}
For the pair $(R',L')$, the arguments are completely analogous. Instead of \eqref{CA} one would consider the equation $B'(x_1)A'(x_2)-B'(x_2)A'(x_1)=(x_2-x_1)A'(x_1)B'(x_2)$.  Then the vectors $\bb'(\eta)=B'(t_{i_1})\cdots B'(t_{i_n})v_{(0,0,\dots,0)}\in\mV_N$  defined similar to  \eqref{Bethevector} satisfy the equality $A'(x)\bb'(\eta)=\displaystyle{\prod_{j\notin \{i_1,\cdots ,i_n\}}(x-t_j)}\bb(\eta)$. In fact, $\bb'(\eta')=\bb(\eta)$, where $\eta'$ is complementary to $\eta$. 
\end{remark}

\begin{remark}
The readers who are surprised about the fact that only the operators $A(x)$, and not the $D(x)$, occur in the construction of the Yang-Baxter algebra are referred to Section~\ref{secquantum}.
\end{remark}
\section{The $R$-matrix from the geometry of  Grassmannians}
We first recall some facts from the Schubert calculus for Grassmannians, see e.g. \cite{Brion}, \cite{Fulton} for more details. Then we will interpret  $\mV_N$ geometrically as $\bigoplus_{n=0}^N H_T^*(\Xn)$ identifying the standard basis with an equivariant Schubert basis.  After some localisation, the normalised Bethe basis makes sense. It will correspond under the localised isomorphism to the easiest possible basis in geometry
\begin{eqnarray*}
\mV_N^{\op{loc}}&\cong& \bigoplus_{n=0}^N H_T^*(\Xn)^{\op{loc}}\\
\text{normalised Bethe basis} &\leftrightarrow& \text{$T$-fixed point basis}
\end{eqnarray*}
namely the fixed point basis for the $T$-action, as we show in Corollary~\ref{Bethe}.
\subsection{Basics}
\label{Schubertclasses}
We consider now $\mathbb{C}^N$, fix its standard basis $e_1,\ldots, e_N$ and let $G=\op{GL}_N(\Bbb C)$. For $0 \leq n \leq N$ we denote by $\Xn=\op{Gr}(n,N)$ the corresponding {\it Grassmannian variety} of $n$-dimensional subspaces in $\Bbb C^N$. It is the projective variety $\op{Gr}(n,N) = G/P$ where $P=P_{n,k}$ is the standard parabolic subgroup consisting of the block upper triangular invertible matrices with diagonal blocks of sizes $n$ and and $k:=N-n$ containing the standard Borel subgroup $B$ of all upper triangular matrices. Let $T\subset G$ be the torus of diagonal matrices. It acts on $\Xn$ because it acts obviously on $\Bbb C^N$, respectively by left multiplication on $G/P$.
The set $\Xn^T$ of fixed points of this action is finite. Explicitly, there is a bijection
$\Lambda_n \stackrel{1:1}{ \leftrightarrow} \Xn^T$, $\lambda\mapsto  \Bbb C^{\lambda}$,
sending $\lambda\in \Lambda_n$ to the coordinate plane $\Bbb C^{\lambda}$ spanned by all $e_i$ with $\lambda_i=1$. Fix the {\it standard flag} 
\begin{eqnarray*}
\op{F}^e&=&(\{0\}=F_0\subset F_1\subset F_2\subset\ldots\subset F_N=\mC^N), 
\end{eqnarray*}
where $F_i$ is spanned by $e_j$ for $j\in [i]$. For $\lambda\in \Lambda_n$ let $C_{\lambda}$ be the {\it Schubert cell} in $\Xn$,
\begin{eqnarray}
\C_{\lambda}&=& \left\{V \in \Xn \mid\lambda_i=\op{dim}\left((V \cap F_{i})/ (V \cap F_{i-1})\right), \forall\; 1\leq i\leq N\right\},\quad\quad
\end{eqnarray}
and let $\Omega_{\lambda}$  be its {\it Schubert variety}, i.e.\  the closure of the cell $\C_{\lambda}\subset \Xn$. Note that $\C_{\lambda}$ contains exactly one $T$-fixed point, namely $\mC^\lambda$. Define the {\it inversion set} of  $\lambda\in \Lambda_n$ as 
\begin{eqnarray*}
\op{inv} (\lambda)&=&\{(i,j)\mid 1\leq i<j\leq N, \lambda_i>\lambda_j\}
\end{eqnarray*}
and let $\ell(\lambda)=|\op{inv} (\lambda)|$, e.g.  $\ell((0,1,1))=0$, $\ell((1,0,1))=1$ and $\ell(1,1,0)=2$.

The {\it Schubert class} $\oS_{\lambda} \in H^*(\Xn)=H^*(\Xn,\mC)$ is the Poincare dual in cohomology of the cycle in homology represented by the Schubert variety $\Omega_{\lambda}$. In particular, the degree of $\oS_{\lambda}$ is $2\ell(\lambda)$. These classes are well known to form a basis (even over $\mathbb{Z}$) of $H^*(\Xn)$.
Since the cells $C_{\lambda}$ are $T$-invariant (in fact $B$-orbits) they define also $T$-equivariant cohomology classes which we denote by abuse of notation also by $\oS_{\lambda}$. They form a $H_T^*(\op{pt})$-basis for the $T$-equivariant cohomology $H_T^*(\Xn)$. Later we will need a twisted version of this equivariant Schubert basis which we introduce now. 

\subsection{Twisted Schubert classes}
Let $w\in S_N$. Acting with $w$ on the standard flag $\op{F}^e$ produces a new flag $\op{F}^{w}$, where $F^{w}_i$ is spanned by $e_{w (j)}$, $j\leq i$. Note that $\op{F}^e$ corresponds to the neutral element $e\in S_N$. Mimicking the construction of $C_{\lambda}$ for the flag $\op{F}^{w}$ instead of $\op{F}^{e}$, we obtain another cell decomposition with cells
\begin{eqnarray*}
\C^w_{\lambda}&=& \left\{V \in \Xn \mid\lambda_{w(i)}=\op{dim}\left((V \cap F^w_{i})/ (V \cap F^w_{i-1})\right) \text{ for all } i\in [N]\right\}.
\end{eqnarray*}
Note that $\C^w_{\lambda}$ contains again exactly one fixed point, namely $\mC^{w(\la)}$. The closure of $\C^w_{\lambda}$ is denoted $\Omega^{w}_{\lambda}$ and called the {\it $w$-twisted Schubert variety}.  
\begin{example}
\label{exP1}
Let us consider $\op{Gr}(1,2)=\Bbb P^1$. The $T$-fixed points are represented by coordinate lines $p_1=\mC e_1$ and $p_2=\mC e_2$ corresponding to the $\{0,1\}$-words $\la=(1,0)$ and $\la=(0,1)$ respectively. For the standard flag $\op{F}^e$ the Schubert cells are the single point, $C_{(1,0)}=\{p_1\}$, 
 and the \lq big\rq cell, $C_{(0,1)}$, containing the fixed point $p_2$. For the flag $\op{F}^{s_1}$ the cell $C^{s_1}_{(1,0)}$ is now the \lq big\rq\ cell containing the fixed point $p_1$, and $C^{s_1}_{(0,1)}$ is the single point $p_2$.
\end{example}

The Schubert cells are the $B$-orbits in $\Xn$, namely $\C_\la=B\mC^\la$.  Moreover, $V\in \C^w_{\lambda}$ 
if and only if $\op{dim}\left((w^{-1}(V )\cap F_{i})/ (w^{-1}(V) \cap F^w_{i-1})\right)=\la_{w(i)}=(w^{-1}(\la))_i$ for all $i$, that is $w^{-1}(V)\in B\mC^{w^{-1}(\la)}$ or equivalently $V\in wB\mC^{w^{-1}(\la)}=wBw^{-1}w\mC^{w^{-1}(\la)}$. Hence twisting with $w$ means taking the orbits for the conjugated Borel subgroup $B^w=wBw^{-1}$ in $\Xn$, in formulae
\begin{eqnarray}
\label{wBw}
\C_\la^w&=&wBw^{-1}\mC^{\la}. 
\end{eqnarray}

Another convenient way, is to work with attracting cells for a $\mC^*$-action. For this pick integers $a_1, a_2, \ldots, a_N$  and consider the  corresponding integral cocharacter $\tau^{\bf a}:\mC^*\rightarrow T, t\mapsto t^{\bf a}$, where  $t^{\bf a}$ is the matrix with diagonal entries $t^{a_1}, t^{a_2},\ldots, t^{a_n}$. 
Then there is the following alternative description:

\begin{proposition} 
Assume $a_1>a_2>\ldots>a_N$. 
\begin{enumerate}[1.)]
\item Then the cell  $\C_{\lambda}$ is the attracting manifold to the fixed point $p=\mC^\lambda$, i.e.\  $\C_{\lambda}=\{x\in \Xn\mid \lim_{t\rightarrow 0}\tau^{\bf a}(t).x=p\}.$ 
\item The cell $\C^{w}_{\lambda}$ is the attracting manifold to the fixed point $p=\mC^\lambda$ for the cocharacter $t^{w^{-1}(\bf a)}$ corresponding to $a_{w^{-1}(1)}, \ldots, a_{w^{-1}(N)}$.
\end{enumerate}
\end{proposition}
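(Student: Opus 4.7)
The plan is to prove (1) by an explicit computation using reduced row echelon form bases of points in $\C_\lambda$, and then to deduce (2) from (1) by conjugating with the permutation matrix $w$.

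For part (1), first recall that every $V\in\C_\lambda$ admits a unique basis $v_1,\ldots,v_n$ in reduced row echelon form with pivot positions $i_1<\cdots<i_n$, where $\{i_1,\ldots,i_n\}=\{i:\lambda_i=1\}$. Concretely $v_k=e_{i_k}+\sum_{j<i_k,\,\lambda_j=0} c_{kj}\,e_j$ for scalars $c_{kj}\in\mC$, and these parameters exhibit $\C_\lambda$ as an affine cell. Applying the cocharacter yields $\tau^{\mathbf a}(t)\,v_k = t^{a_{i_k}}\bigl(e_{i_k}+\sum_j c_{kj}\,t^{a_j-a_{i_k}}\,e_j\bigr)$. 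Since the overall factor $t^{a_{i_k}}$ does not affect the span in $\Xn$, I may replace $\tau^{\mathbf a}(t)\,v_k$ by $\tilde v_k(t) = e_{i_k}+\sum_{j<i_k,\,\lambda_j=0} c_{kj}\,t^{a_j-a_{i_k}}\,e_j$. The key observation is that the assumption $a_1>a_2>\cdots>a_N$ forces $a_j-a_{i_k}>0$ for all $j<i_k$, so each non-pivot term decays and $\tilde v_k(t)\to e_{i_k}$ as $t\to 0$. Since the limiting vectors $e_{i_1},\ldots,e_{i_n}$ are linearly independent, the point $\tau^{\mathbf a}(t)\cdot V$ converges in $\Xn$ to $\operatorname{span}(e_{i_1},\ldots,e_{i_n})=\mC^\lambda$. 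The converse then follows formally: since the cells $\{\C_\mu\}_{\mu\in\Lambda_n}$ partition $\Xn$ and the above computation shows each $\C_\mu$ lies inside the attracting set of its distinguished fixed point $\mC^\mu$, the attracting manifold of $\mC^\lambda$ must be exactly $\C_\lambda$.

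For part (2), I would reduce to part (1) by a $w$-conjugation argument. A direct check on diagonal matrices gives $w\,\tau^{\mathbf a}(t)\,w^{-1}\cdot e_i = t^{a_{w^{-1}(i)}}e_i$, so the conjugated cocharacter is exactly $\tau^{w^{-1}(\mathbf a)}(t)$. Hence $\lim_{t\to 0}\tau^{w^{-1}(\mathbf a)}(t)\cdot V = w\cdot\lim_{t\to 0}\tau^{\mathbf a}(t)\cdot(w^{-1}V)$ for every $V\in\Xn$. Combined with part (1) and the identity $w\,\mC^\mu=\mC^{w(\mu)}$, this shows that $V$ is attracted to $\mC^\lambda$ under $\tau^{w^{-1}(\mathbf a)}$ if and only if $w^{-1}V\in\C_{w^{-1}(\lambda)}$, equivalently $V\in w\,\C_{w^{-1}(\lambda)}$. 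Finally, using $\mC^{w^{-1}(\lambda)}=w^{-1}\mC^\lambda$ and the description \eqref{wBw} yields $\C^w_\lambda=wBw^{-1}\mC^\lambda=wB\,\mC^{w^{-1}(\lambda)}=w\,\C_{w^{-1}(\lambda)}$, completing the proof.

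The main point to watch is the bookkeeping around the symmetric-group action on the $\{0,1\}$-tuples (acting on positions versus on values) so that the identifications $w\,\mC^\mu=\mC^{w(\mu)}$ and \eqref{wBw} are used consistently. Once this convention is fixed, (1) is a transparent diagonal computation and (2) reduces mechanically to (1) by permutation conjugation.
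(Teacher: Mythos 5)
Your proof is correct and follows essentially the same route as the paper: part (2) is deduced from part (1) via the conjugation identity $w\tau^{\bf a}(t)w^{-1}=\tau^{w^{-1}({\bf a})}(t)$ together with \eqref{wBw}. The only difference is that you spell out the (standard) reduced row-echelon-form limit computation behind part (1), which the paper declares clear and omits.
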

\begin{proof}
The first statement is clear and for the second note then that 
$w^{-1}(x)\in B\mC^{w^{-1}(\la)}$ if  and only if $\lim_{t\rightarrow 0}\tau^{\bf a}(t).w^{-1}(x)=\mC^{w^{-1}(\la)}$ and so the statement follows from $w\tau^{\bf a}(t)w^{-1}=\tau^{w^{-1}(\bf a)}(t)$.
\end{proof}
\begin{remark}
\label{weyl}
Note that if  we view ${\bf a}$ as an element of the integral weight lattice for $\mathfrak{gl}_N$, then the attached attracting cell is constant on Weyl chambers. 
\end{remark}

\subsection{Equivariant twisted Schubert classes}
\label{eqSchubert}
The cells $\C^{w}_{\lambda}$,  for fixed $w$ and varying $\la$, define also equivariant cohomology classes $S^{w}_{\lambda}$ which, as we will explain now, form a $H_T^*(pt)$-basis, and  $S^{e}_{\lambda}= S_{\lambda}$ will give the (untwisted) equivariant Schubert classes. 
For more details on equivariant cohomology we refer to  \cite{GKM}, \cite{Libine} and for our special case to \cite{KT03}. 

To describe the $T$-equivariant cohomology of $H_T^*(\Xn)=H_T^*(\Xn,\mathbb{C})$ of $\Xn$ we will use the $T$-equivariant map $\Xn^T\rightarrow \Xn$ which is known to induce a monomorphism $\op{res}: H_T^*(\Xn)\rightarrow H_T^*(\Xn^T)$ often called the {\it restriction to the fixed points}.

Let $\Func=\op{Maps}(\Lambda_n,\bP)$ be the set of functions on $\Lambda_n$ with values in $\bP$. We identify $\bP$ with $H^*_T(\op{pt})$, and  $\Func$ with $H^*_T(\Xn^T)$, and view $H^*_T( \Xn)\subset \Func$ via $\op{res}$. Then, by \cite{GKM}, see also \cite[Theorem 98]{Libine},  the  subset $H_T^*(\Xn)\subset \Func(n)$ can be identified with the set of  GKM-classes.

\begin{definition}
An element $\alpha \in \Func$ is called a {\it GKM class} or just {\it a class} if it satisfies the following Goresky-Kottwitz-MacPherson (GKM) conditions: 
\begin{eqnarray}
\label{GKMcond}
\alpha({\lambda}) - \alpha(\mu)&=&0\mod (t_i - t_j), \quad\text{if ${\mu} = s_{ij}{\lambda}$}, 
\end{eqnarray}
for a transposition $s_{ij}\in S_N$. For a class $\alpha$ define its {\it support} as $\Supp(\alpha)=\{\lambda\mid \alpha({\lambda})\not= 0\}\subset\Lambda_n$.
\end{definition} 

We now give a combinatorial description of the twisted equivariant Schubert classes, generalising the (ordinary) equivariant Schubert classes from  \cite{KT03}. 
\begin{lemma}
\label{Demazurelemma}
Consider the action of $S_N$ on $\Func$ in the usual way, namely
\begin{eqnarray}\label{symactioneq}
(w\cdot \alpha)(\lambda) &= &w\left(\alpha(w^{-1}\cdot\lambda)\right),
\end{eqnarray}
for $w\in S_N$, $\alpha\in\Func$. Then the following holds.
\begin{enumerate}[1.)]
\item If $\alpha$ is a class, then so is $w\cdot\alpha$ for any $w\in S_N$.
\item $\mu\in \Supp(\alpha)$ if and only if $w(\mu)\in \Supp(w\cdot\alpha)$.
\item  $H_T ^*(\Xn)\subset \Func$ is a subring and a $\bP$-submodule.
\item The Demazure operators $\Delta_i: f\mapsto \frac{f-{}^{s_i}\!f}{t_{i}-t_{i+1}}$,  $i\in [N-1]$, acting on $\Func$, preserve $H_T^*(\Xn)$.
\end{enumerate} 
\end{lemma}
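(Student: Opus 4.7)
The plan is to establish the four items in order, where parts (1)--(3) follow by direct manipulation of the GKM condition \eqref{GKMcond} and part (4) is the main obstacle.

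For (1), suppose $\mu=s_{ij}\lambda$, so that $w^{-1}\mu = s_{w^{-1}(i),w^{-1}(j)}w^{-1}\lambda$. Applying \eqref{GKMcond} to $\alpha$ at the pair $(w^{-1}\lambda,w^{-1}\mu)$ yields $\alpha(w^{-1}\lambda)-\alpha(w^{-1}\mu) \in (t_{w^{-1}(i)}-t_{w^{-1}(j)})\bP$. Applying the algebra automorphism $w$, which sends $t_k\mapsto t_{w(k)}$, transports this divisibility to $(w\cdot\alpha)(\lambda)-(w\cdot\alpha)(\mu) \in (t_i-t_j)\bP$. Part (2) is immediate from $(w\cdot\alpha)(w\mu)=w(\alpha(\mu))$ and the fact that $w$ acts as an algebra automorphism of $\bP$. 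For (3), closure under sums and $\bP$-scalar multiplication is clear by the linearity of \eqref{GKMcond}; for products one uses the derivation-style identity $(\alpha\beta)(\lambda)-(\alpha\beta)(\mu)=\alpha(\lambda)(\beta(\lambda)-\beta(\mu))+\beta(\mu)(\alpha(\lambda)-\alpha(\mu))$, from which divisibility by $t_i-t_j$ is inherited from the two factors.

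For (4), by (1) the function $s_i\cdot\alpha$ is a class, and by (3) so is $\gamma:=\alpha-s_i\cdot\alpha$. Two subtasks remain. First I would check that $\gamma(\lambda) \in (t_i-t_{i+1})\bP$ for every $\lambda\in\Lambda_n$, so that $\Delta_i\alpha \in \Func$ is well-defined: if $s_i\lambda=\lambda$ then $\gamma(\lambda)=\alpha(\lambda)-{}^{s_i}\!(\alpha(\lambda))$, which lies in $(t_i-t_{i+1})\bP$ by the elementary fact that $f-{}^{s_i}\!f$ is always divisible by $t_i-t_{i+1}$; if $s_i\lambda\neq\lambda$ one combines the GKM congruence $\alpha(\lambda)\equiv\alpha(s_i\lambda)\pmod{t_i-t_{i+1}}$ with the same elementary fact applied to $\alpha(s_i\lambda)$.

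The main (and more delicate) step is to verify that $\beta:=\Delta_i\alpha$ itself satisfies \eqref{GKMcond} at every transposition $s_{ab}$. Since $\gamma$ is a class, $\gamma(\mu)-\gamma(s_{ab}\mu) \in (t_a-t_b)\bP$, hence $(t_i-t_{i+1})\bigl[\beta(\mu)-\beta(s_{ab}\mu)\bigr] \in (t_a-t_b)\bP$. When $\{a,b\}\neq\{i,i+1\}$, the linear polynomials $t_a-t_b$ and $t_i-t_{i+1}$ are coprime in $\bP$, and divisibility of $\beta(\mu)-\beta(s_{ab}\mu)$ by $t_a-t_b$ follows. The remaining case $\{a,b\}=\{i,i+1\}$ I would handle by first establishing the $s_i$-invariance $s_i\cdot\Delta_i\alpha = \Delta_i\alpha$: a short calculation using $s_i(t_i-t_{i+1})=-(t_i-t_{i+1})$ confirms this. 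Invariance gives $\beta(\mu)=s_i(\beta(s_i\mu))$, so $\beta(\mu)-\beta(s_i\mu)=s_i(\beta(s_i\mu))-\beta(s_i\mu)$ lies automatically in $(t_i-t_{i+1})\bP$. This case split, and in particular the reduction to $s_i$-invariance when $\{a,b\}=\{i,i+1\}$, is where the argument becomes non-routine.
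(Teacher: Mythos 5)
Your proof is correct and follows essentially the same route as the paper: (1)--(3) by direct manipulation of the GKM condition, and (4) by splitting into the coprime case $\{a,b\}\neq\{i,i+1\}$ and the special case $\{a,b\}=\{i,i+1\}$. In fact, your treatment of the special case is more careful than the paper's: the paper asserts that $(\Delta_a\alpha)(\lambda)-(\Delta_a\alpha)(s_a\lambda)=0$, which is not literally true in general (take e.g.\ $\alpha=t_1\,S^e_{(1,0)}$ in $\op{Gr}(1,2)$, where the difference is $t_1-t_2$), whereas your route through $s_i$-invariance of $\Delta_i\alpha$ under the dot action correctly yields only divisibility by $t_i-t_{i+1}$, which is all that \eqref{GKMcond} requires.
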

\begin{proof}
Except of the last part this follows directly from the definitions. For the last part we have to verify the GKM-conditions for $\Delta_a(\alpha)$ for any class $\alpha$. We have $\Delta_a(\alpha)(\la)-\Delta_a(\alpha)(s_a\la)=0$ if $\alpha $ is a class, hence the condition \eqref{GKMcond} holds for $\{i,j\}=\{a,a+1\}$. Otherwise $t_i-t_j$ and $t_a-t_{a+1}$ are coprime in $\bP$. Since the difference $\alpha-s_a(\alpha)$ is a class, $\Delta_a(\alpha)$ is a class as well.
\end{proof}

Given  $w\in S_N$, define a partial ordering ${\geq_{w}}$ on $\Lambda_n$ as follows.
\begin{definition}   For $\lambda, \mu\in \Lambda_n$ let
\begin{eqnarray} 
\mu\geq_{w} \la \quad :\Leftrightarrow \quad w^{-1}(\mu)\geq w^{-1}(\la),&\text{with}& \mu\geq\lambda \quad\text{if}\quad  \sum_{i=1}^j  \mu_i \geq \sum_{i=1}^j \la_i\quad \text{for all} \quad j \in[N]. 
\end{eqnarray}
In particular $\geq_{\text {id}}\,=\,\geq$. We denote by $\Lambda_{\geq_{w} \lambda}$ all elements $\nu\in\Lambda_n$ such that  $\nu \geq_{w} \la$. 
\end{definition}

Since $\mC^\mu\in\Omega_\la^w$ if and only if $\mu \geq_{w} \la$, we are interested in the following classes.
\begin{definition}
Let $w\in S_N$ and $\la\in\Lambda_n$. We say a class $\alpha\in\Func$ {\it has $w$-support above $\lambda$} if $\la\in\Supp(\alpha)$ and $\Supp(\alpha)\subseteq \Lambda_{\geq_{w} \lambda}$.
\end{definition}

The GKM-conditions imply that if $\alpha$ has $w$-support above $\la$, then $\alpha(\la)$ is divisible by $\prod_{(i,j)\in inv_{w (\lambda)}}(t_i-t_j)$. We consider such classes of minimal total degree in the variables of $\bP$. 

\begin{lemma}
\label{unique}
Let $w\in S_N$, $\la\in\Lambda_n$. There exists a unique class $S^w_\lambda\in H^*_T (\Xn)$ satisfying
\begin{itemize}
\item [\textit{Stab1:}] $S^w_\lambda$ has $w$-support above $\la$,  
\item [\textit{Stab2:}] $S^\omega_\lambda(\lambda)=\prod_{(i,j)\in \op{inv}_w(\lambda)}(t_j-t_i)$, where
\begin{eqnarray*}
\op{inv}_w (\lambda)&=&\left\{(i,j)\mid 1\leq w^{-1}(i)<w^{-1}(j)\leq N, \lambda_i>\lambda_j\right\}.
\end{eqnarray*}
\item [\textit{Stab3}:]$\op{deg}(S^w_\lambda(\mu))=\ell(w^{-1}\la)$ for all $\mu\in\Lambda_n$ such that $\mu\geq_w\lambda$.
\end{itemize}
\end{lemma}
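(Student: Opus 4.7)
The plan is to dispose of uniqueness first, by a direct combinatorial argument using only the GKM relations \eqref{GKMcond} and the degree bound Stab3; and then to construct $S^w_\la$ by reducing to the untwisted case $w=e$ via the $S_N$-action on $\Func$ provided by Lemma~\ref{Demazurelemma}.

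\emph{Uniqueness.} Suppose $\alpha_1,\alpha_2$ both satisfy Stab1--Stab3 and set $\beta=\alpha_1-\alpha_2$. Then $\beta$ is a class with $\beta(\la)=0$, $\Supp(\beta)\subseteq \Lambda_{\geq_w\la}\setminus\{\la\}$, and $\deg \beta(\mu)\leq \ell(w^{-1}\la)$ for every $\mu\geq_w\la$. Assume $\beta\neq 0$ and pick $\mu$ minimal in $\Supp(\beta)$ with respect to $\geq_w$; necessarily $\mu>_w\la$, so $\ell(w^{-1}\mu)>\ell(w^{-1}\la)$. For each $(i,j)\in\op{inv}_w(\mu)$ the word $\mu':=s_{ij}\mu$ satisfies $\mu'<_w\mu$ (a direct check: $w^{-1}(\mu')$ is obtained from $w^{-1}(\mu)$ by moving the $1$ at position $w^{-1}(i)$ to the strictly later position $w^{-1}(j)$), and so $\beta(\mu')=0$ by minimality. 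The GKM relation \eqref{GKMcond} then forces $(t_i-t_j)\mid \beta(\mu)$; since the factors for distinct pairs in $\op{inv}_w(\mu)$ are pairwise coprime, their full product divides $\beta(\mu)$. But this product has degree $|\op{inv}_w(\mu)|=\ell(w^{-1}\mu)>\ell(w^{-1}\la)$, contradicting Stab3. Hence $\beta=0$.

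\emph{Existence.} Given a class $S^e_{w^{-1}\la}$ satisfying Stab1--Stab3 for the pair $(e,w^{-1}\la)$, set $S^w_\la:=w\cdot S^e_{w^{-1}\la}$ via \eqref{symactioneq}. Lemma~\ref{Demazurelemma}(1)--(2) ensures this is a class with $\Supp(S^w_\la)\subseteq w(\Lambda_{\geq w^{-1}\la})=\Lambda_{\geq_w\la}$, i.e.\ Stab1; the bijection $\op{inv}(w^{-1}\la)\ni(a,b)\mapsto(w(a),w(b))\in\op{inv}_w(\la)$ identifies the two products appearing in Stab2; and since $w$ acts on $\bP$ by permuting the $t_i$, degrees are preserved, so Stab3 transfers. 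To produce $S^e_\la$ for arbitrary $\la$, I would start from $\la_{\max}=(1,\ldots,1,0,\ldots,0)$: the function on $\Lambda_n$ sending $\la_{\max}$ to $\prod_{i\leq n<j}(t_j-t_i)$ and all other $\mu$ to $0$ is visibly a GKM class, because every transposition changing $\la_{\max}$ is an $s_{ij}$ with $i\leq n<j$, and the corresponding factor $(t_j-t_i)$ already sits in the product. From here one descends the poset $(\Lambda_n,\geq)$ by iteratively applying the Demazure operators $\Delta_a$ of Lemma~\ref{Demazurelemma}(4) whenever $s_a\la<\la$, as in \cite{KT03}.

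The \emph{main obstacle} lies in the inductive bookkeeping of this Demazure descent: one must check at each step that, up to the appropriate sign and normalisation, $\Delta_a$ sends the current $S^e_\la$ to a class whose support is contained in $\Lambda_{\geq s_a\la}$ and whose value at $s_a\la$ is \emph{exactly} $\prod_{(i,j)\in\op{inv}(s_a\la)}(t_j-t_i)$, and not merely a nonzero scalar multiple of it. Both the support control and the precise diagonal normalisation are the genuine combinatorial input; once they are in place, iterating along any reduced expression exhausts all of $\Lambda_n$, and combined with the $S_N$-twist above yields $S^w_\la$ for every $w\in S_N$ and $\la\in\Lambda_n$.
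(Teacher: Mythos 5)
Your proof is correct, and it mirrors the paper's argument while unfolding the details the paper leaves to a citation. The paper's proof consists of one observation --- that $|\op{inv}_w(\la)| = \ell(w^{-1}\la)$ --- and then defers to \cite[Lemma 1]{KT03} ``with the order there replaced by $\geq_w$''. Your uniqueness argument is exactly that Knutson--Tao argument worked out explicitly in the $\geq_w$-ordered GKM ring, and it is complete and sound: the choice of a minimal $\mu\in\Supp(\beta)$, the verification that $s_{ij}\mu<_w\mu$ for $(i,j)\in\op{inv}_w(\mu)$, the resulting GKM divisibility by the pairwise coprime $t_i-t_j$, and the degree comparison against Stab3 all hold. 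For existence you take a slightly different and arguably cleaner route than a literal re-run of \cite{KT03} in the $\geq_w$-order: you reduce to $w=e$ by pushing an untwisted class through the $S_N$-action on $\Func$ and checking that Stab1--Stab3 transfer; the bijection $\op{inv}(w^{-1}\la)\ni(a,b)\mapsto(w(a),w(b))\in\op{inv}_w(\la)$ you use is precisely the computation the paper records separately in Remark~\ref{important}. The remaining ingredient you flag but do not carry out --- the Demazure descent from $\la_{\max}$ producing all untwisted $S^e_\la$ with the correct support and the exact normalisation, not just up to scalar --- is the same input the paper outsources by citing \cite{KT03} (see Remark~\ref{Delta} for the needed $\Delta_a S^e_\la = S^e_{s_a\la}$), so this is not a defect of your argument relative to the paper's own level of detail.
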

This class is called the {\it $w$-twisted Schubert class} (or {\it $w$-stable class}) {\it corresponding to $\lambda$}.
\begin{proof}
Note that $\ell(w^{-1}\la)$ equals the cardinality of the set
\begin{eqnarray*}
w(\op{inv}(w^{-1}(\la))&=&\{(i,j)\mid 1\leq w^{-1}(i)<w^{-1}(j)\leq N, \la_i>\la_j\}=\op{inv}_w (\la).
\end{eqnarray*}
Then the  proof is exactly as in \cite[Lemma 1]{KT03}, but with the order there replaced by $\geq_{w}$. 
\end{proof}
\begin{remark} 
\label{Delta}
For $w=e$ we obtain the ordinary equivariant Schubert classes. In particular, $\Delta_a S^e_\la=S^e_{s_a(\la)}$ if $\ell(\la)>\ell(s_a\la)$ and $\Delta_aS^e_\la=0$ otherwise, see \cite[Lemma 6]{KT03}.
\end{remark}
\begin{remark}
\label{important}
Alternatively one could define $S^w_\la(\mu)={}^w(S^e_{w^{-1}(\la)}(w^{-1}(\mu))$. 
Indeed, the characterizing properties \textit{Stab1} and \textit{Stab3} follows directly from the support condition for the untwisted Schubert classes. For the second one it suffices to show $w(\op{inv}_e \omega^{-1}(\lambda) )= \op{inv}_\omega \lambda.$ To see this note that $(i,j)\in w(\op{inv}_e \omega^{-1}(\lambda) )$ if and only if $(w^{-1}(i),w^{-1}(j))\in \op{inv}_e w^{-1}(\lambda)$. This is equivalent to $w^{-1}(i)<\omega^{-1}(j)$ with $w^{-1}(\lambda)_{w^{-1}(i)}>w^{-1}(\lambda_{w^{-1}(j)}$ or to $w^{-1}(i)<\omega^{-1}(j)$ with $\lambda_i>\lambda_j.$
But the latter means by definition that $(i,j)\in \op{inv}_w(\lambda)$.
\end{remark}

\begin{example}
Assume the setup of Example~\ref{exP1}. Then 
\begin{eqnarray*}
S^{e}_{01}(\mu)=
\begin{cases}
1,&\\
1,&
\end{cases}
S^{e}_{10}(\mu)=
\begin{cases}
(t_1-t_2),&\\
0,&
\end{cases}
S^{s_1}_{10}(\mu)=
\begin{cases}
1,&\\
1,&
\end{cases}
S^{s_1}_{01}(\mu)=
\begin{cases}
0,&\\
(t_2-t_1).&
\end{cases}
\end{eqnarray*}
where the first line is always for $\mu=(10)$ and the second for $\mu=(01)$.

\end{example}

\subsection{Wall-crossing}
We now express, for fixed $w\in S_N$, the $s_iw$-twisted equivariant Schubert basis in terms of the $w$-twisted one for any simple transposition $s_i$. This generalises and gives a geometric interpretation of the formula \cite[Proposition 3.2 with $l=1$]{MolevSagan} for factorial Schur functions.
\begin{proposition}
\label{startingpoint}
Let $w\in S_N$ and $s=s_i=(i,i+1)$ then 
\begin{eqnarray*}
S^{ws}_{\lambda} ({\mu})&=&
\begin{cases}
S^{w}_{wsw^{-1}(\lambda)}({\mu}) +(t_{w(i+1)}-t_{w(i)})S^{w}_{\lambda}({\mu})& \text{if }\la_{w(i+1)}>\la_{w(i)},\\
S^{w}_{wsw^{-1}(\lambda)}({\mu})&\text{otherwise.}\\
\end{cases}
\end{eqnarray*}
In particular, the $\{S^{w}_{\lambda}\}_{\lambda\in \Lambda_n}$ form, for fixed $w\in S_N$, a basis of $H^*_T(\Xn)$. 
\end{proposition}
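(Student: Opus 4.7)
The plan is to reduce the asserted formula to the untwisted case $w=e$ via Remark~\ref{important}, where it becomes an identity about the Demazure operator $\Delta_i$ applied to the ordinary equivariant Schubert classes and is then supplied by Remark~\ref{Delta}. Concretely, using Remark~\ref{important} one rewrites
\[
S^{ws}_\la(\mu) = {}^{ws}\!\bigl(S^e_{sw^{-1}(\la)}(sw^{-1}(\mu))\bigr),\quad S^w_{wsw^{-1}(\la)}(\mu) = {}^w\!\bigl(S^e_{sw^{-1}(\la)}(w^{-1}(\mu))\bigr),\quad S^w_\la(\mu) = {}^w\!\bigl(S^e_{w^{-1}(\la)}(w^{-1}(\mu))\bigr).
\]
Applying ${}^{w^{-1}}$ to both sides of the desired identity and setting $\nu=sw^{-1}(\la)$ (so $s\nu=w^{-1}(\la)$) and $\xi=w^{-1}(\mu)$, and invoking the $S_N$-action \eqref{symactioneq} under which $(s\cdot S^e_\nu)(\xi)={}^s(S^e_\nu(s\xi))$, the claim is transformed into
\[
(s\cdot S^e_\nu)(\xi) \;=\; S^e_\nu(\xi) \;+\; \epsilon(\nu)\,(t_{i+1}-t_i)\, S^e_{s\nu}(\xi),
\]
where $\epsilon(\nu)=1$ if $\nu_i>\nu_{i+1}$ and $\epsilon(\nu)=0$ otherwise; the condition $\la_{w(i+1)}>\la_{w(i)}$ translates to $\nu_i>\nu_{i+1}$ since $\la_{w(j)}=(s\nu)_j$ for $j\in\{i,i+1\}$.

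Dividing the equivalent statement $(\op{id}-s)\,S^e_\nu = -\epsilon(\nu)(t_{i+1}-t_i)\,S^e_{s\nu}$ through by $t_i-t_{i+1}$ recasts the claim as $\Delta_i S^e_\nu = \epsilon(\nu)\,S^e_{s\nu}$, which is exactly Remark~\ref{Delta}: if $\nu_i=1,\nu_{i+1}=0$ then $\ell(\nu)>\ell(s\nu)$ and $\Delta_i S^e_\nu=S^e_{s\nu}$, matching $\epsilon(\nu)=1$; in the remaining cases ($\nu_i=\nu_{i+1}$ or $\nu_i=0,\nu_{i+1}=1$) one has $\ell(\nu)\leq\ell(s\nu)$ and $\Delta_iS^e_\nu=0$, matching $\epsilon(\nu)=0$. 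For the final basis assertion I would induct on the length of a reduced expression for $w$, starting from the standard fact that $\{S^e_\la\}_{\la\in\Lambda_n}$ is a $\bP$-basis of $H^*_T(\Xn)$. The wall-crossing formula above expresses $\{S^{ws}_\la\}$ in $\{S^w_\la\}$ through a matrix that decomposes into identity blocks for $\la$ with $\la_{w(i)}=\la_{w(i+1)}$ and, for each pair $\{\la,wsw^{-1}(\la)\}$ with $\la_{w(i)}\neq\la_{w(i+1)}$, a $2{\times}2$ block
\[
\begin{pmatrix} t_{w(i+1)}-t_{w(i)} & 1 \\ 1 & 0 \end{pmatrix}
\]
of determinant $-1$; this matrix is invertible over $\bP$, closing the induction.

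The main obstacle is notational rather than conceptual: one must carefully track the three interacting $S_N$-actions (on $\bP$, on $\Lambda_n$ via $w(\la)_j=\la_{w^{-1}(j)}$, and on classes via \eqref{symactioneq}), in particular to verify that the positivity condition $\la_{w(i+1)}>\la_{w(i)}$ correctly corresponds to $\nu_i>\nu_{i+1}$ after the change of variables, and to confirm that ${}^{w^{-1}}(t_{w(i+1)}-t_{w(i)})=t_{i+1}-t_i$ matches the coefficient in the Demazure identity. Once this bookkeeping is in place, the case analysis reduces directly to Remark~\ref{Delta}.
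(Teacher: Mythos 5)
Your proof is correct and takes essentially the same route as the paper: both reduce to the untwisted case via Remark~\ref{important} and then invoke Remark~\ref{Delta}, with only a cosmetic difference in the substitution (you set $\nu=sw^{-1}(\la)$ where the paper sets $\nu=w^{-1}(\la)$, i.e.\ your $\nu$ is the paper's $s\nu$). You also supply an explicit argument (induction on a reduced word together with the $2\times 2$-block determinant computation) for the final basis assertion, which the paper states without comment; that addition is sound.
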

\begin{proof}
We call the function on the right hand side $\alpha$ and rewrite $\alpha(\mu)$ using  Remark~\ref{important}. Abbreviating $\nu=w^{-1}(\la)$ we obtain
\begin{eqnarray*}
\alpha({\mu})&=&
\begin{cases}
w (S^{e}_{s\nu}(w^{-1}\mu) +(t_{i+1}-t_{i}) S^{e}_{\nu}(w^{-1}\mu))& \text{if }\nu_{i+1}>\nu_{i},\\
w (S^{e}_{s\nu}(w^{-1}\mu))&\text{otherwise.}
\end{cases}
\end{eqnarray*}
In the first case we have $\ell(s\nu)>\ell(\nu)$ hence using Remark~\ref{Delta} we obtain 
$\alpha({\mu})={}^w((s\cdot S^e_{s\nu})(w^{-1}\mu))={}^{ws}(S^e_{s\nu}(sw^{-1}\mu))=S^{ws}_\la(\mu)$ as claimed.
 In the second case there are two possibilities $\nu_{i+1}=\nu_{i}$, or $\nu_{i+1}<\nu_{i}$ which means in either case $\Delta_i(S^{e}_{s\nu})=0$ by Remark~\ref{Delta}. Hence $S^{e}_{s\nu}=s\cdot S^{e}_{s\nu}$ and therefore $w( S^{e}_{s\nu}(w^{-1}\mu))=ws(S^{e}_{s\nu}(sw^{-1}\mu))=S^{ws}_{\lambda} ({\mu})$. This finishes the proof.
\end{proof}

Identify  $H^*_T(\Xn)$ with $\mV_{N,n}$ from Section~\ref{secYBA} by sending $S^{e}_\lambda$ to $v_\lambda$ and extending $\bP$-linearly. Then Proposition~\ref{startingpoint} can be formulated in terms  of the $R$-matrix $R$ from \eqref{L1L2} as follows.
\begin{corollary}
\label{corstartingpoint}
The base change from $\{S^{ws_i}_{\lambda}\}_{\lambda\in \Lambda_n}$ to $\{S^{w}_{\lambda}\}_{\lambda\in \Lambda_n}$ is given by $R_{a,b}(t_b,t_a)$ acting on the $a$th and $b$th tensor factor of $\mV_{N}$, where $a=w(i)$ and $b=w(i+1)$.
\end{corollary}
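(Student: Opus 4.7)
The plan is to deduce the corollary directly from Proposition~\ref{startingpoint} by reading off the four possible local configurations of $\lambda$ at the positions $a=w(i)$ and $b=w(i+1)$. First I would observe that the element $wsw^{-1}\in S_N$ is precisely the transposition $(a,b)$, so its action on $\lambda\in\Lambda_n$ simply swaps $\lambda_a$ and $\lambda_b$ and leaves all other entries untouched. Similarly, the condition $\lambda_{w(i+1)}>\lambda_{w(i)}$ in Proposition~\ref{startingpoint} translates into the single-letter condition $\lambda_b>\lambda_a$. This already shows that under the identification of both bases with the standard basis $\{v_\lambda\}$ of $\mV_{N,n}$, the base change map acts nontrivially only on the $a$th and $b$th tensor factors.

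Next I would split into the four cases $(\lambda_a,\lambda_b)\in\{0,1\}^2$. In the two cases $(0,0)$ and $(1,1)$ one has $wsw^{-1}(\lambda)=\lambda$ and the strict inequality fails, so Proposition~\ref{startingpoint} yields $S^{ws}_\lambda=S^w_\lambda$; this matches
$R(t_b,t_a)(v_0\otimes v_0)=v_0\otimes v_0$ and $R(t_b,t_a)(v_1\otimes v_1)=v_1\otimes v_1$. In the case $(1,0)$ the inequality still fails, but $wsw^{-1}(\lambda)$ now has $(0,1)$ at positions $a,b$, giving $S^{ws}_\lambda = S^w_{wsw^{-1}(\lambda)}$, which matches $R(t_b,t_a)(v_1\otimes v_0)=v_0\otimes v_1$. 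Finally, in the case $(0,1)$ the inequality holds and Proposition~\ref{startingpoint} gives
\[
S^{ws}_\lambda \;=\; S^w_{wsw^{-1}(\lambda)}\;+\;(t_b-t_a)\,S^w_\lambda,
\]
which matches $R(t_b,t_a)(v_0\otimes v_1)=(t_b-t_a)(v_0\otimes v_1)+v_1\otimes v_0$ after reading off the second column of the matrix $R(x,y)$ from \eqref{L1L2} with $x=t_b$, $y=t_a$.

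Assembling the four cases, the base change from $\{S^{ws}_\lambda\}_{\lambda\in\Lambda_n}$ to $\{S^w_\lambda\}_{\lambda\in\Lambda_n}$ coincides, on each pair of positions $(a,b)$, with the matrix $R(t_b,t_a)$ acting in the ordered basis $v_0\otimes v_0$, $v_0\otimes v_1$, $v_1\otimes v_0$, $v_1\otimes v_1$, and trivially on all other tensor factors; this is exactly $R_{a,b}(t_b,t_a)$. There is no substantive obstacle: the corollary is a direct repackaging of Proposition~\ref{startingpoint} into the $R$-matrix language, and the only things to check carefully are the orientation conventions (the strict inequality $\lambda_b>\lambda_a$ rather than $\lambda_a>\lambda_b$, and the arguments $(t_b,t_a)$ rather than $(t_a,t_b)$), both of which come out correctly from the preceding case analysis.
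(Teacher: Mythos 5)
Your proof is correct and fills in exactly the verification the paper leaves implicit: since Corollary~\ref{corstartingpoint} is stated as a direct reformulation of Proposition~\ref{startingpoint} (prefaced only by the sentence introducing the identification $S^e_\lambda\leftrightarrow v_\lambda$), the four-case check at positions $a=w(i)$, $b=w(i+1)$, matching the columns of $R(t_b,t_a)$ in the ordered basis $v_0\otimes v_0,\ v_0\otimes v_1,\ v_1\otimes v_0,\ v_1\otimes v_1$, is precisely the intended unwinding. Your identifications of $ws_iw^{-1}$ with the transposition $(a,b)$ and of the condition $\lambda_{w(i+1)}>\lambda_{w(i)}$ with $\lambda_b>\lambda_a$ are the right ones, and the sign conventions (arguments $(t_b,t_a)$ in $R$, entry $t_b-t_a$ in the $(0,1)$-case) come out correctly.
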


\begin{remark}
Corollary~\ref{corstartingpoint} can be viewed as a construction of the $R$-matrix in the spirit of the geometric construction of the $R$-matrices of the Yangians in \cite{MO}.  In light of Remark~\ref{weyl}, Proposition~~\ref{corstartingpoint} can be viewed as the equivariant behaviour of attracting manifolds when we cross a wall passing from one Weyl chamber to a neighboured one. In the non-equivariant version this wall-crossing collapses to a (boring) symmetric group action (permuting the factors of $V^{\otimes N}$).
\end{remark}

By the equivariant localisation theorem, $H_T^*(\Xn)\subset \Func(n)$ becomes an isomorphism after localisation at $\op{loc}$ with  {\it fixed point-basis} (over $\bP^{\op{loc}}$)  $\mathbbm{1}_ p\in  \Func(n)$, $p\in \Xn^T$,   given by  $\Eins_ p(p')=\delta_{p,p'}$. 

\subsection{Geometric Bethe basis}
To connect the algebraic setup from the previous sections to the geometric one, we have to make (a common) twist, which means we consider instead of the ordinary Schubert basis the twisted Schubert basis $\mathbb{B}_n$ of $H^*_T( \op{X})$ corresponding to the longest permutation 
\begin{eqnarray*}
\omega_0&=&
\begin{pmatrix}
1&2&\ldots&N \\
N&n-1&\ldots&1
\end{pmatrix}
\end{eqnarray*}
By {\it Stab 1} to {\it Stab 3} the twisted Schubert class $S^{\omega_0}_{({0}^k,{1}^n)}\in\mathbb{B}_n$   takes the value 
\begin{eqnarray}
\label{beta}
\beta_n&=&\prod_{1\leq i \leq k,\, k+1\leq j\leq N}(t_j-t_i)
\end{eqnarray}
on the fixed point $p_n\in \Xn ^T$ corresponding to the unique maximal element $\zeta=({0}^k,{ 1}^n)=(\underbrace{0,\ldots,0}_\text{k},\underbrace{1,\ldots,1}_\text{n})\in \Lambda_n$ and is $0$ on all other fixed points. The action \eqref{symactioneq} of the symmetric group in this basis is easy to calculate: 
\begin{lemma}
\label{sym2}
For any $i\in[N-1]$ and $\la\in\Lambda_n$ we have 
\begin{eqnarray}
{s}_i\cdot(S^{\omega_0}_\lambda)&=&
\begin{cases}
 S^{\omega_0}_\lambda-(t_{i+1}-t_{i}) S^{\omega_0}_{{\bf s}_{i}\cdot\lambda},& \text{if } {\bf s}_{i}\lambda>\lambda,\\
\omega_0\cdot S_{\omega_0\lambda},& \text{otherwise.}
\end{cases}
\end{eqnarray}
\end{lemma}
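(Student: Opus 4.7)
The plan is to reduce the computation of $s_i \cdot S^{\omega_0}_\lambda$ to an application of Proposition~\ref{startingpoint}, using Remark~\ref{important} to move the $S_N$-action onto the twist superscript and the lower index in a controlled way.

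First I would prove (or recall) the compatibility identity
\[
v \cdot S^{w}_{\lambda} \;=\; S^{vw}_{v\lambda}, \qquad v,w \in S_N, \;\lambda\in\Lambda_n,
\]
which is immediate from Remark~\ref{important}: evaluating both sides at $\mu$ and unravelling the definition \eqref{symactioneq} one finds
\[
(v\cdot S^{w}_{\lambda})(\mu) \;=\; {}^{vw}\!\bigl(S^{e}_{w^{-1}\lambda}((vw)^{-1}\mu)\bigr) \;=\; S^{vw}_{v\lambda}(\mu),
\]
using $(vw)^{-1}(v\lambda)=w^{-1}\lambda$. This already gives the second case of the lemma: since $\omega_0^2=e$, the formula yields $\omega_0\cdot S^{e}_{\omega_0\lambda}=S^{\omega_0}_{\lambda}$, so the claim in the ``otherwise'' branch reduces to showing that $s_i$ fixes $S^{\omega_0}_{\lambda}$ when $s_i\lambda\le\lambda$.

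Next I would apply the identity to $v=s_i$, $w=\omega_0$ and use the relation $s_i\omega_0=\omega_0 s_{N-i}$ (which comes from $\omega_0 s_j \omega_0^{-1}=s_{N-j}$) to get
\[
s_i \cdot S^{\omega_0}_{\lambda} \;=\; S^{s_i\omega_0}_{s_i\lambda} \;=\; S^{\omega_0 s_{N-i}}_{s_i\lambda}.
\]
Now Proposition~\ref{startingpoint}, applied with $w=\omega_0$ and with $s$ the simple transposition in position $N-i$, expresses $S^{\omega_0 s_{N-i}}_{s_i\lambda}$ in the $S^{\omega_0}$-basis: the relevant indices are $\omega_0(N-i)=i+1$ and $\omega_0(N-i+1)=i$, the conjugate $\omega_0 s_{N-i}\omega_0^{-1}=s_i$ sends $s_i\lambda$ back to $\lambda$, and the weight coefficient becomes $t_{\omega_0(N-i+1)}-t_{\omega_0(N-i)}=t_i-t_{i+1}=-(t_{i+1}-t_i)$.

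Finally I would translate the trigger condition in Proposition~\ref{startingpoint}: the hypothesis $(s_i\lambda)_{\omega_0(N-i+1)}>(s_i\lambda)_{\omega_0(N-i)}$ becomes $(s_i\lambda)_i>(s_i\lambda)_{i+1}$, equivalently $\lambda_{i+1}>\lambda_i$, which is precisely $s_i\lambda>\lambda$ in the dominance order on $\Lambda_n$. In the complementary case ($\lambda_i\ge\lambda_{i+1}$) Proposition~\ref{startingpoint} gives $S^{\omega_0 s_{N-i}}_{s_i\lambda}=S^{\omega_0}_{s_i(s_i\lambda)}=S^{\omega_0}_{\lambda}=\omega_0\cdot S^{e}_{\omega_0\lambda}$, as required. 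There is no genuine obstacle here; the only care needed is in keeping track of the two different actions of $S_N$ (on $\Lambda_n$ and on the variables $t_j$) and correctly applying $\omega_0$-conjugation to the index $N-i$ produced by Proposition~\ref{startingpoint}.
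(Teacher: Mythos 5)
Your argument is correct, and it genuinely differs from the paper's proof. The paper writes $S^{\omega_0}_\lambda=\omega_0\cdot S^e_{\omega_0\lambda}$ via Remark~\ref{important}, pushes the $s_i$ through the $\omega_0$ to get $\omega_0\cdot s_{N-i}\cdot S^e_{\omega_0\lambda}$, computes $s_{N-i}\cdot S^e_{\omega_0\lambda}$ directly from the Demazure-operator formula (Remark~\ref{Delta}), and then re-applies $\omega_0$. You instead promote Remark~\ref{important} to the general covariance $v\cdot S^w_\lambda=S^{vw}_{v\lambda}$, so that $s_i\cdot S^{\omega_0}_\lambda$ is immediately identified as a single twisted class $S^{\omega_0 s_{N-i}}_{s_i\lambda}$, and you then read off its $S^{\omega_0}$-expansion from the wall-crossing rule of Proposition~\ref{startingpoint}. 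Since Proposition~\ref{startingpoint} is itself established from the same Demazure computation, the two routes are logically equivalent, but yours is cleaner: the covariance identity $v\cdot S^w_\lambda = S^{vw}_{v\lambda}$ is stated once and for all, and the case split, the coefficient $t_{\omega_0(N-i+1)}-t_{\omega_0(N-i)}=t_i-t_{i+1}$, and the index $\omega_0 s_{N-i}\omega_0^{-1}(s_i\lambda)=\lambda$ all drop out mechanically from a result the paper has already proved. As a side benefit, your bookkeeping avoids the $N-i$ versus $N-i+1$ slip that appears in the paper's own displayed computation.
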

\begin{proof}
Note that $s_i\omega_0=\omega_0s_{N-i}$ and $\omega^{-1}_0=\omega_0$ holds.
Thus by Remark~\ref{important} ${s}_i\cdot S^{\omega_0}_\lambda={s}_i\cdot \omega_0\cdot S_{\omega_0(\lambda)}=(s_i \omega_0)\cdot S_{\omega_0\lambda}
=(\omega_0s_{N-i+1})\cdot S_{\omega_0(\lambda)}=\omega_0\cdot s_{N-i+1}\cdot S_{\omega_0(\lambda)}$. On the other hand, expressing \eqref{symactioneq} via the last part of Lemma~\ref{Demazurelemma}, gives
\begin{eqnarray*}
{\bf s}_{N-i}\cdot S_{\omega_0\lambda}&=&
\begin{cases}
 S_{\omega_0\lambda}-((t_{N+1-i}-t_{N-i})S_{{\bf s}_{N-i}\cdot\omega_0\lambda})& \text{if }\,{ s}_{N-i}\omega_0(\lambda)<\omega_0(\lambda),\\
 S_{\omega_0\lambda}&\text{otherwise.}
\end{cases}
\end{eqnarray*}
Thus we obtain
\begin{eqnarray*}
\omega_0\cdot s_{N-i+1}\cdot S_{\omega_0(\lambda)}
&=&\begin{cases}
\omega_0\cdot S_{\omega_0(\lambda})-(t_{i+1}-t_{i}) \omega_0\cdot S_{\omega_0\cdot { s}_{i}(\lambda)}& \text{if }\,{s}_{N-i}\omega_0(\lambda)<\omega_0(\lambda),\\
\omega_0\cdot S_{\omega_0(\lambda})& \text{otherwise}\\
\end{cases}\\
&=&\begin{cases}
 S^{\omega_0}_\lambda-(t_{i+1}-t_{i}) S^{\omega_0}_{{s}_{i}(\lambda)}& \text{if }\,{ s}_{i}\lambda>\lambda,\\
\omega_0\cdot S_{\omega_0(\lambda)}& \text{otherwise}.\\
\end{cases}
\end{eqnarray*}
This proves the claim
\end{proof}
Comparing it with \eqref{sym} we obtain the following result: consider the localisations $\mV_N^{\op{loc}}$ from \eqref{loc} and $H_T^*(\Xn)^{\op{loc}}=\Func(n)^{\op{loc}}$ as $\bP^{\op{loc}}$-modules. 
 
\begin{corollary} 
\label{Bethe}
Sending the standard basis element $v_\la$ to the twisted Schubert basis element  $S^{\omega_0}_\lambda$ defines an isomorphism of 
$\bP^{\op{loc}}$-modules 
\begin{eqnarray}
\label{DefPsi}
\Psi:\quad\mV_N^{\op{loc}}&\cong& \bigoplus_{n=0}^N H_T^*(\Xn)^{\op{loc}}
\end{eqnarray} 
Under this isomorphism 
$\Psi$, the normalised Bethe vectors are sent to the fixed point basis vectors, in formulae $\Psi(\bb_\la)=\Eins_{\bw_\la}$, where $\bw_\la$ is the torus fixed point corresponding to $\la$.
\end{corollary}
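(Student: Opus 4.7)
The plan is to reduce the identification $\Psi(\bb_\la) = \Eins_{\bw_\la}$ to a single base case using the symmetric group equivariance of both sides.

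\textbf{Step 1 (base case).} Consider the distinguished $\la_0 = (0^k,1^n) \in \Lambda_n$, for which $I_0 = \{1,\ldots,k\}$. By Lemma~\ref{hvector} the unnormalised Bethe vector is $\bb(\eta) = v_{\la_0}$, and the normalisation factor in Definition~\ref{normalizedBethe} is $\prod_{1\le b \le k,\, k+1 \le a \le N}(t_a - t_b)^{-1} = \beta_n^{-1}$ with $\beta_n$ as in \eqref{beta}. Hence $\bb_{\la_0} = \beta_n^{-1} v_{\la_0}$ and $\Psi(\bb_{\la_0}) = \beta_n^{-1} S^{\omega_0}_{\la_0}$. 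Since $\la_0$ is the unique element maximal for $\ge_{\omega_0}$, condition \emph{Stab1} of Lemma~\ref{unique} forces $S^{\omega_0}_{\la_0}$ to be supported only at $p_n = \mC^{\la_0}$, and condition \emph{Stab2} (giving exactly \eqref{beta}) yields its value there as $\beta_n$. Therefore $\Psi(\bb_{\la_0}) = \Eins_{p_n} = \Eins_{\bw_{\la_0}}$.

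\textbf{Step 2 (main step: intertwining the $S_N$-actions).} I would next show that $\Psi$ intertwines the $\bH$-action on $\mV_N^{\op{loc}}$ coming from the ${\bf s}_i$ of \eqref{defsi} with the natural $S_N$-action \eqref{symactioneq} on $\bigoplus_n H_T^*(\Xn)^{\op{loc}}$. Both actions satisfy the semilinearity rule $s_i(f\cdot v) = {}^{s_i}\!f \cdot s_i(v)$, so it suffices to check the equality on the respective $\bP^{\op{loc}}$-bases. But the formula \eqref{sym} for ${\bf s}_i(v_\la)$ and the formula from Lemma~\ref{sym2} for $s_i \cdot S^{\omega_0}_\la$ match verbatim after writing $(t_i - t_{i+1}) = -(t_{i+1} - t_i)$ and noting that $\la_i < \la_{i+1}$ is equivalent to $s_i\la > \la$. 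This gives the equivariance $\Psi({\bf s}_i v) = s_i \cdot \Psi(v)$.

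\textbf{Step 3 (conclusion).} Since $S_N$ acts transitively on $\Lambda_n$, write $\la = w(\la_0)$ for some $w \in S_N$ and decompose $w = s_{i_1} \cdots s_{i_r}$. Iterating Proposition~\ref{SNonBethe} gives $\bb_\la = {\bf s}_{i_1}\cdots{\bf s}_{i_r}(\bb_{\la_0})$, and combining Step~1 with Step~2 yields $\Psi(\bb_\la) = s_{i_1}\cdots s_{i_r} \cdot \Eins_{\bw_{\la_0}}$. A direct computation from \eqref{symactioneq} shows $s_i \cdot \Eins_p = \Eins_{s_i \cdot p}$ for any $T$-fixed point $p$, so induction on $r$ gives $\Psi(\bb_\la) = \Eins_{w \cdot \bw_{\la_0}} = \Eins_{\bw_\la}$, as required.

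The main obstacle is Step~2, which reconciles the algebraic symmetric group action obtained from the $R$-matrix with the geometric Weyl group action on equivariant cohomology, specifically in the $\omega_0$-twisted Schubert basis. Lemma~\ref{sym2} is precisely the ingredient needed to make the two formulas coincide, so once that lemma is in hand the verification is just bookkeeping; Step~1 is then a direct computation using the explicit form of $\beta_n$, and Step~3 is a routine transport of structure.
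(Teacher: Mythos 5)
Your proof is correct and follows exactly the same strategy as the paper's (base case via Lemma~\ref{hvector}, $S_N$-equivariance of $\Psi$ via matching \eqref{sym} with Lemma~\ref{sym2}, then transitivity of the $S_N$-action together with Proposition~\ref{SNonBethe}); you simply spell out the base-case normalisation computation and the identification $S^{\omega_0}_{\la_0}=\beta_n\Eins_{p_n}$ that the paper leaves implicit.
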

\begin{proof}
The first claim follows directly from the definitions. By \eqref{sym} and Lemma~\ref{sym2} the isomorphism $\Psi$ intertwines the two $S_N$-actions. But then the statements follow from Lemma~\ref{hvector} as the starting point together with Proposition~\ref{SNonBethe}, since \eqref{symactioneq} becomes the obvious permutation action of $S_N$ on torus fixed points. 
\end{proof}

\section{The geometric Yang-Baxter algebra via convolution in cohomology}
Having established (via Corollary~\ref{corstartingpoint}) a homology interpretation of the $R$-matrix $R(x,t)$,  we will provide a similar interpretation of the corresponding Yang-Baxter algebra $\op{YB}_N$.

\subsection{Some combinatorics of torus fixed points in partial flag varieties}
In order to state the main theorem we will need not only the Grassmannians $\Xn$ but also their generalisation, the partial flag varieties. For any positive integer $m$ denote by $ \mathcal {P}_m(N)$ the set of compositions  of $N$ with $m$ parts (possibly zero), e.g.\  $\mathcal {P}_2(2)=\{(2,0),(1,1),(0,2)\}$. For $\bd \in \mathcal P_m(N)$ let $ \Xd$ be the corresponding {\it partial flag variety}, i.e.\   the projective variety given by all flags
\begin{eqnarray}
\label{flags}
\op{F}&=& (\{0\} \subset F_1 \subset \ldots \subset F_{m-1} \subset F_m = \mathbb C^N),
\end{eqnarray}
such that $\op{dim} (F_{i+1}/F_i) = d_i$. That is $ \Xd=G/P$ where $P$ is the parabolic subgroup given by all matrices $(a_{i,j})_{i,j}$ with $a_{i,j}=0$ if $j<\mu_r<i$ for some $r$, and $T$ still acts on it.  
There are $m$ natural $T$-equivariant vector bundles $M_1, \ldots,M_m$ over $\Xd$ of rank $d_1, \ldots, d_m$ respectively; the fibre of $M_i$ at  $\op{F}$ is the  vector space $F_{i+1}/F_i$. These bundles, as well as the tangent bundle $\op{T}(\Xd)$, are $T$-equivariant and thus have equivariant Chern classes which we will denote by $\mathfrak{c}_i(M)$, $\mathfrak{c}_i(\op{T}(\Xd))$.\\

We recall some well-known facts we will need later. The main reference for us is \cite{Anderson}.
The set of $T$-fixed points $\Xd^T \subset \Xd$ is finite. To describe them explicitly let $\op{F}^e_\bd$ be the standard flag where $F_i$ is spanned by $\{e_1,\ldots,e_{\mu_i} \}$, where  $\mu_i = \sum_{j=1}^i d_i$. Clearly,  $\op{F}^e_\bd$ is fixed under the action of $T$. The symmetric group $S_N$ acts on $\Bbb C^N$ by permuting the elements of the standard basis, so it acts on $\Xd$. For any $w \in S_N$, the partial flag $\op{F}^w_\bd= w(\op{F}^e_\bd)$ belongs to $\Xd^T$, and in this way we get all fixed points. The stabiliser of $\op{F}^e_\bd$ coincides with the Young subgroup $S_\bd=S_{d_1}\times S_{d_2}\times \ldots\times S_{d_m}$, so the assignment $w \mapsto \op{F}^w_\bd$ induces a bijection between
$S_N/S_\bd$ and $\Xd^T$. In other words, a fixed point can be described as a partitioning $\bw$ of the set $[N]$ into $m$ disjoint subsets $\bw_i$ of cardinalities $d_i$ such that $F_{i}/F_{i-1}$ is spanned by the images of the elements $e_l$, $l\in \bw_i$, see \cite{Fulton} for more details. 

\begin{definition} 
\label{notafixed}
As a special case, a fixed point for $\Xn$ is determined by $\bw_1$ (exactly the positions of  $1$'s in the corresponding $\{0,1\}$-word),  and then of $\op{X}_{n,1,{n-k-1}}$ by a tuple $\bw_z=(\bw_1,z)$ for some $z\in [N], l\notin\bw_1$, and in $\op{X}_{n+1}$ then by the union $\bw_1\cup\{z\}$ which we abbreviate as $\bw_1\cup z$. For $S\subset [N]$ let $\ov{S}$ be the complement in $[N]$.  
\end{definition}
For instance if $N=4$, then $X_2$ has six fixed points $\langle e_i, e_j\rangle$, $1\leq i<j\leq 4$ corresponding to the partitioning $\bw=\{i,j\}\cup \overline{\{i,j\}}$ and determined by $\bw_1=\{i,j\}$. A fixed point in $X_{2,1,1}$ is then given by $\langle e_i, e_j\rangle\subset\langle e_i, e_j,e_z\rangle$ for some distinct $i,j,k\in\{1,2,3,4\}$ and so determined by the tuple $(\{i,j\},z)$.

 \subsection{The geometric Yang-Baxter algebras}
For $\bw\in\Xd^T$ let $\inc_{\bw} : \{\op{pt}\} \hookrightarrow \Xd$, $\op{pt}\mapsto\bw$ be the $T$-equivariant inclusion. We have the following standard fact, see e.g. \cite[Example 2.1]{Anderson}.

\begin{lemma}\label{fixedpts}
Let $\bw=(\bw_1,\ldots,\bw_m)\in\Xd^T$. Then the  Euler class $\eu_{\bw}$ of the fixed point $\bw$ is 
\begin{eqnarray*}
\eu_{\bw}:=\inc_{\bw}^{*}(\mathfrak{c}_{\op{top}}(\op{T}(X_\bd)))& =& \prod\limits_{m\geq i>j\geq 1} \prod\limits_{a\in \bw_i,\, b\in \bw_j} (t_a - t_b),
\end{eqnarray*}
where $\op{top}$ is twice the dimension of $X_{\bd}$ and $\inc_{\bw}^{*}(\mathfrak{c}_j(M_i))= \sigma_j(t_{j_1},\ldots,t_{j_{d_i}}),$ where $\bw_i=\{j_1,\ldots ,j_{d_i}\}$ and $\sigma_j$ denotes the $j$th elementary symmetric function.
\end{lemma}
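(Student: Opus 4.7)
The plan is to compute the $T$-equivariant weights of the tangent space and of the tautological bundles at the fixed point $\bw$, and then invoke the elementary fact that the Chern classes of a $T$-equivariantly split bundle are the elementary symmetric polynomials in its weights, with the top Chern class being their product. This reduces the whole calculation to identifying the list of $T$-weights of $\op{T}_\bw(\Xd)$.

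First, I would recall the standard $T$-equivariant tangent bundle decomposition
\[
\op{T}(\Xd) \;\cong\; \bigoplus_{1 \le j < i \le m} M_j^{*}\otimes M_i,
\]
which can be derived from the presentation $\Xd = G/P$: the tangent space at the standard flag is $\mathfrak{g}/\mathfrak{p}$, whose $T$-weights are precisely the roots $t_a - t_b$ such that $a$ and $b$ lie in distinct blocks with $a$ in a strictly later block than $b$. Translating by $w\in S_N$ yields the corresponding identification at any fixed point. Alternatively one argues iteratively via the exact sequences governing the relative tangent bundles of the forgetful projections $\Xd \to X_{\bd'}$. Next, at the fixed point $\bw$, the fibre of $M_i$ is $F^{\bw}_i/F^{\bw}_{i-1} = \langle e_a : a\in \bw_i\rangle$, which as a $T$-module splits into one-dimensional weight spaces of weights $t_a$, $a\in\bw_i$. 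Consequently the fibre of $M_j^{*}\otimes M_i$ at $\bw$ is $T$-equivariantly split with weights $t_a - t_b$ for $a\in\bw_i$, $b\in\bw_j$.

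Combining these two observations yields both claims at once. The splitting principle applied to the fibre of $M_i$ at $\bw$ gives $\inc_\bw^{*}(\mathfrak{c}_j(M_i))$ as the $j$-th elementary symmetric polynomial in the Chern roots $t_a$, $a\in\bw_i=\{j_1,\ldots,j_{d_i}\}$, which is exactly the stated formula. For the Euler class, the top Chern class of $\op{T}(\Xd)$ restricted to $\bw$ is by definition the product of all Chern roots appearing in $\op{T}_\bw(\Xd)$, which by the tangent bundle decomposition above equals $\prod_{i>j}\prod_{a\in\bw_i,\,b\in\bw_j}(t_a-t_b)$, as asserted.

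The only delicate point is the tangent bundle decomposition itself, in particular the ordering convention $i>j$ versus $j>i$, which controls whether the weights are $t_a-t_b$ or $t_b-t_a$ and thus the precise form of the product. I would verify the sign on a small example such as $\Xd=\mathbb{P}^1$ (where $\op{T}(\mathbb{P}^1)\cong \mathcal{O}(2)$ has $T$-weights $t_2-t_1$ and $t_1-t_2$ at the two fixed points respectively) and otherwise refer to the standard treatment in \cite{Anderson} or \cite[Ch.~8]{Fulton}.
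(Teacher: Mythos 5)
Your proof is correct, and it is the standard argument: the paper does not spell out a proof but simply cites \cite[Example 2.1]{Anderson}, and the content of that reference is exactly the tangent-bundle decomposition $\op{T}(\Xd)\cong\bigoplus_{j<i}M_j^*\otimes M_i$ together with the identification of the $T$-weights of $M_i$ at $\bw$ as $\{t_a : a\in\bw_i\}$, which you reproduce. Your sanity check on $\mathbb{P}^1$ is consistent with the sign convention in the lemma (Euler class $\prod_{i>j}\prod_{a\in\bw_i,\,b\in\bw_j}(t_a-t_b)$, e.g.\ $t_2-t_1$ at the standard flag).
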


For $m=1$, the partial flag variety is a point and for $m\leq 2$ we obtain the Grassmannian varieties $\Xn$. The inclusion $\incl: \op{X}^T\hookrightarrow\op{X}$ induces a monomorphism $\incl^*$ on equivariant cohomology which is exactly $\op{res}$. Moreover $M_1$ is the tautological bundle $\mathcal T_n$, and $M_2$ the quotient bundle $\mathcal Q_n$.
 For the three step flag variety $\op{X}_{(n,1,N-n-1)}$ we have a diagram
 \begin{eqnarray}
\label{correspondence}
\xymatrix{
&  \op{X}_{(n,1,N-n-1)}\ar[dl]_{\pi_1}\ar[dr]^{{\pi_2}}\\
\Xn && \op{X}_{n+1}
}
\end{eqnarray}
 
where $\pi_1$ and  ${\pi_2}$ are the obvious proper projection maps. It defines convolution operators acting on $H^*_T(\op{X})$, where $\op{X}=\bigcup_{n=0}^N \Xn$ is the disjoint union of all Grassmannians. The {\it convolution algebra} $\mathcal C$  is the subalgebra of endomorphisms of $H^*_T(\op{X})$ generated by the endomorphisms $ {\pi_2}_{*}( \alpha\cdot)\pi_1^*$ and ${\pi_1}_*(\alpha\cdot) {\pi_2^*}$ for $\alpha \in H^*_T( \op{X}_{(n,1,N-n-1)})$. In particular, $\mathcal{C}$ contains the following crucial elements,
\begin{eqnarray}
\label{bc}
b_n={\pi_2}_{*}{\pi_1^*}\quad\text{and}\quad c_{n}={\pi_1}_{*} {\pi_2^*}.
\end{eqnarray} 

\begin{definition}
\label{ABCDgeom}
Introduce the generating functions, for $0\leq n\leq N$,
\begin{eqnarray*}
\begin{array}[t]{lclclcl}
\mathcal{A}_n(x)&=&\sum\limits_{i=0}^{n}\mathfrak{c}_i(\mathcal{T}_n)x^{n-i}&&
\mathcal{A}_n^{\prime}(x)&=&\sum\limits_{i=0}^{k} (-1)^i\mathfrak{c}_i(\mathcal Q_n)x^{k-i},\\
\mathcal{B}_n(x)& =& b_{n}\mathcal{A}_{n}(x),&&\mathcal{B}^{\prime}_n(x)&=&\mathcal{A}^{\prime}_{n+1}(x)b_n,\\
\mathcal{C}_n(x) &=&  \mathcal{A}_{n}(x)c_n,&&\mathcal{C}^{\prime}_n(x)&=&c_n \mathcal{A}^{\prime}_{n+1}(x),\\
\mathcal{D}_{n+1}(x)&=&b_{n}\mathcal{A}_{n}(x)c_n,&&\mathcal{D}^{\prime}_{n}(x)&=&c_{n}\mathcal{A^{\prime}}_{n+1}(x)b_n,
\end{array}
\end{eqnarray*}
with the convention that for $n+1>N$ the corresponding operators are $0$.
\end{definition}

\begin{definition} The {\it geometric Yang-Baxter algebras} $\mathcal{YB}_N$  and $\mathcal{YB}'_N$ are defined to be the subalgebras of $\mathcal{C}$ generated by the coefficients of $\mathcal{A}_n(x),\mathcal{B}_n(x), \mathcal{C}_n(x),\mathcal{D}_n(x)$, 
respectively  of $\mathcal{A}^{\prime}_n(x),\mathcal{B}^{\prime}_n(x),\mathcal{C}^{\prime}_n(x),\mathcal{D}^{\prime}_n(x)$, where $0\leq n\leq N$.
\end{definition}
\begin{theorem} (Geometric Yang-Baxter algebra) \label{main} 
The algebras  $\mathcal{YB}_N$  and $\mathcal{YB}'_N$ are isomorphic to the Yang-Baxter algebras  $\op{YB}_N$  and  $\op{YB}_N'$ respectively from Definition~\ref{DefYBA}.
\end{theorem}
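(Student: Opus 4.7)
The plan is to use the isomorphism $\Psi$ of Corollary~\ref{Bethe} to identify both algebras with the same subalgebra of $\op{End}_{\bP^{\op{loc}}}\bigl(\bigoplus_{n=0}^{N} H^*_T(\Xn)^{\op{loc}}\bigr)$, recalling that under $\Psi$ the normalised Bethe basis $\{\bb_\la\}$ of $\mV_N^{\op{loc}}$ becomes the fixed-point basis $\{\Eins_{\bw_\la}\}$. We define a candidate map $\op{YB}_N \to \mathcal{YB}_N$ by sending each abstract generator to the corresponding geometric one ($\hat A^{(i)} \mapsto $ $i$-th coefficient of $\bigoplus_n \mathcal A_n(x)$, and similarly for $\hat B^{(i)}, \hat C^{(i)}, \hat D^{(i)}$), and verify that $\Psi$ intertwines each pair. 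This makes the map well-defined and surjective; injectivity follows because, via $\Psi$, the action of $\op{YB}_N$ on $\mV_N^{\op{loc}}$ has image exactly $\mathcal{YB}_N$.

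The match $A \leftrightarrow \mathcal A$ is immediate: by \eqref{BetheEV} the operator $A(x)$ is diagonal on the Bethe basis with eigenvalue $\prod_{j \in I_1}(x+t_j)$ where $I_1 = \{i : \la_i = 1\}$; by Lemma~\ref{fixedpts} the Chern class $\mathcal A_n(x)$ is diagonal on the fixed-point basis with eigenvalue $\prod_{j \in \bw_1}(x+t_j)$. These coincide under $\la \leftrightarrow \bw_\la$.

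For the off-diagonal generators we compute both sides explicitly. On the geometric side, Atiyah--Bott equivariant localisation applied to the correspondence \eqref{correspondence}, combined with Lemma~\ref{fixedpts}, gives
\[
b_n \Eins_{\bw} = \sum_{z \notin \bw_1} \frac{\Eins_{\bw_1 \cup z}}{\prod_{b \in \bw_1}(t_z - t_b)}, \qquad
c_{n-1} \Eins_{\bw} = \sum_{z \in \bw_1} \frac{\Eins_{\bw_1 \setminus z}}{\prod_{a \in \ov{\bw_1}}(t_a - t_z)},
\]
from which the action of $\mathcal B_n$, $\mathcal C_{n-1}$ and $\mathcal D_n = b_{n-1}\mathcal A_{n-1} c_{n-1}$ on the fixed-point basis is straightforward. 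On the algebraic side, unwinding Definition~\ref{normalizedBethe} and using that the $C$-operators commute, we obtain for $z \in I_1$
\[
C(-t_z)\,\bb_\la \;=\; \frac{\prod_{a \in I_1 \setminus z}(t_a - t_z)}{\prod_{b \in I_0}(t_z - t_b)}\,\bb_{\la'(z)},
\]
where $\la'(z)$ is $\la$ with the letter at position $z$ flipped from $1$ to $0$. Since $C(x)\bb_\la$ is a polynomial in $x$ of degree at most $|I_1|$, Lagrange interpolation from the values $x = -t_z$, $z \in I_1$, recovers $C(x)\bb_\la$ in general. The action of $B(x)$ is treated analogously, using the YB relation $A(x_2)B(x_1) - A(x_1)B(x_2) = (x_2 - x_1)B(x_2)A(x_1)$ together with the eigenvalue formula for $A$; the action of $D(x)$ then follows from the remaining relations in Proposition~\ref{PropYBA}.

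The main obstacle is precisely this off-diagonal matching: while the $A$-match is a one-line eigenvalue comparison, verifying that the geometric convolution formulas for $\mathcal B$, $\mathcal C$, $\mathcal D$ coincide under $\Psi$ with the Bethe-basis formulas for $B$, $C$, $D$ requires careful bookkeeping, especially of the sign conventions arising from the orientations chosen for equivariant Euler classes and pushforwards in \eqref{correspondence}. For the primed pair $(\op{YB}'_N, \mathcal{YB}'_N)$ the same strategy applies, starting instead from the dual Bethe Ansatz (with $B'$-operators on $v_{(0,\ldots,0)}$ in place of $C$-operators on $v_{(1,\ldots,1)}$) and replacing $\mathcal A_n$ by $\mathcal A'_n$.
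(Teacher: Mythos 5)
Your strategy — pass through $\Psi$, match each geometric generator with the corresponding algebraic one on a natural basis — is the same as the paper's (its Theorem~\ref{match}). The main tactical difference is that you do the bookkeeping entirely in the fixed-point/Bethe basis (plus Lagrange interpolation for $C$), whereas the paper reduces to the extremal weight vector by $S_N$-equivariance and then computes in the $\omega_0$-twisted Schubert basis by a support argument. Both are legitimate; your interpolation argument for $C$ is sound (in fact the degree of $C(x)$ on the weight-$n$ space is $|I_1|-1$, not $|I_1|$, which is even more favorable than you state).

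There is however a genuine gap in your treatment of $D$. You say "the action of $D(x)$ then follows from the remaining relations in Proposition~\ref{PropYBA}", but the YB relations do not determine $D$ from $A,B,C$ alone. The paper's Lemma~\ref{Doperator} makes this precise: the relation $\mathcal B(x_1)\mathcal C(x_2)-\mathcal B(x_2)\mathcal C(x_1)=(x_1-x_2)(\mathcal D(x_1)\mathcal A(x_2)-\mathcal A(x_2)\mathcal D(x_1))$ recovers the off-diagonal entries of $\mathcal D$ (since $\mathcal A$ has distinct eigenvalues in the fixed-point basis), and the relation with $\mathcal C$ recovers the diagonal entries of $\mathcal D$ on $H^*_T(\Xn)$ for $n<N$ because each fixed point in $\Xn$ with $n<N$ has a nonzero row in $\mathcal C$. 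But that argument fails on the top Grassmannian $H^*_T(\op{X}_N)$, where the $\mathcal C$-row is empty; there is a free parameter left (the scalar by which $D(x)$ acts on $H^*_T(\op{X}_N)$). So you must additionally compute and compare the restrictions of $D(x)$ and $\mathcal D(x)$ to $H^*_T(\op{X}_N)$ — the paper does this in one line (both are the identity on that component) — before you can legitimately conclude $D=\mathcal D$. Related to this, if you want to use the YB relations to pin down $\mathcal B$ and $\mathcal D$ geometrically, you would also need to know that the geometric operators actually satisfy those relations (the paper's Proposition~\ref{proprel}); your proposal does not mention checking this.

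The $B$-step is also underspecified. The relation $B(x_1)A(x_2)=B(x_2)A(x_1)$ applied to a Bethe vector $\bb_\la$ does show $B(x)\bb_\la$ is $\alpha_\la(x)\cdot v_\la$ for an $x$-independent vector $v_\la$, and the geometric side has the same form since $\mathcal B_n(x)=b_n\mathcal A_n(x)$ — so the $x$-dependence matches. But identifying the two $x$-independent vectors still needs an explicit calculation (the paper does a short Schubert-class support argument for this), and the relation you cite does not supply it. Finally, the sentence "injectivity follows because ... the action of $\op{YB}_N$ on $\mV_N^{\op{loc}}$ has image exactly $\mathcal{YB}_N$" only restates surjectivity; the paper is also somewhat informal on this point, so it is less serious, but as written it is circular.
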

We will prove the theorem for the algebra $\mathcal{YB}_N$, the algebra $\mathcal{YB}'_N$ can be treated similarly. We first check the last two relations listed in Proposition~\ref{PropYBA} for the geometric operators $\mathcal A(x), \mathcal B(x), \mathcal C(x), \mathcal D(x)$ and then identify them with the algebraic ones from Section~\ref{secYBA}. Finally we show hat this defines the operator $\mathcal D(x)$ uniquely, matching it with the algebraic operator $D(x)$ in Lemma~\ref{Doperator}. This will finish the proof at the end of the section. 

\subsection{Some relations}
Checking the last two identities from (\ref{PropYBA}) will be a standard fixed point calculation which goes back to the work of Atiyah and Bott \cite[(3.8)]{AB}. In particular we need the Atiyah-Bott-Berline-Vergne integration formula. Let $X$ and $Y$ be compact non-singular algebraic $T$-varieties and $f :X \rightarrow Y$ be a $T$-equivariant morphism, then we have for all $\alpha \in H^{*}_T(X)$, (using notation from Definition~\ref{notafixed} and the restriction $\op{res}$ to fixed points, 
\begin{eqnarray}
\label{AB}
f_*(\alpha) &=&\op{res}^{-1}\left(\sum_{y\in Y^T} \sum_{\{x\in X^T\,|f(x)=y\}}\frac{\eu_y}{\eu_x}\,\,\inc_x^{*}\,a\right).
\end{eqnarray}

We apply this to the maps ${\pi_1}$ and ${\pi_2}$ from \eqref{correspondence}.  The involved ratios of the Euler classes are
\begin{eqnarray*}
\frac{\eu_{{\bw}}}{\eu_{{\bw}_z}}=\prod_{j\in \ov{{\bw}_1\cup z}}(t_z-t_j)^{-1}
&\text{and}&
\frac{\eu_{{\bw}_1\cup z}}{\eu_{{\bw}_z}}=\prod_{i\in  {\bw}_1}(t_i-t_z)^{-1}.
\end{eqnarray*}
The following explicitly computes the operators from Definition~\ref{ABCDgeom}.
\begin{proposition}\label{operators} In the fixed point basis the following holds for $\bw\in\Xn^T$,  $z\in \ov{\bw}$, and $z'\in\bw$. 
\begin{enumerate}[1.)]
\item The operator $\mathcal A(x)$ is represented by a diagonal matrix with the eigenvalues
\begin{eqnarray}
\label{Aeigen}
\mathcal A_n(x)(\Eins_{\bw})&=&\prod_{j\in {\bw}}(x+t_j)\Eins_{\bw}.
\end{eqnarray}
\item The $({\bw}_1\cup z,\bw_1)$-matrix entry of $b_n$ is $\prod_{j\in {\bw_1}}(t_{j} - t_z)^{-1}$, the $({\bw}_1,{\bw}_1\setminus z')$-matrix entry of $c_n$ equals $\prod_{j\in \ov{{\bw}_1}}(t_{z'}-t_j)^{-1}$. All other matrix entries are zero.
\end{enumerate}
\end{proposition}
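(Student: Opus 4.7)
The plan is to prove the three assertions separately, all three boiling down to the interplay between the localisation isomorphism (passing to the $T$-fixed point basis) and the Atiyah-Bott-Berline-Vergne formula \eqref{AB} applied to the correspondence \eqref{correspondence}. Throughout we identify $H^*_T(\op{X}_\bd)^{\op{loc}}$ with the space of functions on $\op{X}_\bd^T$ valued in $\bP^{\op{loc}}$, so that multiplication operators become diagonal in the fixed point basis and push/pull operators become computable sums over fibre fixed points.

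For part~1, I would observe that by definition $\mathcal A_n(x)$ is the total equivariant Chern polynomial of the tautological bundle $\mathcal T_n$, hence its action is by multiplication on $H^*_T(\Xn)$ and diagonal in the fixed point basis. At the fixed point $\bw\in\Xn^T$ the fibre $(\mathcal T_n)_{\bw}=\bigoplus_{j\in\bw}\mathbb{C}\cdot e_j$ splits as a $T$-representation into characters of weight $t_j$, $j\in\bw$, so by the splitting principle the equivariant Chern roots of $\mathcal T_n$ restrict to $\{t_j\mid j\in\bw\}$. This immediately yields $\inc_{\bw}^*\mathcal A_n(x)=\prod_{j\in\bw}(x+t_j)$, giving \eqref{Aeigen}. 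This part is the easiest; it is essentially a direct application of the second formula of Lemma~\ref{fixedpts} combined with the definition of $\mathcal A_n(x)$.

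For parts~2 and~3 the plan is purely bookkeeping with \eqref{AB}. First I would note that the pullback of $\Eins_{\bw}$ under a $T$-equivariant morphism $f$ is the indicator $\sum_{q\in f^{-1}(\bw)^T}\Eins_q$, since restriction to the fixed point basis commutes with $f^*$. Applying this to $\pi_1$: $\pi_1^*(\Eins_{\bw})=\sum_{z\in\overline{\bw}}\Eins_{(\bw,z)}$, supported on fixed points of $\op{X}_{(n,1,N-n-1)}$ with first component $\bw$. Now apply $(\pi_2)_*$ using \eqref{AB}: for each $\bw''\in\op{X}_{n+1}^T$ the contribution is a sum over $q\in(\pi_2^{-1}(\bw''))^T$ whose image under $\pi_1$ equals $\bw$; the only such $q$ is $(\bw,z)$ with $z$ determined by $\bw\cup z=\bw''$, forcing $\bw''=\bw\cup z$ for some $z\in\overline{\bw}$. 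The resulting matrix entry is $\eu_{\bw\cup z}/\eu_{(\bw,z)}$, which by the Euler class ratios stated immediately before the proposition equals $\prod_{i\in\bw}(t_i-t_z)^{-1}$, proving part~2. The argument for $c_n=(\pi_1)_*\pi_2^*$ is entirely symmetric: pulling $\Eins_{\bw_1}$ (with $\bw_1\in\op{X}_{n+1}^T$) back via $\pi_2$ gives $\sum_{z'\in\bw_1}\Eins_{(\bw_1\setminus z',z')}$, and then $(\pi_1)_*$ produces the coefficient $\eu_{\bw_1\setminus z'}/\eu_{(\bw_1\setminus z',z')}=\prod_{j\in\overline{\bw_1}}(t_{z'}-t_j)^{-1}$ at the row $\bw_1\setminus z'$, which is the asserted entry of $c_n$.

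The main obstacle, as usual with AB localisation calculations of this kind, is not conceptual but combinatorial: one has to enumerate the fixed points in $\pi_2^{-1}(\bw'')^T$ (respectively $\pi_1^{-1}(\bw')^T$) carefully and verify that $\pi_1^*(\Eins_\bw)$ (respectively $\pi_2^*(\Eins_{\bw_1})$) is supported on \emph{exactly one} of them, so that no unexpected cross terms appear. Once this support analysis is done, matching the Euler class quotients with the two formulae displayed just before the proposition is immediate. The vanishing of all other matrix entries then follows automatically, since the fibre fixed point over a target outside the prescribed shape contains no point in the support of the pullback.
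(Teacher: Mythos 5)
Your proof is correct and follows essentially the same approach as the paper: restriction of the Chern classes of $\mathcal T_n$ to fixed points via Lemma~\ref{fixedpts} for part~1, and enumeration of fibre fixed points in $\op{X}_{(n,1,N-n-1)}$ together with the Atiyah--Bott--Berline--Vergne formula~\eqref{AB} for part~2. The paper's proof is more terse but rests on exactly these two observations; your version spells out the bookkeeping (pullback of indicator functions, matching of Euler class quotients) that the paper leaves implicit.
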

\begin{proof}
By Lemma~\ref{fixedpts}, the restriction of $c_i(\mathcal T_n)$ (respectively  $c_i(\mathcal Q_n)$) to ${\bw}$ is equal to the $i$th elementary symmetric function $\sigma_i$ in the variables $t_j$, $j\in {\bw}_1$ (respectively $j\in \ov{\bw_1}$). Thus \eqref{Aeigen} holds. If $\bw_1$ denotes a fixed point in $\Xn^T$, then ${\pi_1}^{-1}(\bw_1)\cap \op{X}_{(n,1,N-n-1)}=\{ \bw_1\cup z\mid z\in\ov{\bw}_1\}$ and thus the asserted matrix coefficients ${\pi_2}_*{\pi_1^*}$ follow from \eqref{AB}, and similarly for  ${\pi_1}_*{\pi_2^*}.$ \end{proof}

\begin{proposition}
\label{proprel}
The following identities hold
\begin{enumerate}[1.)]
\item \label{idone} $\mathcal C(x_2)\mathcal B(x_1)-\mathcal C(x_1)\mathcal B(x_2) =(x_1-x_2)(\mathcal A(x_1)\mathcal D(x_2)-\mathcal D(x_2)\mathcal A(x_1)),$
\item \label{idtwo}
$\mathcal D(x_1)\mathcal C(x_2) =\mathcal D(x_2)\mathcal C(x_1),$
\item \label{idthree}
$\mathcal C(x_1)\mathcal D(x_2)-\mathcal C(x_2)\mathcal D(x_1) =(x_1-x_2)\mathcal D(x_2)\mathcal C(x_1)$.
\end{enumerate}
\end{proposition}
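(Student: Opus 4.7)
The plan is to verify all three identities by direct computation of matrix entries in the fixed point basis $\{\Eins_\bw\}$, using the explicit formulas from Proposition~\ref{operators}. In this basis $\mathcal{A}(x)$ acts diagonally via $a_\bw(x) := \prod_{j\in\bw}(x+t_j)$, while $b_n$ and $c_n$ each change the underlying set by one element, so every operator $\mathcal{B}, \mathcal{C}, \mathcal{D}$ has a completely explicit matrix whose nonzero entries are rational in the $t_i$'s and polynomial in the spectral parameter.

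For identity (1), the strategy is to exploit the fact that the right-hand side is a commutator with the diagonal operator $\mathcal{A}$: its $(\bw',\bw)$-entry equals $(a_{\bw'}(x_1) - a_\bw(x_1))\mathcal{D}(x_2)_{\bw',\bw}$, which vanishes for $\bw = \bw'$. Accordingly I first check that the diagonal entries of $\mathcal{C}(x_2)\mathcal{B}(x_1) - \mathcal{C}(x_1)\mathcal{B}(x_2)$ vanish: the only round-trip paths are $\bw \to \bw \cup z \to \bw$, and since $\mathcal{A}(x_1)$ is applied before $b_n$ while $\mathcal{A}(x_2)$ is applied after $c_n$, the diagonal entry factors as $a_\bw(x_1)a_\bw(x_2)$ times a sum independent of $x_1, x_2$, hence is manifestly symmetric. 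For off-diagonal entries $\bw' = (\bw\setminus z')\cup z$ with $z'\in\bw$, $z\in\ov\bw$, $z\neq z'$, the only contributing path passes through $\bw\cup z$, and the elementary identity
\[
(x_1+t_{z'})(x_2+t_z) - (x_2+t_{z'})(x_1+t_z) = (x_1-x_2)(t_z - t_{z'})
\]
produces the prefactor $(x_1-x_2)$. A short bookkeeping calculation then matches the remaining rational factor with $(t_z-t_{z'})\mathcal{D}(x_2)_{\bw',\bw}$, noting that $a_{\bw'}(x_1) - a_\bw(x_1) = (t_z - t_{z'})\prod_{j\in\bw\setminus z'}(x_1+t_j)$; the matrix entries of $\mathcal{D}(x_2)$ are in turn computed from the formula $\mathcal{D} = b\mathcal{A}c$ and the identity $\prod_{j\in\ov\bw}(t_{z'}-t_j)^{-1} = (t_{z'}-t_z)^{-1}\prod_{j\in\ov{\bw\cup z}}(t_{z'}-t_j)^{-1}$.

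Identities (2) and (3) are verified analogously. For (2) I would expand $\mathcal{D}(x_1)\mathcal{C}(x_2)\Eins_{\bw_1}$ as a sum over triples $(z',a,b)$ encoding the intermediate fixed points (first remove $z'\in\bw_1$ via $c_n$, then remove $a$ via the $c$ inside $\mathcal{D}(x_1)$, then add $b$ via the $b$ inside $\mathcal{D}(x_1)$), and show that after combining terms the total matrix entry is manifestly symmetric in $x_1, x_2$. Identity (3) combines the symmetrization mechanism from (2) with the factorization producing the prefactor $(x_1-x_2)$, exactly as in identity (1).

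The main technical obstacle will be the case analysis in identities (2) and (3): the sums over $(z',a,b)$ split according to whether the inserted element $b$ coincides with one of the removed elements $z'$ or $a$, or is disjoint from $\bw_1$ altogether, and each case contributes different terms that must be combined using partial-fraction identities of the kind used in Lemma~\ref{lemmaAC}. Once this combinatorics is properly organised, however, each of the three identities reduces to a finite check of a rational-function identity in the variables $t_i, x_1, x_2$.
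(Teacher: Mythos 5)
Your proposal follows the same fundamental approach as the paper: direct computation of matrix entries in the fixed point basis using the explicit formulas from Proposition~\ref{operators}. For identity (1), your organization is a touch cleaner than the paper's --- you observe up front that the right-hand side is a commutator with the diagonal operator $\mathcal{A}$, so its diagonal entries vanish for structural reasons, and then you check that the diagonal of the left-hand side also vanishes by the symmetry of $a_\bw(x_2)(c_n b_n)_{\bw,\bw}a_\bw(x_1)$ under swapping $x_1\leftrightarrow x_2$. The paper instead handles the diagonal case implicitly, via the factor $(t_z-t_{z'})$ in its common numerator formula (which vanishes when $z=z'$). The core algebraic mechanism, namely the factoring identity $(x_1+t_{z'})(x_2+t_z)-(x_2+t_{z'})(x_1+t_z)=(x_1-x_2)(t_z-t_{z'})$ and the bookkeeping identity relating $\prod_{j\in\ov\bw}(t_{z'}-t_j)^{-1}$ to $\prod_{j\in\ov{\bw\cup z}}(t_{z'}-t_j)^{-1}$, is the same in both. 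One useful thing you add: you explicitly flag the case analysis in identities (2) and (3) according to whether the re-inserted element equals one of the removed elements or lies outside $\bw_1$ entirely. The paper's proof of (2) and (3) only treats the generic configuration in which exactly two intermediate paths $(r,s,u)$ and $(s,r,u)$ contribute, i.e.\ when the target fixed point is not a subset of the source; your note that the non-generic configurations (where many more paths contribute, requiring partial-fraction-type cancellations) need to be checked separately is a legitimate refinement of what the paper writes down. Since you describe but do not carry out those remaining checks, your proposal has the same formal status as the paper's proof in that respect.
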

\begin{proof} We start with \ref{idone}. The map $c_n b_n$ sends $\Eins_{\bw}$ to a linear combination of basis vectors, first via $b_n$ such vectors attached to fixed points of the form  $\bw_1\cup z$ for some $z\in \ov{\bw_1}$  and then via $c_n$ to those of the form 
$(\bw_1\cup z)\setminus z'$, $z'\in ({\bw_1}\cup z)$. In contrast,  $b_n c_n$ first maps to those of the form $\bw_1\setminus z''$ and then to  
$(\bw_1\setminus z'')\cup z'''$, where $z''\in \bw_1$ and $z'''\in\ov{(\bw_1\setminus z'')}$. 
To compare, pick the same vector on each side. Assume first that $z=z'''\not=z'=z''$. Then the denominator occurring in $c_n b_n$ is
equal to $\prod_{j\in {\bw}_1}(t_{j} - t_z)\prod_{j\in \ov{{\bw}_1\cup z}}(t_{z'} - t_j)$, 
whereas for $b_n c_n$ it equals $\prod_{j\in {\ov{{\bw}_1}}}(t_{z'} - t_j)\prod_{j\in {\bw}_1, j\not=z'}(t_{j} - t_z)$. They obviously agree. In case $z=z'''=z'=z''$ we obtain $\prod_{j\in{\bw}_1}(t_{j} - t_z)\prod_{j\in\ov{{\bw}_1}, j\not=z}(t_{z} - t_j)$ as common denominator on the left. In both cases the same common denominator appears also on the right hand side of  \ref{idone}. Now we compare the numerator  of the coefficient of $\Eins_{(\bw_1\cup z)\setminus z'}$ appearing on the left with the numerator in the  coefficient of  $\Eins_{(\bw_1\setminus z'')\cup z''}$ appearing on the right (assuming the basis vectors agree). The first equals
\begin{eqnarray}
&&\prod_{j\in  ({\bw_1}\cup z)\setminus z'}\prod_{i\in {\bw}_1}(x_2+t_j)(x_{1}+t_i) - \prod_{j\in  ({\bw}_1\cup z)\setminus z'}\prod_{i\in {\bw}_1}(x_{1}+t_j)(x_2+t_i)\nonumber\\
&=&(x_1-x_2)(t_z-t_{z'})\prod_{i,\,j\in  ({\bw_1}\setminus z')}(x_{1}+t_i)(x_2+t_j)\label{boring}
\end{eqnarray}
whereas the latter is equal to
\begin{eqnarray*}
&&(x_1-x_2)\left(\prod_{j\in  ({\bw}_1\setminus z')\cup z}\prod_{i\in  ({\bw}_1\setminus z')}(x_1+t_j)(x_2+t_i) - \prod_{j\in  ({\bw}_1\setminus z')}\prod_{i\in  {\bw}_1}(x_2+t_j)(x_1+t_i)\right).
\end{eqnarray*}
which coincides with \eqref{boring}. This proves  \ref{idone}. 

The operators from the identity  \ref{idtwo} act on our basis elements by removing two elements from  ${\bw}_1$ and then add one element to the resulting set. More precisely, $\mathcal{C}$ gives a linear combination of basis vectors corresponding to  ${\bw}_1\setminus r$, $r\in {\bw}_1$, and $\mathcal{D}$ then of the form  $y=({\bw}_1\setminus \{r,s\})\cup u$ with $s\in ({\bw}_1\setminus r)$ and $u\in {\bw}_1\setminus \{r,s\}$. Let us calculate the coefficient of $\Eins_y$ in $\mathcal D(x_1)\mathcal C(x_2)\Eins_{\bw}$. There are two ways to get from $\bw_1$ to $y$ depending if we first remove $r$ or first $s$. By Lemma~\ref{operators} the denominators corresponding to the two summands differ precisely by a sign (coming from the factor $(t_r-t_s)$). Calculating, with Lemma~\ref{operators}, the numerator for the left hand side we get
\begin{eqnarray}
&&\prod_{j\in {\bw}_1\setminus \{r,s\}}\;\prod_{i\in {\bw}_1\setminus \{r\}}(x_1+t_j)(x_2+t_i)\; \;-\;\prod_{j\in ({\bw}_1\setminus \{s,r\})}\;\prod_{i\in ({\bw}_1\setminus \{s\})}(x_1+t_j)(x_2+t_i)\nonumber\\
&=&\label{secondi}
(t_s-t_r)\prod_{i,\,j\in ({\bw}_1\setminus \{r, s\})}(x_1+t_i)(x_2+t_j)
\end{eqnarray}
which is equal to numerator for the right hand side, 
\begin{eqnarray*}
\prod_{j\in {\bw}_1\setminus \{r,s\}}\;\prod_{i\in {\bw}_1\setminus \{r\}}(x_2+t_j)(x_1+t_i)\;\;-\;\prod_{j\in {\bw}_1\setminus \{s,r\}}\;\prod_{i\in {\bw}_1\setminus \{s\}}(x_2+t_j)(x_1+t_i).
\end{eqnarray*}

This proves  \ref{idtwo}, and we pass to \ref{idthree}.  Above we have calculated the matrix coefficients of the composition $\mathcal D(x_1)\mathcal C(x_2)$ for the pair $\bw$ and $y$. Now we calculate this coefficient for the operator $\mathcal C(x_1)\mathcal D(x_2)-\mathcal C(x_2)\mathcal D(x_1).$
These operators remove an element from ${\bw}_1$, add an element to the result and  remove one again.

As before observe that in $\mathcal C(x_1)\mathcal D(x_2)(\Eins_{\bw})$ there are two summands contributing to the coefficient of $\Eins_y$ whose denominators differ by a sign. The numerators of these summands contribute
\begin{eqnarray}
\label{Tea1}
\prod_{j\in {\bw}_1\setminus \{r ,s\}\cup u}\;\prod_{i\in {\bw}_1\setminus \{r\}}(x_1+t_j)(x_2+t_i)\;\;-\;\prod_{j\in {\bw}_1\setminus \{r ,s\}\cup u}\;\prod_{i\in {\bw}_1\setminus \{s\}}(x_1+t_j)(x_2+t_i).
\end{eqnarray}
Likewise for the operator ${\mathcal C(x_2)}{\mathcal D(x_1)}$  the numerators contribute
\begin{eqnarray}
\label{Tea2}
\prod_{j\in {\bw}_1\setminus \{r,s\}\cup u}\;\prod_{i\in {\bw}_1\setminus \{r\}}(x_2+t_j)(x_1+t_i) \;\;-\; \prod_{j\in {\bw}_1\setminus \{r,s\}\cup u}\;\prod_{i\in {\bw}_1\setminus \{s\}}(x_2+t_j)(x_1+t_i).
\end{eqnarray}
Taking the difference \eqref{Tea1}-\eqref{Tea2} we get
\begin{eqnarray*}
&((x_1+t_u)(x_2+t_s-x_2-t_r)-(x_2+t_u)(x_1+t_s)+(x_2+t_u)(x_1+t_r))
\prod_{i,\,j\in {\bw}_1\setminus \{r, s\}}(x_1+t_i)(x_2+t_j)&\\
&=\;(x_1-x_2)(t_s-t_r)\prod_{i,\,j\in {\bw}_1\setminus \{r, s\}}(x_1+t_i)(x_2+t_j).&
\end{eqnarray*}
Since this equals $(x_1-x_2)$ times \eqref{secondi} this finishes the proof.
\end{proof}

\subsection{The geometric action matches the algebraic action}
 Recall from Corollary~\ref{Bethe} the identification $\Psi: \mV_N^{\op{loc}}\cong H_T^*(\Xn)^{\op{loc}}$. Then the following holds.
 \begin{theorem}
 \label{match}
The action of the geometric Yang-Baxter algebra operators from Definition~\ref{ABCDgeom} in the standard basis of  $\mV_N^{\op{loc}}$ is given by the operators defined by the monodromy matrix $M_N$ from \eqref{Mono} for the pair $(R,L).$
\end{theorem}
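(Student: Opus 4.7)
The plan is to match the four pairs of operators sequentially, using the Bethe/fixed-point basis correspondence of Corollary~\ref{Bethe} as the bridge. The key observation is that by Proposition~\ref{operators}, $\mathcal{A}_n(x)$ acts diagonally on the fixed-point basis $\{\Eins_{\bw}\}$ of $H^*_T(\Xn)^{\op{loc}}$ with eigenvalue $\prod_{j\in\bw_1}(x+t_j)$, while by the Bethe Ansatz formula \eqref{BetheEV}, $A(x)$ acts diagonally on the Bethe basis of $\mV_N^{\op{loc}}$ with the same eigenvalues $\prod_{j:\la_j=1}(x+t_j)$. Since Corollary~\ref{Bethe} identifies these two bases under $\Psi$, the operators $A$ and $\mathcal{A}$ agree on $\mV_N^{\op{loc}}$; as both are $\bP$-linear, they agree on $\mV_N\subset\mV_N^{\op{loc}}$ as well.

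Next, to match $\mathcal{C}(x)$ with $C(x)$, I would exploit that both are polynomials in $x$ of uniformly bounded degree, so that it suffices to check agreement at enough evaluation points, naturally $x=-t_i$. Algebraically, Lemma~\ref{lemmaAC} together with \eqref{Bethecalc} gives an explicit recursive description: $C(-t_{i_1})\cdots C(-t_{i_k})v_{(1,\ldots,1)}$ equals, up to an explicit product of $(t_a-t_b)$ factors, the Bethe vector $\bb_\la$ with $\la$ having zeros precisely at positions $i_1,\ldots,i_k$. Geometrically, starting from $\Eins_{[N]}\in H^*_T(X_N)=\mC$, a direct computation from Proposition~\ref{operators} yields
\[
\mathcal{C}_{N-1}(-t_i)\Eins_{[N]} \;=\; \prod_{j\neq i}(t_j-t_i)\,\Eins_{[N]\setminus\{i\}},
\]
since the diagonal action of $\mathcal{A}_{N-1}(-t_i)$ annihilates every summand of $c_{N-1}\Eins_{[N]}$ except the one indexed by $i$. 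This matches the algebraic formula under $\Psi$ by Definition~\ref{normalizedBethe}. Induction on the number of $C$-factors, using one of the commutation relations proved in Proposition~\ref{proprel} together with the identification $A=\mathcal{A}$ from the previous step, then extends the comparison from the top vector to all iterated $C$-products, and hence to the full action on the (matching) Bethe/fixed-point basis.

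The operator $\mathcal{B}$ is matched with $B$ by the completely dual argument: start from the opposite extremal vector $v_{(0,\ldots,0)}\leftrightarrow \Eins_{p_0}$ and use the YB relation $A(x_2)B(x_1)-A(x_1)B(x_2)=(x_2-x_1)B(x_2)A(x_1)$ to propagate the match. Finally, $\mathcal{D}$ is matched with $D$ via the YB relation
\[
A(x_1)D(x_2)-A(x_2)D(x_1)\;=\;(x_1-x_2)B(x_2)C(x_1),
\]
which after localising (so that $A$ has simple spectrum separating the weight-preserving blocks) determines $D$ uniquely from $A,B,C$. This is precisely the uniqueness statement foreshadowed by the authors as Lemma~\ref{Doperator}, and combined with the previous matches it concludes the identification $\op{YB}_N\cong\mathcal{YB}_N$.

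The main obstacle is the bookkeeping in the $C$-step: the equivariant localisation formula \eqref{AB} introduces Euler-class denominators that must precisely cancel the $\prod(t_a-t_b)$-normalisations appearing in Definition~\ref{normalizedBethe}, and tracking these carefully through an induction of length $k$ requires attention. A secondary subtlety is that the Bethe basis only exists over $\bP^{\op{loc}}$, whereas the operators $C(x)$ and $\mathcal{C}(x)$ already preserve the integral space $\mV_N$; one must therefore verify that the equality of operators on $\mV_N^{\op{loc}}$ descends, which it does by $\bP[x]$-linearity and the injectivity of $\mV_N\hookrightarrow\mV_N^{\op{loc}}$.
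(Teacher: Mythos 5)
Your overall strategy — matching the four operator families on the Bethe/fixed-point basis using the eigenvalue formulas, evaluation at $x=-t_i$, and a uniqueness argument for $D$ — is a legitimate and genuinely different route from the paper's. The paper instead uses the $S_N$-equivariance of $\Psi$ (Corollary~\ref{Bethe} together with Propositions~\ref{Propcommute} and \ref{SNonBethe}) to reduce everything to the single extremal vector $S^{\omega_0}_\zeta$ and then reads the coefficients off the twisted Schubert class expansion using \emph{Stab2}; your version trades that support argument for explicit localisation computations. The $A$-step and the starting computation $\mathcal{C}_{N-1}(-t_i)\Eins_{[N]}=\prod_{j\ne i}(t_j-t_i)\Eins_{[N]\setminus\{i\}}$ are both correct.

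There is, however, a genuine gap in your $D$-step. The relation you invoke,
\[
A(x_1)D(x_2)-A(x_2)D(x_1)=(x_1-x_2)B(x_2)C(x_1),
\]
does \emph{not} determine $D$ from $A,B,C$ alone, simple spectrum or not. In the fixed-point basis the $(\bw,\bw')$-entry of the left-hand side is $a_\bw(x_1)d_{\bw\bw'}(x_2)-a_\bw(x_2)d_{\bw\bw'}(x_1)$, which involves the \emph{same} eigenvalue $a_\bw$ on both terms; the homogeneous equation $a_\bw(x_1)\delta(x_2)=a_\bw(x_2)\delta(x_1)$ admits the one-parameter family of solutions $\delta(x)=\lambda\,a_\bw(x)$, so each matrix entry of $D$ is underdetermined. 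The paper's Lemma~\ref{Doperator} avoids this in two ways you have not reproduced: for the off-diagonal entries it uses $\mathcal B(x_1)\mathcal C(x_2)-\mathcal B(x_2)\mathcal C(x_1) =(x_1-x_2)(\mathcal D(x_1)\mathcal A(x_2)-\mathcal A(x_2)\mathcal D(x_1))$, where $\mathcal A$ appears on \emph{both} sides of $\mathcal D$ so that the difference $a_{\bw'}(x_2)-a_\bw(x_2)$ (nonzero for $\bw\ne\bw'$ by simple spectrum) isolates $d_{\bw\bw'}$; and for the diagonal entries one still needs the boundary condition that $\mathcal D(x)$ and $D(x)$ agree on $H^*_T(\op{X}_N)$ (where the paper shows both restrict to the identity). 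Your write-up states that the relation you quote "determines $D$ uniquely from $A,B,C$" and cites Lemma~\ref{Doperator} as the same assertion — but the lemma explicitly includes the restriction datum, which you never verify. Without it the argument does not close.

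A smaller but still real issue is the appeal to "one of the commutation relations proved in Proposition~\ref{proprel}" for the $\mathcal C$-induction. Proposition~\ref{proprel} verifies only the three $\mathcal C\mathcal D$/$\mathcal A\mathcal D$-type identities; the relations you actually need for the induction — commutativity of the $\mathcal C(x_i)$ among themselves, and the $\mathcal A\mathcal C$ exchange relation of the form \eqref{CA} — are not among them. They do follow from the same fixed-point calculation, but they are not in the cited proposition and must be supplied separately, or else the circularity must be avoided by computing $\mathcal C_{n}(-t_{i})\Eins_{\bw'}$ directly for $i\in\bw'$ (which does give a single term and makes the bookkeeping manageable), bypassing the need for the geometric YB relations in this step.
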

Theorem~\ref{main} follows (with an isomorphism identifying the generators in the obvious way).

\begin{remark}
\label{degeneration}
It was pointed out in \cite{NSh} and \cite{BMO} that our Yang-Baxter algebras are certain limits of the Yangian for $\mathfrak{gl}_2$. Such limits can be calculated rigorously after an appropriate Drinfeld twist of the Yangian, in the sense of \cite[Section 1]{GKassel}. We do however not know a rigorous direct geometric construction of these twists.
\end{remark}

\begin{proof}[Proof of Theorem~\ref{match}]
We start with the operator $\mathcal A(x)$. It obviously commutes with the action \eqref{symactioneq} and $\Psi$ is $S_N$-equivariant.  Therefore, it is enough to calculate how  $\mathcal A(x)$ acts on $S^{\omega_0}_{\zeta}$ with $\zeta=({0}^k,{1}^n)\in\Lambda_n\hat{=}\Xn^T$  for any $0\leq n\leq N$.  Since, using \eqref{beta}, 
$S^{\omega_0}_{\zeta}$ corresponds to $\beta_n \Eins_\zeta$ in the fixed point basis, it is clear that the value of $\mathcal A(x)$ at this vector is the same value as the obtained using the diagrammatics of the first section. 

To prove the claim for the operator $\mathcal B(x)$ we will use the following obvious facts:
\begin{itemize}
\item $\mathcal B(x)S^{\omega_0}_\zeta$ is a linear combination of fixed point basis vectors $\Eins_{\mu(j)}$ with $\mu(j)$ given by the $\{0,1\}$-word $(\underbrace{0,\ldots,0}_\text{j-1},1,0,\ldots,0,\underbrace {1,\ldots,1}_\text{n})$.  
\item The expansion of $S^{\omega_0}_{\mu(j)}$ in the fixed point basis contains $\Eins_{\mu(1)}$ precisely for $j=1$.
\end{itemize}
Therefore all we need to calculate is the coefficient of $\Eins_{\mu(1)}$ in the expansion of $S^{\omega_0}_{{\mu(1)}}$ in order to calculate the appropriate matrix coefficient of $\mathcal B(x)$ in our $\omega_0$-twisted Schubert basis. This is readily done by {\it Stab2}, namely
$S^{\omega_0}_{\mu(1)}=\prod_{2\leq i\leq k, k+1\leq l\leq N}(t_l-t_i)\Eins_{\mu(1)}+\op{rest}  \op{terms}$.
Hence we obtain
\begin{eqnarray*}
\mathcal B(x)S^{\omega_0}_{\mu(1)}=\mathcal B(x)\beta_n\Eins_{\mu(1)}=\beta_n\frac{\prod_{k+1\leq l\leq N}(x+t_l)}{\prod_{k+1\leq l\leq N}(t_l-t_1)}\Eins_{\mu(1)}+\ldots=\prod_{k+1\leq l\leq N}(x+t_l)S^{\omega_0}_{\mu(1)}+\ldots.
\end{eqnarray*}
where the $\ldots$ indicates irrelevant terms. This matrix coefficient is exactly the same as the one calculated in the section \ref{secYBA} using the diagrammatics for the operator $B(x)$ defined by the monodromy matrix built out of the operator $L(x,t)$. The arguments for $\mathcal C(x)$ are analogous and therefore omitted. It remains to consider $\mathcal D(x)$. By Lemma~\ref{Doperator} below we only need to calculate the restrictions of the operator $D(x)$ and  the geometric operator $\mathcal D(x)$ to $H^*_T(\op{X}_N)$. It is an easy calculation. One obtains, using the diagrammatics, that $D(x)$ restricts to the identity operator. Let us calculate the restriction of the geometric operator. We have 
$\mathcal D(x)S^{\omega_0}_{(1,\ldots ,1)} =\sum_{r=1}^N\prod_{i\not=r}
\frac{x+t_{i}}{t_{i}-t_{r}}~S^{\omega_0}_{(1,\ldots  ,1)} =S^{\omega_0}_{(1,\ldots, 1)}
$
because the polynomial in $x$ in the numerator has degree $N-1$ and is equal to $1$ at the points $-t_1,\ldots,-t_N$, it cancels with the denominator. The theorem follows.
\end{proof}

\begin{lemma} \label{Doperator} The operator $\mathcal D(x)$ is uniquely determined by the operators $\mathcal A(x)$, $\mathcal B(x)$ and $\mathcal C(x)$ and its restriction to $H_T^*(\op{X}_N)$ using the relations from Proposition~\ref{proprel}.
\end{lemma}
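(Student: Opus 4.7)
The plan is to prove uniqueness by homogenization. Suppose $\mathcal D_1(x)$ and $\mathcal D_2(x)$ both satisfy the hypotheses, and set $E(x) := \mathcal D_1(x) - \mathcal D_2(x)$. Since all three relations in Proposition~\ref{proprel} are linear in $\mathcal D$ (with $\mathcal A, \mathcal B, \mathcal C$ fixed), $E(x)$ satisfies their homogeneous versions; the two I will use are
\begin{equation*}
(x_1-x_2)\bigl(\mathcal A(x_1) E(x_2) - E(x_2)\mathcal A(x_1)\bigr) = 0
\end{equation*}
from relation~(1.) and
\begin{equation*}
\mathcal C(x_1) E(x_2) - \mathcal C(x_2) E(x_1) = (x_1-x_2)\, E(x_2)\, \mathcal C(x_1)
\end{equation*}
from relation~(3.), together with $E(x)|_{H_T^*(\op{X}_N)} = 0$. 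The task is to deduce $E = 0$.

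First I would pass to the torus-fixed-point basis of $H_T^*(\op{X})^{\op{loc}}$. By Proposition~\ref{operators}, $\mathcal A(x_1)$ is diagonal in this basis with eigenvalues $a_{\bw}(x_1) = \prod_{j \in \bw}(x_1 + t_j)$, and these eigenvalues are pairwise distinct as polynomials in $x_1$ for distinct fixed points $\bw$ within a single weight space. Taking the $(\bw', \bw)$-matrix entry of the first displayed relation with $\bw' \ne \bw$ gives $(x_1-x_2)(a_{\bw'}(x_1) - a_{\bw}(x_1))\, E(x_2)_{\bw',\bw} = 0$; since the left factor is a nonzero element of the integral domain $\bP^{\op{loc}}[x_1, x_2]$, we conclude $E(x_2)_{\bw',\bw} = 0$. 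Hence $E(x)$ is diagonal; write $e_\bw(x) := E(x)_{\bw, \bw}$.

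Next I would carry out a downward induction on the weight $n$ to show $e_\bw = 0$ for all $\bw \in \op{X}_n^T$. The base case $n = N$ is the hypothesis. For the inductive step, fix $\bw \in \op{X}_n^T$, choose any $\mathbf{v} \in \op{X}_{n+1}^T$ with $\bw = \mathbf{v} \setminus z$ for some $z \in \mathbf{v}$, and evaluate the $(\bw, \mathbf{v})$-matrix entry of the second displayed relation. Because $E$ is already known to be diagonal, each of the compositions $\mathcal C(x_1)E(x_2)$, $\mathcal C(x_2)E(x_1)$, and $E(x_2)\mathcal C(x_1)$ collapses to a single term, giving
\begin{equation*}
\mathcal C(x_1)_{\bw,\mathbf{v}}\, e_\mathbf{v}(x_2) - \mathcal C(x_2)_{\bw,\mathbf{v}}\, e_\mathbf{v}(x_1) = (x_1-x_2)\, \mathcal C(x_1)_{\bw,\mathbf{v}}\, e_\bw(x_2).
\end{equation*}
By the inductive hypothesis $e_\mathbf{v} = 0$, so the left side vanishes, and since $\mathcal C(x_1)_{\bw,\mathbf{v}}$ is a nonzero element of $\bP^{\op{loc}}[x_1]$ by the explicit formulas in Proposition~\ref{operators}, cancellation in the integral domain $\bP^{\op{loc}}[x_1,x_2]$ forces $e_\bw(x_2) = 0$, completing the induction.

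The main obstacle is bookkeeping rather than ideas: in the inductive step one must carefully verify that, even though $\mathcal C$ changes weight while $E$ preserves it, knowing that $E$ is already diagonal causes each of the three compositions to reduce to exactly the one diagonal term written above. The conceptual picture is that relation~(1.) diagonalizes $E$ via the separation of fixed points by the eigenvalues of $\mathcal A$, and relation~(3.) then propagates the vanishing of the diagonal from the top weight $N$ downward along the one-step lowering arrows provided by $\mathcal C$.
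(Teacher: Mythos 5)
Your proposal is correct and follows essentially the same route as the paper: relation~(1.) of Proposition~\ref{proprel} together with the distinctness of the eigenvalues of $\mathcal A(x)$ in the fixed-point basis forces the difference $E$ to be diagonal, and relation~(3.) together with the fact that $\mathcal C(x)$ has no zero row propagates the vanishing of the diagonal entries downward from $H_T^*(\op{X}_N)$. Your homogenization framing and the explicit downward induction are just a cleaner write-up of the paper's argument, which states the same two steps (off-diagonal entries via $\mathcal A$, then diagonal entries via $\mathcal C$ with the base case on $\op{X}_N$) more tersely.
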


\begin{proof} In the fixed point basis, $\mathcal A(x)$ is diagonal with pairwise distinct eigenvalues. Therefore,
\begin{eqnarray*}
\mathcal B(x_1)\mathcal C(x_2)-\mathcal B(x_2)\mathcal C(x_1) &=&(x_1-x_2)(\mathcal D(x_1)\mathcal A(x_2)-\mathcal A(x_2)\mathcal D(x_1))
\end{eqnarray*}
defines the off-diagonal entries of $\mathcal D(x)$. To define the diagonal entries we use
\begin{eqnarray*}
\mathcal C(x_1)\mathcal D(x_2)&-&\mathcal C(x_2)\mathcal D(x_1)=(x_1-x_2)\mathcal D(x_2)\mathcal C(x_1)
\end{eqnarray*}
It allows us from the known off-diagonal entries to determine uniquely the diagonal entries of $\mathcal D(x)$, because the matrix of $\mathcal C(x)$ does not contain a zero row in the fixed point basis, and thus for any index $i$ we find an index $j$ such that the $(i,j)$-entry in $C(x)$ is non-zero and and allows to recover the $i$th diagonal entry of  $\mathcal D(x)$.
\end{proof}

\subsection{Explicit formulae}
\label{sec:explicitformulas}
Thanks to Theorem~\ref{match} we do not need to distinguish anymore (also with respect to notation)  between the algebraic and geometric operators and can freely pass between the two descriptions when we do our calculations. Using the graphical calculus one can easily calculate the  action of $\operatorname{YB}_N$ and $\operatorname{YB}'_N$ on an extremal weight vector. Then the $S_N$-action allows us to directly obtain the following explicit formulae for any standard basis vector.
 
\begin{corollary} \label{explicitformulas} Let $0\leq n\leq N$ and $\zeta=(0^{N-n},1^n)$. With the identification \eqref{DefPsi} we have
\begin{align*}
&\mathcal A(x)v_{\zeta} =\prod_{j=k+1}^N(x+t_j)\;v_\zeta,&&
\mathcal A'(x)v_{\zeta} = \prod_{j=1}^k(x-t_j)\;v_\zeta,\\
&\mathcal B(x)v_{\zeta} =\prod_{j=k+1}^N(x+t_j)\;v_{10\ldots 01\ldots 1},&& 
\mathcal B'(x)v_{\zeta} =\sum_{i=1}^k\prod_{j=i+1}^k(x-t_j)\;v_{0\ldots 0\underset{i}{1}0\ldots01\ldots 1},\\
&\mathcal C(x)v_{\zeta} =\sum_{i=k+1}^N\prod_{j=k+1}^{i-1}(x+t_j)\;v_{0\ldots01\ldots1\underset{i}{0}1\ldots 1},&&
\mathcal C'(x)v_{\zeta} =\prod_{j=1}^k(x-t_j)\;v_{0\ldots 01\ldots 10},\\
&\mathcal D(x)v_{\zeta} =\sum_{i=k+1}^N\prod_{j=k+1}^{i-1}(x+t_j)\;v_{10\ldots01\ldots1\underset{i}{0}1\ldots 1},&&
\mathcal D'(x)v_{\zeta} =\sum_{i=1}^k\prod_{j=1}^k(x-t_j)\;v_{0\ldots 0\underset{i}{1}0\ldots 01\ldots 10}.
\end{align*}
\end{corollary}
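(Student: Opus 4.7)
\textbf{Proof proposal for Corollary~\ref{explicitformulas}.}
By Theorem~\ref{match}, the geometric operators $\mathcal{A}(x),\mathcal{B}(x),\mathcal{C}(x),\mathcal{D}(x)$ and their primed counterparts act on the standard basis of $\mV_N^{\op{loc}}$ as the monodromy-matrix operators $A(x),B(x),C(x),D(x)$ (respectively $A'(x),\ldots,D'(x)$) built from the Lax matrices $L,L'$ of \eqref{L1L2}. Hence it suffices to compute the matrix coefficients by the diagrammatic calculus from Section~\ref{diagrams}, namely by summing the Boltzmann weights of all $1$-row lattice diagrams of length $N$ with the prescribed external labelling.

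The crucial simplification is that $L$ and $L'$ are $5$-vertex degenerations: several of the sixteen vertex configurations in \eqref{sixteen} carry weight zero. For $L$, the forbidden labelled crossings are $(\text{left},\text{top})\!\to\!(\text{right},\text{bottom})=(0,0)\!\to\!(1,1)$ and $(1,0)\!\to\!(1,0)$; for $L'$, the forbidden configuration is $(1,1)\!\to\!(1,1)$. With $\zeta=(0^k,1^n)$ the top-edge labels are fixed, while the leftmost and rightmost horizontal labels are prescribed by the operator through $A\leftrightarrow(0,0)$, $B\leftrightarrow(1,0)$, $C\leftrightarrow(0,1)$, $D\leftrightarrow(1,1)$.

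My plan is to sweep the lattice from left to right, tracking the horizontal label $h_i$ entering column $i$. For each column, the pair $(h_i,\zeta_i)$ together with the $5$-vertex constraint determines the admissible transitions. For instance, for $\mathcal{A}(x)v_\zeta$ (so $h_1=h_{N+1}=0$): in the first $k$ columns $h$ is forced to remain $0$ with weight $1$; in the last $n$ columns, with top-input $1$ and $h=0$, any switch $0\!\to\!1$ would lock $h$ at $1$ and in the next column force the forbidden $(1,1)$, so $h$ must stay $0$ throughout, producing the single contribution $\prod_{j=k+1}^N(x+t_j)\cdot v_\zeta$. Analogous column-by-column analyses handle the other three $L$-operators: for $\mathcal{B}(x)$ the initial $h=1$ is forced to switch immediately at column~$1$; for $\mathcal{C}(x)$ and $\mathcal{D}(x)$ the single switch $h=0\!\to\!1$ can occur at any column $i\in\{k+1,\ldots,N\}$, contributing a parametrised family of diagrams with weight $\prod_{j=k+1}^{i-1}(x+t_j)$ and producing the claimed sum.

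For the primed operators, the analysis is dual but governed by $L'$; the restriction is that $(1,1)$ is forbidden, which forces the horizontal label to be $0$ whenever the top label is $1$, except possibly at the very last column. The contribution $\prod_{j=1}^k(x-t_j)$ then arises from the diagonal configuration $(0,0)\!\to\!(0,0)$ carrying weight $x-t$ in the first $k$ columns. The main obstacle of the proof is purely notational bookkeeping: one must enumerate, for each of the eight operators, the small number of admissible column-sequences and verify that their weights and output words match the displayed formulae. The trickiest cases are $\mathcal{D}(x)$ and $\mathcal{D}'(x)$, which involve two ``switch'' columns (one in the $0$-block and one in the $1$-block), but even there the $5$-vertex constraint reduces the sum to a single one-parameter family indexed by the position of the switch.
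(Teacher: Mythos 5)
Your approach matches the paper's: both invoke Theorem~\ref{match} to replace the geometric operators by the algebraic ones and then evaluate them on the extremal vector $v_\zeta$ via the diagrammatic calculus of Section~\ref{diagrams}. The paper states this only as a one-line remark, so your column-by-column sweep is precisely the computation it has in mind. Two points deserve correction or caution.

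First, there is a slip in your justification for $\mathcal A(x)v_\zeta$. You write that a switch $h\colon 0\to 1$ in the last $n$ columns ``would lock $h$ at $1$ and in the next column force the forbidden $(1,1)$''. But $(1,1)\to(1,1)$ is \emph{allowed} for $L$ (it has Boltzmann weight $1$); it is the $5$-vertex-degenerate configuration for $L'$, whereas for $L$ the vanishing weight is $(1,0)\to(1,0)$, as you correctly noted a few lines earlier. The right reason the switch is impossible is a boundary argument, not a local one: once $h=1$ with all remaining top labels equal to $1$, the only admissible transition is $(1,1)\to(1,1)$, so $h$ stays at $1$ through the remaining columns and cannot satisfy the exit condition $h_{N+1}=0$ imposed by $A\leftrightarrow(0,0)$. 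In particular your ``next column'' argument does not even cover the case of a switch at column $N$. The conclusion $\mathcal A(x)v_\zeta=\prod_{j=k+1}^N(x+t_j)\,v_\zeta$ is unaffected.

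Second, when you say one ``verifies that the weights and output words match the displayed formulae'', note that the coefficient displayed for $\mathcal D'(x)v_\zeta$ is almost certainly a misprint. Your method, carried through as for $\mathcal B'$, gives $\prod_{j=i+1}^k(x-t_j)$, not $\prod_{j=1}^k(x-t_j)$. A degree count confirms this: $\mathcal D'_n(x)=c_n\mathcal A'_{n+1}(x)b_n$ and $\mathcal A'_{n+1}(x)$ has $x$-degree $N-(n+1)=k-1$, so the printed coefficient of degree $k$ cannot be right. Be careful not to ``verify'' something that is actually a typo.
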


 By Definition~\ref{ABCDgeom} we have the relations $\mathcal D_n(x)=b_n(\mathcal A_n(x))c_n$ and $\mathcal D'_n(x)=c_n(\mathcal A'_n(x))b_n$.  Corollary~\ref{Bethe}, formula \eqref{beta} and again the $S_N$-action allows us then to compute both Yang-Baxter algebra actions completely and explicitly in the Bethe basis, by computing it on an extremal weight vector (for instance via a $T$-fixed points calculation using Corollary~\ref{Bethe}). 

\begin{corollary} With the assumptions from Corollary~\ref{explicitformulas} we have
\label{cor:ABCDv0}
\begin{eqnarray*}
A(x)\bb_{\zeta}
&=&\prod_{i=k+1}^{N}(x+t_{i})~\bb_{0\ldots 01\ldots 1},
\\
 B(x)\bb_{\zeta} &=&\sum_{r=1}^k\prod_{i=k+1}^{N}%
\frac{x+t_{i}}{t_{i}-t_{r}}~\bb_{0\ldots 0\underset{r}{1}0\ldots 01\ldots  1}, \\
 C(x)\bb_{\zeta} &=&\sum_{r=k+1}^{N}\left(
\frac{1}{x+t_r}\prod_{i=k+1}^{N}(x+t_{i})\prod_{i=1}^{k}\frac{1}{t_{r}-t_{i}}\right)
~\bb_{0\ldots 01\ldots 1\underset{r}{0}1\ldots 1}.
\end{eqnarray*}
\end{corollary}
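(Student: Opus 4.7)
The plan is to pass to the geometric side via the isomorphism $\Psi$ from Corollary~\ref{Bethe}, which identifies the normalised Bethe basis with the torus fixed point basis of $\bigoplus_n H_T^*(\op{Gr}(n,N))^{\op{loc}}$. By Theorem~\ref{match} the algebraic operators $A(x), B(x), C(x)$ correspond to the geometric convolution operators $\mathcal A(x), \mathcal B(x), \mathcal C(x)$ from Definition~\ref{ABCDgeom}. Under $\Psi$ the Bethe vector $\bb_\zeta$ corresponds to $\Eins_{\bw_\zeta}$ where $\bw_\zeta$ is the fixed point with $\bw_1 = \{k+1,\ldots, N\}$. Hence the claim reduces to evaluating these three geometric operators on one explicit fixed point basis vector and reading off the coefficients against the fixed point basis of the appropriate target Grassmannian.

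For the $A(x)$-formula, I would just invoke Proposition~\ref{operators}(1): $\mathcal A(x)$ is diagonal in the fixed point basis with eigenvalue $\prod_{j \in \bw_1}(x+t_j) = \prod_{j=k+1}^N(x+t_j)$ on $\Eins_{\bw_\zeta}$, which translates back to the first formula via $\Psi$.

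For the $B(x)$-formula, I use $\mathcal B_n(x) = b_n \mathcal A_n(x)$: first apply $\mathcal A_n(x)$ (which picks up the scalar $\prod_{j=k+1}^N(x+t_j)$), then apply $b_n$ using Proposition~\ref{operators}(2). This gives a sum over $z \in \overline{\bw_1} = \{1,\ldots,k\}$ of $\prod_{j \in \bw_1}(t_j-t_z)^{-1}\, \Eins_{\bw_1 \cup \{z\}}$. Translating each fixed point $\bw_1 \cup \{z\}$ back to the Bethe basis via Corollary~\ref{Bethe} yields the claimed expression with $r = z$. For the $C(x)$-formula, a symmetric calculation works with $\mathcal C_n(x) = \mathcal A_n(x) c_n$: the matrix coefficients of $c_n$ from Proposition~\ref{operators}(2) give a sum over $z' \in \bw_1$ of $\prod_{j \in \overline{\bw_1}}(t_{z'} - t_j)^{-1}\,\Eins_{\bw_1 \setminus \{z'\}}$, and $\mathcal A_{n-1}(x)$ scales each summand by $\prod_{j \in \bw_1 \setminus \{z'\}}(x+t_j) = \prod_{j=k+1}^N(x+t_j)/(x+t_{z'})$, producing the third formula after passing back to the Bethe basis with $r = z'$.

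There is no genuine obstacle here; the entire argument is a direct unpacking of the two main ingredients (Corollary~\ref{Bethe} and Proposition~\ref{operators}). The only point requiring some care is the bookkeeping that identifies a fixed point $\bw_1 \subset [N]$ of $\op{Gr}(n,N)$ with the $\{0,1\}$-word labelling the corresponding Bethe basis vector, which must match the indexing used in the statement. Once this dictionary is in place, the three formulas drop out from the two explicit computations above.
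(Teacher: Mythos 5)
Your proposal is correct and takes essentially the same route the paper indicates: pass to the geometric side via $\Psi$ from Corollary~\ref{Bethe}, identify $\bb_\zeta$ with the fixed point vector $\Eins_{\bw_\zeta}$ with $\bw_1=\{k+1,\ldots,N\}$, and then read the matrix coefficients of $\mathcal{A}(x)$, $b_n$, $c_n$ off Proposition~\ref{operators}; this is precisely the ``$T$-fixed points calculation using Corollary~\ref{Bethe}'' the text refers to, and your bookkeeping of $\bw_1\cup z$ and $\bw_1\setminus z'$ against the $\{0,1\}$-words in the statement is accurate.
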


\begin{remark}
There also exists an explicit formula for the ${D}$-operator, however, it is slightly more involved. Employing the same argument as for $A$, $B$ and $C$, one finds
\[
D(x)\bb_{\zeta} =d_\zeta(x) \bb_{\zeta}+\sum_{a=1}^k\sum_{b=k+1}^N\frac{\prod_{k+1\leq j\leq N, j\not=b}(x+t_j)}{\prod_{j=k+1}^N(t_j-t_b)}\prod_{i=1}^k\frac{1}{t_a-t_i} ~\bb_{0\ldots 0\underset{a}1 0\ldots 01\underset{b}{0}1\ldots 1},
\]
where the coefficient $d_\zeta(x)$ cannot be determined by the symmetric group action. Instead one uses the graphical calculus from Section~\ref{diagrams} for the operator $D(x)$ and Theorem \ref{main} to find the missing matrix element and obtains, with $\lambda(i)=10\ldots 01\ldots 1\underset{i}{0}1\ldots 1$, the formula
\[
d_\zeta(x)=\sum_{i=k+1}^{N} S^{\omega_0}_{\lambda(i)}(\zeta)\prod_{j=k+1}^{i-1}(x+t_j).
\]

\end{remark}
In the same way we also obtain the following formulae.

\begin{corollary} With the same assumptions as in Corollary~\ref{cor:ABCDv0} we have
\begin{eqnarray*}
A'(x)\bb_{\zeta}
&=&\prod_{i=1}^{k}(x-t_{i})~\bb_{0\ldots 01\ldots 1},\\
B'(x)\bb_{\zeta} &=&\sum_{i=1}^k\frac{\prod_{1\leq j\leq k, j\not=i}(x-t_j)}{\prod_{j=k+1}^N(t_j-t_i)}~\bb_{0\ldots 0\underset{i}1 0\ldots 0 1\ldots 1},\\
 C'(x)\bb_{\zeta} &=&\sum_{j=k+1}^N\prod_{i=1}^k\frac{x-t_i}{t_j-t_i}~\bb_{0\ldots 01\ldots1 \underset{j}0 1\ldots 1}.
\end{eqnarray*}
\end{corollary}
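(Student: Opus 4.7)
The plan is to mimic the argument for Corollary~\ref{cor:ABCDv0} verbatim in the primed setting, passing via Theorem~\ref{main} to the geometric side. Under the identification $\Psi$ of Corollary~\ref{Bethe}, the Bethe vector $\bb_\zeta$ (with $\zeta=(0^k,1^n)$, $k=N-n$) corresponds to the fixed-point class $\Eins_{\bw_\zeta}$ where $\bw_\zeta$ is the $T$-fixed point of $\Xn$ with $\bw_1=\{k+1,\ldots,N\}$ and $\ov{\bw_1}=\{1,\ldots,k\}$. I will compute $\mathcal A'(x),\mathcal B'(x),\mathcal C'(x)$ applied to $\Eins_{\bw_\zeta}$ via the fixed-point formulas Lemma~\ref{fixedpts} and Proposition~\ref{operators}, adjusted from $\mathcal T_n$ to $\mathcal Q_n$.

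For $A'$: by Lemma~\ref{fixedpts}, the restriction of $\mathfrak c_i(\mathcal Q_n)$ to $\bw_\zeta$ is the $i$th elementary symmetric function in $\{t_j:j\in\ov{\bw_1}\}=\{t_1,\ldots,t_k\}$, and hence
\[
\inc^{*}_{\bw_\zeta}\mathcal A'_n(x)=\sum_{i=0}^{k}(-1)^i\sigma_i(t_1,\ldots,t_k)\,x^{k-i}=\prod_{j=1}^{k}(x-t_j),
\]
which delivers the first formula. For $B'(x)=\mathcal A'_{n+1}(x)\,b_n$ I will first apply $b_n$ to $\Eins_{\bw_\zeta}$; by Proposition~\ref{operators} this is $\sum_{z\in\ov{\bw_1}}\prod_{j\in\bw_1}(t_j-t_z)^{-1}\Eins_{\bw_1\cup z}$, and then $\mathcal A'_{n+1}(x)$ acts diagonally on each summand by $\prod_{j\in\ov{\bw_1\cup z}}(x-t_j)=\prod_{1\leq j\leq k,\,j\neq z}(x-t_j)$. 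Renaming $z=i$ and observing that $\bw_1\cup\{i\}$ corresponds to the word with $1$'s precisely at $\{i\}\cup\{k+1,\ldots,N\}$, I recover the asserted formula for $B'(x)\bb_\zeta$.

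For $C'$, I use $\mathcal C'_{n-1}(x)=c_{n-1}\,\mathcal A'_n(x)$. Multiplication by $\mathcal A'_n(x)$ contributes the scalar $\prod_{j=1}^k(x-t_j)$ by the $A'$ computation above, after which Proposition~\ref{operators} gives
\[
c_{n-1}\Eins_{\bw_\zeta}=\sum_{z'\in\bw_1}\prod_{j\in\ov{\bw_1}}(t_{z'}-t_j)^{-1}\,\Eins_{\bw_1\setminus z'}.
\]
Setting $j=z'\in\{k+1,\ldots,N\}$ and noting that $\bw_1\setminus\{j\}$ corresponds to the word with a single $0$ inserted at position $j$ in $\zeta$, I collect the factors into the claimed formula. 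No genuine obstacle arises: since $\zeta$ is an extremal weight, $\bw_\zeta$ has maximally structured incidences with the correspondence $\op{X}_{(n,1,N-n-1)}$, so the sums in Proposition~\ref{operators} simplify and everything reduces to bookkeeping. The only care needed is with the weight-space indexing: $\mathcal A'_n$ acts on $H^*_T(\Xn)$, whereas $\mathcal B'_n$ raises the weight via $b_n:H^*_T(\Xn)\to H^*_T(\op{X}_{n+1})$ followed by $\mathcal A'_{n+1}$, and $\mathcal C'_{n-1}$ lowers it via $\mathcal A'_n$ followed by $c_{n-1}:H^*_T(\Xn)\to H^*_T(\op{X}_{n-1})$.
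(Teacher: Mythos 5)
Your proposal is correct and follows essentially the same route the paper indicates: pass to the geometric side via Theorem~\ref{main}, identify $\bb_\zeta$ with the fixed-point class $\Eins_{\bw_\zeta}$ through Corollary~\ref{Bethe}, and read off the matrix entries of $\mathcal A',\mathcal B',\mathcal C'$ from Lemma~\ref{fixedpts} and Proposition~\ref{operators}. You also correctly handle the index shift $\mathcal C'_{n-1}=c_{n-1}\mathcal A'_n$ needed so that the operator acts out of $H^*_T(\Xn)$, a point the paper leaves implicit.
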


\section{Equivariant quantum cohomology}
\label{secquantum}
As an application we now connect our approach with equivariant quantum cohomology of Grassmannians, see e.g. \cite{FP} for an introduction to quantum cohomology.  Equivariant quantum Schubert calculus was originally introduced by Givental and Kim, \cite{GivKim}, \cite{G}, \cite{K}. Here we focus on the special case of Grassmannians and will follow mainly the setup and conventions in \cite{Mich1}. In our setting of a quantum integrable system the transition to equivariant quantum cohomology corresponds to considering the operator $A(x) + q D(x)$ rather than just $A(x)$ or $D(x)$ by itself, where $q$ is the quantum (deformation) parameter. This approach is a special case of the construction in \cite{GoKo} but the new result here is the geometric construction of the YB algebra.

\subsection{The operators $D$}
Recall the expansion $D(x)=\sum_{i\geq 0}D^{(i)}x^{i}$ from \eqref{ABCD}. The summands $D^{(i)}$ all preserve the weight spaces of $\mathbb{V}_N$. We are interested in the largest possible exponent for $x$ where the action is not trivial. 

\begin{proposition}\label{D} 
The action of $D^{(n-1)}=D^{(n-1)}_N$ restricted to $\mathbb{V}_{N,n}$ is given in the standard basis $v_{\bf \la}$, $|{\bf \la}|=n$ from \eqref{basisvec} by 
\begin{eqnarray*}
D^{(n-1)} v_{(\la_1\la_2\la_3\cdots\la_{N-1}\la_N})
&=&
\begin{cases}
\;v_{(\la_N\la_2\la_3\cdots\la_{N-1}\la_1)}& \text{if } \la_{N}=1 {\text{ and }}n\geq 1,\\
\;0&\text{otherwise}.
\end{cases}
\end{eqnarray*}
\end{proposition}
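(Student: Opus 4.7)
The plan is to compute $D(x) v_\la$ using the diagrammatic calculus of Section~\ref{diagrams} and extract the coefficient of~$x^{n-1}$. Since $D \leftrightarrow (1,1)$, the matrix coefficient $(D(x) v_\la, v_\mu)$ equals the sum, over all $1$-row lattice diagrams of length~$N$ with top labels~$\la$, bottom labels~$\mu$, and both outer horizontal edges labelled~$1$, of products of Boltzmann weights assigned by the Lax matrix~$L$ of \eqref{L1L2}. Only five vertex configurations carry non-zero weight, namely $(0,0,0,0)$, $(0,1,0,1)$, $(1,0,0,1)$, $(0,1,1,0)$, $(1,1,1,1)$ with weights $1$, $x+t_i$, $1$, $1$, $1$ respectively; in particular, only $(0,1,0,1)$ contributes to the degree in~$x$.

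The first step is a local analysis of the admissible horizontal labels $h_0, h_1, \ldots, h_N$ subject to $h_0 = h_N = 1$. From the list of allowed vertices one reads off that: if $\la_N = 0$ no admissible vertex at column~$N$ has right-output~$1$, so $D(x) v_\la = 0$; at any $0$-position of~$\la$ one must have $h_i = 0$; at a $1$-position with $h_{i-1} = 1$ the only option is $(1,1,1,1)$, forcing $h_i = 1$; a transition $h_{i-1} = 1 \to h_i = 0$ occurs at a $0$-position via $(1,0,0,1)$; and a transition $h_{i-1} = 0 \to h_i = 1$ occurs at a $1$-position via $(0,1,1,0)$. Consequently, $h$ remains at~$1$ throughout the initial run of $1$'s of~$\la$ (of length $p_1 - 1$, where $p_1$ is the position of the first~$0$ in~$\la$).

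Counting the $x$-degree of a diagram as the number of $(0,1,0,1)$-vertices then shows that the top degree $n-1$ is attained by a unique diagram, in which $h_i = 0$ for all $1 \leq i \leq N-1$: column~$1$ is a $(1,0,0,1)$ producing $\mu_1 = 1$; each interior column~$i$ is $(0,0,0,0)$ or $(0,1,0,1)$ according to $\la_i$, producing $\mu_i = \la_i$ and contributing the factor $x + t_i$ exactly when $\la_i = 1$; and column~$N$ is a $(0,1,1,0)$ with $\mu_N = 0$. Reading off the bottom labels yields $\mu = (\la_N, \la_2, \ldots, \la_{N-1}, \la_1)$, and the total weight equals $\prod_{2 \leq i \leq N-1,\,\la_i=1}(x + t_i)$, whose leading coefficient is~$1$; extracting the coefficient of $x^{n-1}$ gives the asserted formula. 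The main (mild) obstacle is uniqueness of this distinguished diagram among those contributing to degree~$n - 1$: any alternative horizontal pattern replaces some $(0,1,0,1)$ by either $(0,1,1,0)$ or $(1,1,1,1)$, strictly reducing the degree by~$1$. If convenient, the $S_N$-equivariance of $D(x)$ from Proposition~\ref{Propcommute} reduces general~$\la$ to a canonical representative in its orbit.
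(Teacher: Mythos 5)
Your proof is correct and replicates the paper's own diagrammatic argument almost step for step: identify the five non-zero vertices of the $(R,L)$-model, analyse the horizontal labels $h_0,\ldots,h_N$ subject to $h_0=h_N=1$, and observe that the unique configuration of maximal $x$-degree $n-1$ has all interior $h_i=0$, with weight $\prod_{1<i<N,\,\la_i=1}(x+t_i)$ of leading coefficient~$1$. The one thing worth making explicit, as the paper's proof does with the clause \emph{moreover $\la_1=0$}: the distinguished configuration you build requires $\la_1=0$, since column~$1$ must be a $(1,0,0,1)$-vertex; if $\la_1=1$ then $h_0=1$ together with your own transition rules forces $h_1=1$ via a $(1,1,1,1)$-vertex, so at most $n-2$ columns can be of type $(0,1,0,1)$ and consequently $D^{(n-1)}v_\la=0$. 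As written, your argument is silent on the case $\la_1=\la_N=1$, which it should rule out rather than leave implicit.
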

In other words the operator $D_{N,n}^{}$ acts on basis vectors $v_{\la}$ by the usual action of the affine generator in the affine symmetric group $\widehat{S_N}$. 

\begin{remark}
With the standard identification of $\la\in\{0,1\}^N$, $|{\la}|=n$ with Young diagrams in a $(n\times (N-n))$-box,  this operator removes the rim hook of the length $N-1$ if it exists and is $0$ otherwise. The rim-hook algorithm is originally due to \cite{BCFF} in the non-equivariant case. The discussion with the D-operator can be found in \cite[Lemmas 3.12, 3.14 and 3.20]{Korff}. The equivariant case of the rim-hook algorithm has been discussed more recently in \cite{BBT}. The formulation using (algebraic) $D$-operators appears already in \cite{GoKo}, which also covers the case of equivariant quantum K-theory.
\end{remark}
\begin{proof}
The claim follows easily from the diagrammatic calculus described in Section~\ref{secYBA}. Indeed let us assume that the basis vector $v_{\mu}$ appears in the expansion of $D^{(n-1)}v_{\la}$. Since the case $n=0$ is obvious, let us assume $n\geq 1$. Then looking at the non-zero weights \eqref{sixteen} of our model we conclude that $\la_N=1=\mu_1$.  In case $N=n=1$ we must therefore have a single red cross and we are done. So let $N>1$. By our assumption on the degree of $x$ we need $n-1$ crossings made out of a vertical red line and a horizontal black line which implies already $\la_i=\mu_i$ for $1<i<N$ and  $\la_i=1$ and moreover $\la_1=0$. This means that the first cross has a black horizontal output edge and therefore from there onwards $\la_i=0$ implies $\mu_i=0$ with the corresponding crossing being always completely black. Hence only the asserted vector  $v_{\mu'}$ can appear. Our argument also shows that it in fact appears with coefficient $1$.
This finishes the proof.
\end{proof}


\subsection{Quantum deformation}
To deform the multiplication on $H^*_T(\Xn)$ we first note that the  ${\bf P}$-algebra structure of  $H^*_T(\Xn)$ is completely determined by the pairwise products of the equivariant Chern classes of the tautological bundle which are encoded in $A(x)=\mathcal{A}(x)$ as a generating function. We deform this generating function with a parameter $q$ by considering $T(x)=A(x)+qD(x)$ acting now on the space $\mV_N[q]$ which we identify with  the $\mC[q]$-module $H^*_T(\op{X})[q]$ such that specialising $q=0$ provides the setup studied so far. 

We define an (associative) algebra structure $*$ on $H^*_T(\Xn)[q]$ by taking the coefficients of the expansion of $A(x)+qD(x)$ into powers of $x$ as generators. The following proposition shows that in this way we obtain a commutative ${\bf P}[q]$-algebra structure on $H_T^*(\Xn)[q]$ which specialises by construction to $H^*_T(\op{X})$ if we set $q=0$. Analogously we define the deformation $T'(x)=A'(x)+qD'(x)$ of $A'(x)$ and the corresponding algebra structure $*'$.  

\begin{proposition} The products $*$ and $*'$ are commutative, i.e. for $i,j\in\mathbb{Z}_{\geq 0}$ it holds
\begin{eqnarray*}
T^{(i)}T^{(j)}=T^{(j)}T^{(i)}&\text{ and}&{T'}^{(i)}{T'}^{(j)}={T'}^{(j)}{T'}^{(i)}. 
\end{eqnarray*}
\end{proposition}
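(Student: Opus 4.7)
The plan is to reduce the statement $T^{(i)}T^{(j)}=T^{(j)}T^{(i)}$ for all $i,j\geq 0$ to the single operator-valued identity
\begin{equation*}
T(x_1)T(x_2)\;=\;T(x_2)T(x_1)\quad\text{in }\End(\mV_N)[q][x_1,x_2],
\end{equation*}
since the $T^{(i)}$ are the coefficients of $T(x)$ in $x$. Expanding $T(x)=A(x)+qD(x)$ the equation splits according to powers of $q$:
\begin{align*}
q^0:\;& A(x_1)A(x_2)=A(x_2)A(x_1),\\
q^1:\;& A(x_1)D(x_2)+D(x_1)A(x_2)=A(x_2)D(x_1)+D(x_2)A(x_1),\\
q^2:\;& D(x_1)D(x_2)=D(x_2)D(x_1).
\end{align*}
The $q^0$ and $q^2$ identities are precisely two of the sixteen relations listed in Proposition~\ref{PropYBA}, so nothing has to be done there.

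The only real work is the $q^1$ line, which I rewrite as a symmetry statement for commutators: $[A(x_1),D(x_2)]=[A(x_2),D(x_1)]$. For this I invoke the relation
\begin{equation*}
C(x_2)B(x_1)-C(x_1)B(x_2)\;=\;(x_1-x_2)\bigl(A(x_1)D(x_2)-D(x_2)A(x_1)\bigr)
\end{equation*}
from Proposition~\ref{PropYBA}. The left-hand side is manifestly antisymmetric under the swap $x_1\leftrightarrow x_2$. Swapping the two variables in the displayed identity therefore yields
\begin{equation*}
(x_1-x_2)[A(x_2),D(x_1)]\;=\;C(x_2)B(x_1)-C(x_1)B(x_2)\;=\;(x_1-x_2)[A(x_1),D(x_2)].
\end{equation*}
Since $\End(\mV_N)[x_1,x_2]$ is a free $\bP[x_1,x_2]$-module and hence $(x_1-x_2)$-torsion free, we may cancel the factor $x_1-x_2$ to obtain the required $q^1$ identity. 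This completes the proof for $*$.

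For $*'$ the strategy is identical: $T'(x_1)T'(x_2)=T'(x_2)T'(x_1)$ again splits in powers of $q$, and the $q^0$, $q^2$ pieces are the analogues of $[A,A]=0$ and $[D,D]=0$ in the Yang-Baxter algebra $\op{YB}'$ attached to $(R',L')$. The $q^1$ piece is handled by the $(R',L')$-analogue of the relation displayed above (Proposition~\ref{PropYBA} notes that the full list of sixteen identities has a direct primed counterpart); exactly the same antisymmetry-and-cancel argument applies. I expect no real obstacle: the argument is purely formal manipulation of the Yang-Baxter relations. The only point requiring a little care is the cancellation of $x_1-x_2$, which is why I flag the torsion-freeness of $\End(\mV_N)[x_1,x_2]$ as a $\bP[x_1,x_2]$-module explicitly.
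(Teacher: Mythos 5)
Your proof is correct but follows a genuinely different route from the paper's. The paper uses the standard ``twisted trace'' argument from the integrable systems literature: it observes that $R(x,y)$ commutes with $\bigl(\begin{smallmatrix}1&0\\0&q\end{smallmatrix}\bigr)\otimes\bigl(\begin{smallmatrix}1&0\\0&q\end{smallmatrix}\bigr)$, feeds this into the identity of Proposition~\ref{RMM}, multiplies by $R^{-1}$, and then takes the trace over the auxiliary space $V\otimes V$, which produces $T(x_1)T(x_2)=T(x_2)T(x_1)$ in one stroke for all powers of $q$ simultaneously. You instead expand $T(x)=A(x)+qD(x)$ in powers of $q$ and reduce to three operator identities, two of which ($[A(x_1),A(x_2)]=0$ and $[D(x_1),D(x_2)]=0$) appear verbatim in Proposition~\ref{PropYBA}, and the third of which ($[A(x_1),D(x_2)]=[A(x_2),D(x_1)]$) you extract from the manifestly antisymmetric relation $C(x_2)B(x_1)-C(x_1)B(x_2)=(x_1-x_2)\bigl(A(x_1)D(x_2)-D(x_2)A(x_1)\bigr)$ by cancelling the non-zero-divisor $x_1-x_2$; your explicit appeal to torsion-freeness of $\End_\bP(\mV_N)[x_1,x_2]$ over $\bP[x_1,x_2]$ is the right thing to flag. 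Each approach has its virtues: the trace argument is more robust and in fact establishes the commutativity conceptually as a consequence of $R$ intertwining a suitable group-like twist, without requiring one to read off individual $\op{RTT}$ relations, while your decomposition is more elementary and makes transparent exactly which of the sixteen relations is doing the work. For the primed case both arguments hand-wave similarly; the paper refers to the analogous calculation or to \cite{GoKo}, whereas you rely on the (stated but unlisted) primed counterparts of the relations in Proposition~\ref{PropYBA} -- this is acceptable, though strictly speaking one should confirm that the $(R',L')$-version of the $C'B'$-relation has the same antisymmetric form, which it does because $R'$ is of the same six-vertex shape.
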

 \begin{proof} Consider first $T(x)=A(x)+qD(x)$. Then the claim is equivalent to the commutation relations $T(x_1)T(x_2)=T(x_2)T(x_1)$ of generating series. The idea of the proof of the latter relations is standard in the literature on integrable systems; see \cite{Bax}. We recall it here for completeness. 
 
First note that the $R$-matrix $R(x,y)$ from Definition \ref{ex} satisfies the commutation relation
\[
\left(\begin{smallmatrix}
1&0\\
0 & q
\end{smallmatrix}\right) \otimes
\left(\begin{smallmatrix}
1&0\\
0 & q
\end{smallmatrix}\right)\, R(x,y)=R(x,y)\, 
\left(\begin{smallmatrix}
1&0\\
0 & q
\end{smallmatrix}\right) \otimes
\left(\begin{smallmatrix}
1&0\\
0 & q
\end{smallmatrix}\right)\,.
\]
Proposition \ref{RMM} implies the following identity in $\End_{\bP[x_1,x_2,q]}(V\otimes V\otimes\mathbb{V}_N[x_1,x_2,q])$,
\begin{multline*}
R_{12}(x,y) 
\left(\begin{smallmatrix}
1&0\\
0 & q
\end{smallmatrix}\right)_1 M_1(x_1,t_1,\ldots,t_N)
\left(\begin{smallmatrix}
1&0\\
0 & q
\end{smallmatrix}
\right)_2 M_2(x_2,t_1,\ldots,t_N) =\\
\left(\begin{smallmatrix}
1&0\\
0 & q
\end{smallmatrix}\right)_1 
\left(\begin{smallmatrix}
1&0\\
0 & q
\end{smallmatrix}
\right)_2 R_{12}(x,y)M_1(x_1,t_1,\ldots,t_N)M_2(x_2,t_1,\ldots,t_N)=\\
\left(\begin{smallmatrix}
1&0\\
0 & q
\end{smallmatrix}\right)_1 
\left(\begin{smallmatrix}
1&0\\
0 & q
\end{smallmatrix}
\right)_2 M_2(x_2,t_1,\ldots,t_N) M_1(x_1,t_1,\ldots,t_N)R_{12}(x,y)=\\
\left(\begin{smallmatrix}
1&0\\
0 & q
\end{smallmatrix}
\right)_2 M_2(x_2,t_1,\ldots,t_N) 
\left(\begin{smallmatrix}
1&0\\
0 & q
\end{smallmatrix}\right)_1 
M_1(x_1,t_1,\ldots,t_N)R_{12}(x,y)\,.
\end{multline*}
Multiplying from the left with $R(x,y)^{-1}$ and taking the trace over $V\otimes V$ on both sides of the equality, the assertion follows for $*$. The arguments for $T'(x)=A\rq{}(x)+qD\rq{}(x)$  are completely analogous, or see \cite[Proposition 3.13]{GoKo}.
 \end{proof}

 \begin{remark}
 One can show by similar straightforward arguments that moreover the endomorphisms given by $T^{(i)}$ and $T'^{(j)}$ pairwise commute.
 \end{remark}
 
 \subsection{Equivariant quantum cohomology}
 We obtain a realisation of the  equivariant quantum cohomology $qH^*_T(\Xn)$.
\begin{theorem}
\label{QuantumH}
The algebra $H^*_T(\Xn)$ with the above deformed multiplication structures $*$ or $*'$ are both isomorphic to $qH^*_T(\Xn)$. In particular the quantum deformation of the multiplication by the equivariant Chern classes of the tautological (respective quotient) bundle is represented by the geometric convolution describing the operator $D(x)$ (and $D'(x)$ respectively).
\end{theorem}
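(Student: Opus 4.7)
The plan is to exhibit an explicit isomorphism by matching generators and relations. I focus on the deformation $*$ associated with $T(x)=A(x)+qD(x)$; the case of $*'$ with $T'(x)=A'(x)+qD'(x)$ is parallel, using the quotient bundle in place of the tautological bundle.

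First, I would recall the presentation of equivariant quantum cohomology from \cite{Mich1} (building on \cite{GivKim}): $qH^*_T(\op{Gr}(n,N))$ is generated as a $\bP[q]$-algebra by the equivariant Chern classes $\mathfrak{c}_1(\mathcal{T}_n),\ldots,\mathfrak{c}_n(\mathcal{T}_n)$ of the tautological bundle, modulo a $q$-deformation of the Whitney sum relation arising from the short exact sequence $0\to\mathcal{T}_n\to \mC^N\to \mathcal{Q}_n\to 0$; setting $q=0$ recovers the classical presentation of $H^*_T(\op{Gr}(n,N))$. By Definition~\ref{ABCDgeom} the coefficients of $\mathcal{A}_n(x)$ act on $H^*_T(\Xn)$ as classical multiplication by the $\mathfrak{c}_i(\mathcal{T}_n)$. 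Hence the specialisation of $*$ at $q=0$ coincides with the classical cup product, and the theorem reduces to identifying the quantum correction $qD(x)$ with the quantum deformation in Mihalcea's presentation.

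Next, I would invoke Proposition~\ref{D}: the top-degree coefficient $D^{(n-1)}$ acts on a standard basis vector by the "affine-shift" operator which, under the standard dictionary between $\{0,1\}$-words and Young diagrams in an $n\times(N-n)$-box, is precisely the rim-hook removal operation appearing in the equivariant quantum Pieri rule of \cite{BCFF}, \cite{BBT}; the lower-order coefficients of $D(x)$ then encode the finer quantum corrections to multiplication by $\mathfrak{c}_i(\mathcal{T}_n)$ for $i<n$. To nail down the isomorphism I would verify the quantum Whitney relation in the form
\begin{eqnarray*}
T(x)\,\mathcal{A}'_n(-x)\;=\;\prod_{j=1}^N(x+t_j)\;+\;q\cdot\Gamma_n(x),
\end{eqnarray*}
with the correction $\Gamma_n(x)$ matching Mihalcea's prescription. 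By the $S_N$-equivariance of Proposition~\ref{Propcommute} and Lemma~\ref{sym2}, it is enough to test this identity on the extremal weight vector $v_\zeta$ for $\zeta=(0^{N-n},1^n)$, where the explicit formulae of Corollary~\ref{explicitformulas} and Corollary~\ref{cor:ABCDv0} turn the verification into a direct computation with the displayed products.

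The main obstacle will be the last step: tracking the lower-order quantum corrections and matching them term-by-term with the coefficients of $\Gamma_n(x)$ in Mihalcea's presentation, rather than just matching the leading rim-hook term. A cleaner conceptual alternative is to invoke a uniqueness statement: $qH^*_T(\Xn)$ is characterised up to canonical isomorphism as the unique commutative, associative $\bP[q]$-algebra deformation of $H^*_T(\Xn)$ whose structure constants lie in $\bP[q]$ and whose multiplication by $\mathfrak{c}_1(\mathcal{T}_n)$ is given by the equivariant quantum Pieri rule. The commutativity and associativity for $*$ are already established in the preceding proposition, the $q=0$ specialisation is classical, and the Pieri rule is precisely Proposition~\ref{D} reinterpreted geometrically via Theorem~\ref{match}; hence $*$ is forced to realise $qH^*_T(\Xn)$. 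The analogous argument with $\mathcal{A}'(x)$ and $D'(x)$ in place of $\mathcal{A}(x)$ and $D(x)$ yields the statement for $*'$.
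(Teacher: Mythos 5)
Your ``cleaner conceptual alternative'' is precisely the paper's own proof: it cites Proposition~\ref{D} to show that $D^{(n-1)}$ reproduces the rim-hook quantum correction, relates $\mathfrak{c}_1(\mathcal{T}_n)$ to the Schubert divisor class, reads off the equivariant quantum Pieri rule, and then invokes the uniqueness theorem from \cite{Mich1} (the equivariant quantum Pieri rule determines $qH^*_T(\Xn)$ uniquely). You correctly identified your first route---matching the full quantum Whitney relation $T(x)\mathcal{A}'_n(-x)=\prod(x+t_j)+q\Gamma_n(x)$ term by term---as the harder path and discarded it; the paper avoids it for exactly the reason you anticipate.
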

\begin{proof} According to the known formulae for the equivariant quantum multiplication from \cite[Theorem 1]{Mich1}, Proposition~\ref{D} implies that the operator $D^{n-1}$ gives the quantum correction to the multiplication by the equivariant first Chern class of the tautological bundle, which can be easily connected to the multiplication by the Schubert divisor class, hence one can recover the equivariant quantum Pieri rule in our deformed ring. Now the main result of \cite{Mich1} says that the equivariant Pieri rule determines the quantum equivariant multiplication uniquely and the theorem follows.
\end{proof}

\begin{remark}
The existence of an isomorphism for $*'$ was already established purely combinatorially in  \cite{GoKo}, the geometric interpretation of the operators is new. It is interesting to compare this result with the "classical vs quantum" result form \cite{BM}.
\end{remark} 

\section{Connection with the current algebra $\mg[t]$}
\label{secYBAandcurrent}
In this section we finally connect our Yang-Baxter algebras $\op{YB}_N$ and $\op{YB_N^{\prime}}$ with the universal enveloping algebra $\cU$ of  $\mg[t]$ via some Schur-Weyl duality type result with  $\bH$. The general ideas behind our constructions are not new and used in a similar way already at several places in the geometric representation theory literature to construct quantum groups, in particular Yangians and quantum affine algebras. The new aspect here is however that we directly construct a Schur quotient of the {\it non-quantised}  universal enveloping algebra without having to invoke a limit argument or specialisation procedure to delete the deformation parameter. 

\subsection{$\mV_N$ as representation of $\bH$}
Recall the algebra $\bH$ from Definition~\ref{DefH}. Since $\bH$ is isomorphic to its opposite algebra via $w\mapsto w^{-1}$ for $w\in S_N$, we can (and will) use the previously defined left action of  $\bH$ on $\mV_N$ instead of a (for Schur-Weyl dualities more common) right action. 

\begin{definition}
For a composition $\ba=(a_1,\ldots, a_r)$ of $N$ let $S_{\ba}=S_{a_1}\times S_{a_2}\times\ldots\times S_{a_r}$ be the corresponding parabolic (or Young) subgroup of $S_N$ with idempotent
\begin{eqnarray*}
e_{\ba}&=&\frac{1}{|S_{\ba}|}\sum_{ w\in S_{\ba}}  w \quad \in S_\ba
\end{eqnarray*}
\end{definition}
In particular, if  $\ba=(N)$, then $e_\ba$ is the full symmetriser, whereas for $\ba=(1,\ldots ,1)$ we have $e_\ba=1$.  For any composition $\ba$ of $N$ we have $s_ie_\ba=e_\ba=e_\ba s_i$ if $s_i\in S_\ba$.
\begin{lemma}
\label{easylemma}
With the notation from Definition~\ref{DefH}, we have for any composition $\ba$ of $N$.
\begin{enumerate}[1.)]
\item The elements $wf=w\otimes f$, with $w\in S_N$ and $f$ from a fixed basis of $\bP$, form a basis of $\bH$. \label{basis} 
\item The invariant polynomials ${\bP}^{S_N}=\{f\in \bP\mid {}^w f=f\}$ form the centre  $\cZ=Z(\bH)$ of $\bH$. \label{center}
\item $\bH$ is free over $\cZ$ of rank $|S_N|^2$. \label{free}
\item $\bH e_{\ba}$ (respectively $e_{\ba}\bH$) has basis $fw$ (respectively $wf$), where $f$ through a basis of ${\bP}$ and $w$ through the minimal length coset representatives of $S_N/S_{\ba}$ (respectively of $S_{\ba}\backslash S_N$). \label{Heckemod}
\item $\bH e_{\ba}$ is free over $\mathcal{Z}$ of rank $|S_N||S_N/S_{\ba}|$.
\end{enumerate}
\end{lemma}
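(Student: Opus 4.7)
The lemma collects five standard facts about the smash product (skew group) algebra $\bH=\bP\rtimes S_N$. My plan is to verify them in order, using nothing beyond the definition of the multiplication and Chevalley's theorem on symmetric invariants. Parts (1) and (2) are immediate from the commutation relation; parts (3) and (5) rest on the classical fact that $\bP$ is free of rank $|S_N|$ over $\bP^{S_N}$; part (4) is a bookkeeping argument with parabolic coset representatives.

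For (1), the underlying vector space decomposition $\bH=\bP\otimes\mC[S_N]$ already yields a basis $\{fw\}$ (equivalently $\{wf\}$) once a basis of $\bP$ is fixed. For (2), write $z=\sum_{w}f_{w}w\in Z(\bH)$ with $f_w\in\bP$. Commuting $z$ with an arbitrary $h\in\bP$, and using $wh={}^{w}\!h\cdot w$ to normalise everything to the form polynomial$\,\cdot\,$group element, we obtain $f_w\bigl({}^{w}\!h-h\bigr)=0$ for all $h\in\bP$; since $\bP$ is a domain and ${}^{w}\!=\op{id}$ forces $w=e$, only $f_{e}$ may be nonzero, so $z\in\bP$. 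Then the requirement $wz=zw$ for all $w$ reduces to ${}^{w}z=z$, i.e.\ $z\in\bP^{S_N}$. Conversely every $S_N$-invariant polynomial is visibly central, giving $\cZ=\bP^{S_N}$.

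For (3), invoke Chevalley's theorem: $\bP$ is a free $\bP^{S_N}$-module of rank $|S_N|$. Combined with the $\bP$-module decomposition $\bH=\bigoplus_{w\in S_N}\bP\cdot w$, this gives $\bH$ free over $\cZ$ of rank $|S_N|\cdot|S_N|=|S_N|^{2}$. For (4), use the two standard properties of $e_\ba$: it is an idempotent and $ue_\ba=e_\ba$ for all $u\in S_\ba$. Writing every $w\in S_N$ uniquely as $w=vu$ with $v$ a minimal length coset representative of $S_N/S_\ba$ and $u\in S_\ba$, we see $we_\ba=ve_\ba$, hence $\bH e_\ba=\sum_{v\in D}\bP\,v\,e_\ba$ where $D$ is the set of minimal length representatives. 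To see this sum is direct, suppose $\sum_{v\in D}f_{v}\,v\,e_\ba=0$ with $f_{v}\in\bP$; expanding $e_\ba=\tfrac{1}{|S_\ba|}\sum_{u\in S_\ba}u$ converts this into $\tfrac{1}{|S_\ba|}\sum_{v\in D,u\in S_\ba}f_{v}(vu)=0$, and since $\{vu:v\in D,u\in S_\ba\}=S_N$ with unique representation, part (1) forces all $f_{v}=0$. Expanding each $f_{v}$ in a chosen basis of $\bP$ gives the asserted basis of $\bH e_\ba$; the statement for $e_\ba\bH$ is obtained by the analogous argument on the right (or via the anti-involution $w\mapsto w^{-1}$ noted in the text). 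Finally (5) is immediate from (3) and (4): $\bH e_\ba=\bigoplus_{v\in D}\bP\,v\,e_\ba$ is free over $\cZ$ of rank $|S_N|\cdot|D|=|S_N|\cdot|S_N/S_\ba|$.

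There is no real obstacle: the only non-elementary input is Chevalley's freeness theorem, which is a black box here. The one place requiring minor care is the linear independence step in (4), where it is important to expand $e_\ba$ and appeal to the uniqueness of the factorisation $S_N=\bigsqcup_{v\in D}vS_\ba$ before applying part (1); once this is set up, everything else is a routine bookkeeping exercise in the smash product.
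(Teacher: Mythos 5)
Your proposal is correct and follows essentially the same route as the paper: a direct commutation computation for part (2), Chevalley/invariant-theory freeness for parts (3) and (5), and the unique decomposition $S_N=\bigsqcup_{v\in D}vS_\ba$ for part (4). The only difference is cosmetic — you spell out the linear-independence step in (4) that the paper delegates to Humphreys, and you commute with a generic $h\in\bP$ rather than the coordinate functions $t_i$ in part (2).
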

\begin{proof}
The first part is clear from Definition~\ref{DefH}. For the second one could apply  \cite[Theorem 6.5]{Lusztigcentre} or argue directly as follows: clearly $\bP^{S_N}\subseteq\cZ$ by definition of $\bH$. Conversely, let $z\in \cZ$, $z=\sum_{w\in S_N}f_ww$ for some $f_w\in\bP$. Then for any $i\in [N]$ we have $t_iz=\sum_{w\in S_N}t_if_w$ and $z_i=\sum_{w\in S_N}f_wwt_i=\sum_{w\in S_N}f_w\;{}^{w}\! t_iw$. 
Thus, by the first part  $f_w\;{}^{w}\!t_i=f_wt_i$ for all $i$ and $w$ such that $f_w\not=0$. This means however that $f_w=0$ for $w\not=1$. Thus $z\in\bP$, and then even $z\in\bP^{S_N}$ by definition of the multiplication in $\bH$.  
For third claim note that $\bH$ is free over ${\bP}$ of rank $|S_N|$ by definition, hence $\bH$ is free over ${\bP}^{S_N}$ of rank $|S_N|^2$ by invariant theory, \cite[18.3]{Kane}. The fourth part follows then directly from the first and the definition of $e_{\ba}$ using \cite[Proposition 1.10]{Humphreys} using the fact that for $w\in S_N$ the elements ${}^w\!f$ form a basis of $\bP$ if the $f$ do. Finally $\bH e_{\ba}$ is free over ${\bP}$ of rank $|S_N/S_{\ba}|$, hence is free over ${\bP}^{S_N}$ of rank $|S_N||S_N/S_{\ba}|$, again by \cite[18.3]{Kane}.
\end{proof}

\begin{proposition}
\label{Homs}
Let $\ba$, $\bb$ be compositions of $N$ and consider the space of $\bH$-module maps 
\begin{equation}\label{2}
\op{Hom}_{\bH}(\bH e_{\ba}, \bH e_{\bb})=e_{\ba} \bH e_{\bb}.
\end{equation}
\begin{enumerate}[1.)]
\item A $\mathbb{C}$-basis $\mathbb{B}$ of ($\ref{2}$) is given by all $e_{\ba}  w f e_{\bb}$ where $w$ runs through all minimal length double coset representatives in $S_{\ba}\backslash S_N/S_\bb$, and $f$ through a basis of ${\bP}^{S_\bb\cap  (w^{-1}S_{\ba}w)}$. \label{first}
\item In particular (\ref{2}) is free over $\cZ$ of finite rank.\label{twotwo}
\end{enumerate}
\end{proposition}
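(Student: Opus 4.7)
The isomorphism \eqref{2} is the standard identification for idempotents in an associative algebra: a left $\bH$-module map $\phi\colon \bH e_{\ba}\to \bH e_{\bb}$ is determined by the image $x=\phi(e_{\ba})$, which necessarily satisfies $x=\phi(e_{\ba}e_{\ba})=e_{\ba}\phi(e_{\ba})\in e_{\ba}\bH e_{\bb}$; conversely every $x\in e_{\ba}\bH e_{\bb}$ defines such a map by $he_{\ba}\mapsto hx$. So the first task reduces to analysing $e_{\ba}\bH e_{\bb}$ directly.

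For the basis statement in \ref{first}, the plan is to combine the double coset decomposition $S_N=\bigsqcup_{w} S_{\ba}wS_{\bb}$ (with $w$ ranging over the minimal length representatives) with the skew-commutation relation in $\bH$. First I would note the key absorption identity: for $y\in K:=S_{\bb}\cap w^{-1}S_{\ba}w$ one has $wyw^{-1}\in S_{\ba}$, hence $e_{\ba}(wyw^{-1})=e_{\ba}$, and moving $y$ across $e_{\bb}$ together with $fy=y({}^{y^{-1}}\!f)$ yields
\[
e_{\ba}wfe_{\bb}=e_{\ba}w({}^{y^{-1}}\!f)e_{\bb}\quad\text{for every }y\in K.
\]
Averaging over $K$ then shows $e_{\ba}wfe_{\bb}=e_{\ba}w\,\op{Avg}_{K}(f)\,e_{\bb}$, so only the $K$-invariant part of $f$ contributes, which already forces the restriction $f\in\bP^{K}$. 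Combined with the observation that for any $u\in S_{\ba}$, $v\in S_{\bb}$ and $g\in\bP$ the element $e_{\ba}uwvge_{\bb}$ can be rewritten as $e_{\ba}w({}^{v}g)e_{\bb}$, this yields a spanning set of the claimed form.

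Linear independence is the technical core. The plan is to expand $e_{\ba}wfe_{\bb}$ explicitly using the coset parametrisation $S_{\ba}=\bigsqcup_{\tilde u}\tilde u\cdot(S_{\ba}\cap wS_{\bb}w^{-1})$: a direct computation, using $(uw)f=({}^{uw}\!f)(uw)$ and $K$-invariance of $f$, gives
\[
e_{\ba}wfe_{\bb}=\frac{|K|}{|S_{\ba}|}\sum_{\tilde u}({}^{\tilde u w}\!f)\,\tilde u w\,e_{\bb}.
\]
Now the elements $\{\tilde u w e_{\bb}\}$, as $w$ runs through double coset representatives and $\tilde u$ through the corresponding left coset representatives of $S_{\ba}/(S_{\ba}\cap wS_{\bb}w^{-1})$, are precisely distinct right coset sums in $\bH e_{\bb}$; by Lemma~\ref{easylemma}\ref{Heckemod} they form part of a $\bP$-basis of $\bH e_{\bb}$. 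A vanishing relation $\sum_{[w]}e_{\ba}wf_{w}e_{\bb}=0$ therefore forces each coefficient ${}^{\tilde u w}\!f_{w}$ to vanish, and since ${}^{g}$ acts bijectively on $\bP$, each $f_{w}=0$. This is the step I expect to require the most bookkeeping.

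Part \ref{twotwo} then follows cheaply. By the Chevalley–Shephard–Todd theorem $\bP$ is free of rank $|S_N|$ over $\cZ=\bP^{S_N}$, so for any subgroup $K\leq S_N$ the invariants $\bP^{K}$ are free of finite rank $|S_N|/|K|$ over $\cZ$ (as in the proof of Lemma~\ref{easylemma}\ref{free}). Since the $\mathbb{C}$-basis found in \ref{first} is obtained by letting $f$ run through $\mathbb{C}$-bases of the $\bP^{K_w}$ (with $K_w=S_{\bb}\cap w^{-1}S_{\ba}w$) over finitely many double cosets, the same elements form a $\cZ$-basis after regrouping, proving freeness of finite rank.
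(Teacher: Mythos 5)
Your proof is correct, and while the overall strategy is the same as the paper's (identify the Hom space with $e_{\ba}\bH e_{\bb}$, use the double coset decomposition, and reduce $f$ modulo the relevant invariance), the two reduction steps are implemented differently. For the spanning claim the paper reduces $f$ to its $K$-invariant part one simple reflection at a time, writing $f=f_1+(t_i-t_{i+1})f_2$ with $f_1,f_2$ $s_i$-invariant and showing $e_{\ba}w(t_i-t_{i+1})f_2 e_{\bb}=0$ by the $x=-x$ trick; you instead absorb an arbitrary $y\in K=S_{\bb}\cap w^{-1}S_{\ba}w$ on both sides of $e_{\ba}wf e_{\bb}$ and average over $K$ in one step, which is cleaner and avoids having to know that $\bP=\bP^{s_i}\oplus(t_i-t_{i+1})\bP^{s_i}$. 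For linear independence the paper simply points at Lemma~\ref{easylemma} together with \cite[Prop.~1.10]{Humphreys}, viewing $\mathbb{B}\subseteq e_{\ba}\bH$; you make this explicit by expanding $e_{\ba}wfe_{\bb}=\tfrac{|K|}{|S_{\ba}|}\sum_{\tilde u}({}^{\tilde u w}\!f)\,\tilde uw\,e_{\bb}$ and reading off vanishing from the known $\bP$-basis of $\bH e_{\bb}$. Your explicit expansion is more self-contained (and, as a bonus, records the normalising constant $|K|/|S_{\ba}|$), at the cost of the extra bookkeeping you anticipated; the paper's version is shorter but leans on the reader to unwind the cited references. Part 2.) is handled in essentially the same way in both — the only nuance worth flagging is that the freeness of $\bP^{K}$ over $\cZ=\bP^{S_N}$ is not literally the Chevalley–Shephard–Todd statement you cite but follows from it via the Reynolds projector making $\bP^{K}$ a graded $\cZ$-direct summand of the free module $\bP$; the rank $|S_N|/|K|$ then comes from the Galois degree at the level of fraction fields.
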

\begin{proof}
The identification $\eqref{2}$ is given by $\varphi\rightarrow \varphi(e_\ba)$, since $\varphi$ is determined by $\varphi(e_\ba)$ and $\varphi(e_\ba)=\varphi(e_\ba^2)=e_\ba\varphi(e_\ba)\in e_\ba \bH e_\bb$. The inverse map sends $x\in  e_\ba \bH e_\bb$ to the operator of right multiplication with $x$.
We first show that the elements in $\mathbb{B}$  span. Clearly, the $e_\ba  w f e_\bb$ for $w\in S_N$ and $f\in\bP$ span. Since $e_\ba  w f e_\bb = e_\ba s_i w f e_\bb=$ for any $s_i\in S_\ba$ and $e_\ba  w f e_\bb = e_\ba  w f s_ie_\bb=e_\ba w s_i {}^{s_i}\!f e_\bb$ for $s_i\in S_\bb$ we can assume $w$ is of the required form.  Let now $s_i\in S_\bb\cap  (w^{-1}S_\ba w)$ and $f\in {\bP}$. Write
$f=f_1+(x_i-x_{i+1})f_2$, with uniquely defined $s_i$-invariant polynomials $f_1,\,f_2\in {\bP}$. Then $x:=e_\ba\, w\,(x_i-x_{i+1})f_2\,e_\bb=0$,  since
\begin{equation*}
\begin{array}[t]{cclclcl}
x&=&e_\ba\, w \,(x_i-x_{i+1})f_2\,s_i\,e_\bb&=&-e_\ba w s_i\,(x_i-x_{i+1})f_2\,e_\bb\\
&=&-e_\ba w s_i w^{-1} w(x_i-x_{i+1})f_2\,e_\bb
&=&-e_\ba w (x_i-x_{i+1})f_2\,e_\bb&=&-x.
\end{array}
\end{equation*}
Hence $e_\ba w f\,e_\bb=e_\ba w f_1\,e_\bb$, thus we can assume $f$ to be $s_i$-invariant for any  $s_i\in S_\bb\cap  (w^{-1}S_\ba w)$ and so $\mathbb{B}$ spans. On the other hand $\mathbb{B}\subseteq e_\ba He_\bb\subseteq e_\ba H$ is a linearly independent subset thanks to Lemma ~\ref{easylemma} and again \cite[Proposition 1.10]{Humphreys}, and thus a basis. The second claim is now clear since $e_\ba He_\bb$ is free over ${\bP}^{S_\bb\cap  (w^{-1}S_\ba w)}$ of finite rank, hence free over $\cZ={\bP}^S_N$ of finite rank by invariant theory, \cite[18.3]{Kane}.
\end{proof}

\begin{lemma} 
\label{permmodule}
Let $0\leq k\leq N$, $\zeta=(0,\ldots,0,1,\ldots,1)\in \Lambda_n$ and $\ba=(k,N-k)=e_{(k,N-k)}$. With the action of $\bH$ from Proposition~\ref{Haction}, there is then an isomorphism of $\bH$-modules
\begin{eqnarray*}
(V[t])^{\otimes N}_k&\cong&\bH e_{\ba}, \quad\quad pv_\la\longmapsto pwe_{\ba},
\end{eqnarray*}
where $w\in S_N$ is such that $v_\la=v_{w^{-1}(\zeta)}$.
\end{lemma}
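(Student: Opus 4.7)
My plan is as follows. First, I would observe that $\zeta = (0,\ldots,0,1,\ldots,1)$ satisfies $\zeta_i = \zeta_{i+1}$ for every $s_i \in S_\ba$, so Proposition~\ref{Haction} forces ${\bf s}_i v_\zeta = v_\zeta$ for all such $i$. Consequently $\sigma v_\zeta = v_\zeta$ for every $\sigma \in S_\ba$, and in particular $e_\ba v_\zeta = v_\zeta$. Viewing $\bH e_\ba$ as the induced module $\operatorname{Ind}_{\mathbb{C}[S_\ba]}^{\bH} \mathbb{C}$ (with $e_\ba$ the image of $1 \otimes 1$), this gives a unique left $\bH$-linear map $\Phi : \bH e_\ba \to (V[t])^{\otimes N}_k$ with $\Phi(e_\ba) = v_\zeta$, explicitly $x e_\ba \mapsto x v_\zeta$. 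The map of the statement, $p v_\la \mapsto p w e_\ba$, will then be shown to be the two-sided inverse of $\Phi$.

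Next I would verify well-definedness of the inverse map. The $w \in S_N$ with $w^{-1}\zeta = \la$ are precisely the elements of a single left coset $S_\ba w_\la$, since $\operatorname{Stab}(\zeta) = S_\ba$. Selecting the minimum-length representative $w_\la$ in each coset makes the assignment $v_\la \mapsto w_\la e_\ba$ unambiguous. $\bP$-linearity is built in. For $\bH$-linearity I would check compatibility with each simple reflection $s_i$ via a case analysis on whether $\la_i < \la_{i+1}$, $\la_i = \la_{i+1}$, or $\la_i > \la_{i+1}$, combining the formula of Proposition~\ref{Haction} for the action on the $v_\la$-side with the cross relations $s_i f = ({}^{s_i}f) s_i$ from Definition~\ref{DefH} for moving $s_i$ past $w_\la$ in $s_i w_\la e_\ba$ on the $\bH e_\ba$-side.

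Finally I would conclude bijectivity by a rank and basis comparison. By Lemma~\ref{easylemma}\ref{Heckemod}, $\bH e_\ba$ is free over $\bP$ of rank $|S_N/S_\ba| = \binom{N}{n}$ with $\bP$-basis $\{w_\la e_\ba\}_{\la \in \Lambda_n}$, while $(V[t])^{\otimes N}_k$ is free of the same rank with basis $\{v_\la\}_{\la \in \Lambda_n}$. Since the proposed map is $\bH$-linear, $\bP$-linear, and sends indexing sets bijectively onto one another, it is an isomorphism. I expect the main obstacle to be the $\bH$-linearity check in the $\la_i < \la_{i+1}$ case, where the polynomial factor $(t_i - t_{i+1})$ on the $\mV_N$-side must be produced precisely by the non-commutativity of $s_i$ with the $t_j$'s on the $\bH e_\ba$-side as one passes from $w_\la$ to $w_{s_i\la}$; this reduces to a finite combinatorial verification once a compatible choice of coset representatives is fixed.
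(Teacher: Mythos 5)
Your observation that $v_\zeta$ is $S_\ba$-invariant is correct, so $\Phi\colon xe_\ba\mapsto xv_\zeta$ is indeed the unique $\bH$-module map $\bH e_\ba\to(V[t])^{\otimes N}_k$ with $\Phi(e_\ba)=v_\zeta$. The step that fails is the claim that the stated assignment $pv_\la\mapsto pwe_\ba$ is the two-sided inverse of $\Phi$. By Proposition~\ref{Haction}, taking $w=s_k$ (the one simple reflection with $\zeta_k<\zeta_{k+1}$) one gets $\Phi(s_ke_\ba)={\bf s}_kv_\zeta=v_\zeta+(t_k-t_{k+1})v_{s_k\zeta}$, which is not $v_{s_k\zeta}$. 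More fundamentally, $\Phi$ cannot be surjective: its image $\bH v_\zeta$ is a proper $\bP$-submodule, because $v_{s_k\zeta}$ is only ever produced with the factor $t_k-t_{k+1}$ attached, and this factor cannot be divided out inside $\bP$. Already for $N=2$ and $\ba=(1,1)$ one has $\bH v_{(0,1)}=\bP v_{(0,1)}\oplus\bP(t_1-t_2)v_{(1,0)}\subsetneq\mV_{2,1}$, whereas $\bH e_\ba=\bH$ is cyclic on $1$; in fact $\mV_{2,1}$ has no cyclic generator under this action, since for any $x$ the determinant of the $\bP$-matrix expressing $x,{\bf s}_1x$ in the standard basis is divisible by $t_1-t_2$. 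Consequently the case-by-case $\bH$-linearity check you defer to the situation $\la_i<\la_{i+1}$ cannot go through: on the $\bH e_\ba$-side $s_i\cdot(pwe_\ba)={}^{s_i}\!p\,s_iwe_\ba$ is a pure permutation of coset representatives with no polynomial correction, while the $\mV_N$-side acquires the $(t_i-t_{i+1})$-term from \eqref{sym}, and the ``non-commutativity of $s_i$ with the $t_j$'s'' you invoke only produces the substitution $p\mapsto{}^{s_i}\!p$, never a new additive term.

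For comparison, the paper's own proof is a one-liner: it records the $\bP$-module isomorphism from Lemma~\ref{easylemma}(\ref{Heckemod}) and then asserts that the $S_N$-action on $\bH e_\ba$ ``translates into the simultaneous permutation of the variables and the $\{0,1\}$-words.'' That description matches the untwisted permutation action $s_i\cdot v_\la=v_{s_i\la}$ (for which the argument does close, with the index convention $\la=w\zeta$) but does not account for the correction term in the ${\bf s}_i$ of Proposition~\ref{Haction}. So the obstruction you would hit is built into the statement as literally written; your detour through $\Phi$ does not introduce it, it merely exposes it at the very first step.
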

\begin{proof} The map is an isomorphism of vector spaces, even of $\bP$-modules, by Lemma~\ref{easylemma} part \ref{Heckemod}. Since $s_ipwe_{\ba}={}^{s_i}\!p\,s_iwe_{\ba}$, the $S_N$-action on $\bH e_\ba$ translates into the simultaneous permutation of the variables $t_1,\ldots,t_N$ and $\{0,1\}$ words labelling the basis vectors in $(V[t])^{\otimes N}_k$.
\end{proof}

\subsection{$\mV_N$ as representation of $\mg[t]$} 
We consider the complex Lie algebra $\mg$ with its standard basis $E=E_{1,2}$, $F=E_{2,1}$, $H_1=E_{1,1}$ and $H_2=E_{2,2}$ written in matrix units.   
\begin{definition}
Let $\mg[t]$ be the {\it current (Lie) algebra} for $\mg$, that is
$\mg[t]=\mg\otimes \Bbb C[t]$ as vector space, with Lie bracket defined for $x,y\in\mg$ $i,j\in \mathbb{Z}_{\geq0}$ as
$[x\otimes t^i,y\otimes t^j]=[x,y]\otimes t^{i+j}.$
\end{definition}
It acts on $V[t]$ in the obvious way, and on $V[t]^{\otimes N}$ by the usual comultiplication
$$\Delta(x\otimes t^a)=(x\otimes t^a)\otimes 1 + 1\otimes (x\otimes t^a),$$
where $1$ denotes the identity map. This action was considered explicitly e.g. in \cite{Varchenko}.

Let  $\cU=\cU(\mg[t])$ denote the universal enveloping algebra of $\mg[t]$. 
\begin{proposition} 
\label{surj}
The action map induces a surjective algebra homomorphism
\begin{eqnarray*}
\Psi:\;\cU&\longrightarrow& \End_{\bH} (V[t]^{\otimes N}).
\end{eqnarray*}
\end{proposition}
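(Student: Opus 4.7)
The plan is to exploit the decomposition $\mV_N=\bigoplus_{n=0}^N \mV_{N,n}$ into weight spaces, which is preserved by both sides: each $\mV_{N,n}$ is $\bH$-stable by Lemma~\ref{weightspaces} (or directly from the formulas for ${\bf s}_i$), while the $\cU$-action preserves it because $(H_1+H_2)\otimes 1$ acts as the scalar $N$ and $(H_1-H_2)\otimes 1$ detects the weight $n$. Via Lemma~\ref{permmodule} I identify each weight space with an induced module, $\mV_{N,n}\cong \bH e_{(n,N-n)}$, so that
\[
\op{End}_\bH(\mV_N)\;=\;\bigoplus_{k,l}\op{Hom}_\bH(\mV_{N,k},\mV_{N,l})\;=\;\bigoplus_{k,l} e_{(k,N-k)}\,\bH\, e_{(l,N-l)},
\]
and Proposition~\ref{Homs} furnishes an explicit $\cZ$-basis of each summand. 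It therefore suffices to show that for every pair $(k,l)$ the induced map $\Psi_{k,l}\colon \cU\to e_{(k,N-k)}\bH e_{(l,N-l)}$ is surjective.

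The first step is to show that the full centre $\cZ=\bP^{S_N}$ already lies in $\Psi(\cU)$. This is transparent: the central element $(H_1+H_2)\otimes t^i\in\cU$ acts on $V[t]$ as the identity twisted by $t^i$, hence on $\mV_N$ as scalar multiplication by the power sum $p_i:=t_1^i+\cdots+t_N^i$. Varying $i$ gives a generating set of $\bP^{S_N}$, so $\Psi(\cU)$ is automatically a $\cZ$-submodule of the target, and one only needs to produce preimages of a $\cZ$-generating set of each Hom summand.

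The second step produces such preimages using the raising/lowering generators $E\otimes t^i$ and $F\otimes t^j$. Since $(E\otimes t^i)$ acts on $v_\la$ as $\sum_{j:\la_j=1} t_j^i\,v_{\la^{(j)}}$, an iterated product $(E\otimes t^{i_1})\cdots(E\otimes t^{i_r})$ gives
\[
\sum_{j_1,\ldots,j_r \text{ pairwise distinct}} t_{j_1}^{i_1}\cdots t_{j_r}^{i_r}\, E_{j_1}\cdots E_{j_r},
\]
whose coefficients, as the exponents $(i_1,\ldots,i_r)$ range, span all symmetric polynomials in any $r$-element subset $\{t_{j_1},\ldots,t_{j_r}\}$. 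An analogous description holds for $F\otimes t^j$-monomials. Together with multiplication by $\cZ$ and suitable interleavings of $E$- and $F$-monomials, these are designed to cover every stratum of $e_{(k,N-k)}\bH e_{(l,N-l)}$ from Proposition~\ref{Homs}: the minimal-length double coset representative $w\in S_{(k,N-k)}\backslash S_N/S_{(l,N-l)}$ (indexed by the overlap count $a\in\{\max(0,k+l-N),\ldots,\min(k,l)\}$) records which block of positions is acted on, while the relative Weyl invariants $\bP^{S_{(l,N-l)}\cap w^{-1}S_{(k,N-k)}w}$ are precisely the polynomial coefficients produced by varying the exponents.

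The principal obstacle is the rigorous verification of this matching: one has to check that every basis element of $e_\ba\bH e_\bb$ (and not merely a proper $\cZ$-submodule) is actually hit, i.e.\ that the exponent-choices really do produce all the required relative invariants. A convenient route is to pass to the localisation $\mV_N^{\op{loc}}$ and to the Bethe basis via Corollary~\ref{Bethe}, where the $\bH$-action becomes the plain $S_N$-permutation of basis vectors and the equivariance/invariance conditions become transparent; the matching then reduces to an $S_N$-equivariance check together with a rank count over $\cZ$, both of which can be read off directly from Proposition~\ref{Homs}. One then descends back to the unlocalised statement using that both sides are finitely generated $\cZ$-modules and that the localisation is flat.
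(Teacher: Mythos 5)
Your plan takes a genuinely different route from the paper. The paper's proof identifies $\End_\bH(V[t]^{\otimes N})$ with the $S_N$-invariants in $\End_{\mC[t]}(V[t])^{\otimes N}$ (via Lemma~\ref{isovect} and the translation of $\bH$-linearity into $S_N$-equivariance), then invokes polarisation to reduce surjectivity to showing each $A\otimes\cdots\otimes A$ with $A\in\mg$ lies in the image, which follows by expressing these via Newton-type identities in the $\Delta(A^j)$. Your proposal instead works component by component through the decomposition $\End_\bH(\mV_N)=\bigoplus_{k,l}e_{(k,N-k)}\bH\,e_{(l,N-l)}$ and the explicit $\cZ$-basis from Proposition~\ref{Homs}, and tries to hit each basis stratum with monomials in $E\otimes t^i$ and $F\otimes t^j$. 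This is a legitimate and potentially more informative strategy, but as written it has two real gaps.

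First, the descent from the localised statement to the unlocalised one is not valid. If $M\subset N$ are finitely generated $\cZ$-modules and $M_S=N_S$ for some multiplicative set $S$, flatness only yields $(N/M)_S=0$, i.e.\ that the cokernel is $S$-torsion; it does \emph{not} give $N/M=0$ unless one additionally knows the cokernel is $S$-torsion-free. (Take $M=(f)\subset\cZ=N$ with $f\in S$ for a counterexample.) Moreover the relevant localisation $\op{loc}$ inverts the non-symmetric elements $t_a-t_b$, so it is not even a localisation at $\cZ$. Thus the concluding sentence of the proposal does not establish surjectivity of $\Psi$; it at best establishes surjectivity after inverting the $t_a-t_b$'s, which is a strictly weaker statement. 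A ``rank count over $\cZ$'' likewise only shows the image has full generic rank, hence again torsion cokernel, not surjectivity.

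Second, the core matching step --- that the coefficients appearing in iterated products $(E\otimes t^{i_1})\cdots(E\otimes t^{i_r})$, suitably interleaved with $F$-monomials and multiplied by $\cZ$, actually produce a $\cZ$-generating set of each stratum $e_\ba w\,\bP^{S_\bb\cap w^{-1}S_\ba w}\,e_\bb$ of Proposition~\ref{Homs} --- is precisely what needs to be proved, and you explicitly punt on it. The localisation detour you propose does not discharge it, both because of the descent issue above and because the $S_N$-equivariance check by itself does not determine which $\cZ$-submodule is attained. To make this route work one would have to carry out the combinatorial verification directly, without appealing to localisation. As a minor remark, the opening claim that ``the $\cU$-action preserves'' each weight space $\mV_{N,n}$ is wrong ($E\otimes t^i$ raises weight); what is true, and what your argument actually uses, is that the decomposition is $\bH$-stable, which induces the block decomposition of $\End_\bH(\mV_N)$.

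In short: a different and worthwhile decomposition-based strategy, closer in spirit to an explicit Schur-algebra computation, but the current write-up substitutes a flawed localisation-and-descent argument for the substantive combinatorial verification. The paper's polarisation argument avoids these difficulties entirely by staying at the level of $S_N$-invariants of the full endomorphism algebra.
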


Note that the $\cU$-action obviously commutes with the $\bH$-action, hence the map is well-defined. We prepare the rest of the proof with the following easy fact:
\begin{lemma}
\label{isovect}
There is an isomorphism of vector spaces
\begin{eqnarray}
\label{toshow}
\End_{\Bbb C[t]}(V[t])^{\otimes N}&\cong &\End_{\bP}(V^{\otimes N}\otimes {\bP})\\
{\bf A}:=A_1\otimes\ldots\otimes A_d&\mapsto& f_{\bf A}\nonumber
\end{eqnarray}
with $f_{\bf A}(v_{i_1}\otimes\ldots\otimes v_{i_d}) = A_1v_{i_1}\otimes\ldots\otimes A_{i_d}v_d$, and the identification $V^{\otimes N}\otimes {\bP}=
(V[t]\otimes\ldots\otimes V[t])\otimes_{\bP}{\bP}.$
\end{lemma}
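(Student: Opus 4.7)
The plan is to identify both sides of \eqref{toshow} with a single canonical $\bP$-module, namely $\End_{\mathbb{C}}(V^{\otimes N})\otimes_{\mathbb{C}}\bP$, and then to verify that the explicit assignment ${\bf A}\mapsto f_{\bf A}$ coincides with the composite of these two canonical identifications. In this sense the lemma is essentially a bookkeeping statement about tensor products of matrix algebras, and no deep input is required.

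First I would use that $V[t]=V\otimes\mathbb{C}[t]$ is a finitely generated free $\mathbb{C}[t]$-module to obtain the well-known algebra isomorphism $\End_{\mathbb{C}[t]}(V[t])\cong\End_{\mathbb{C}}(V)\otimes_{\mathbb{C}}\mathbb{C}[t]$, sending $A\otimes p(t)$ to the endomorphism $v\otimes q(t)\mapsto A(v)\otimes p(t)q(t)$. Taking the $N$-fold tensor product over $\mathbb{C}$ and relabelling the variable $t$ in the $i$-th factor as $t_i$, this produces
$$\End_{\mathbb{C}[t]}(V[t])^{\otimes N}\;\cong\;\End_{\mathbb{C}}(V)^{\otimes N}\otimes_{\mathbb{C}}\bP\;\cong\;\End_{\mathbb{C}}(V^{\otimes N})\otimes_{\mathbb{C}}\bP,$$
where the second step is the standard $\End_{\mathbb{C}}(V)^{\otimes N}\cong\End_{\mathbb{C}}(V^{\otimes N})$, available since $V$ is finite-dimensional.

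Symmetrically, since $V^{\otimes N}\otimes\bP$ is free over $\bP$ of rank $2^N$, writing $\bP$-linear endomorphisms as matrices with $\bP$-entries in a fixed $\mathbb{C}$-basis of $V^{\otimes N}$ gives the canonical isomorphism $\End_{\bP}(V^{\otimes N}\otimes\bP)\cong\End_{\mathbb{C}}(V^{\otimes N})\otimes_{\mathbb{C}}\bP$. Composing the two identifications produces the required $\bP$-linear bijection between the two sides of \eqref{toshow}.

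To close the argument I would chase a simple tensor ${\bf A}=A_1\otimes\cdots\otimes A_N$ through these identifications. It is sent first to the Kronecker product $A_1\otimes\cdots\otimes A_N\in\End_{\mathbb{C}}(V^{\otimes N})\otimes\bP$ (with the ``$t$ in the $i$-th slot'' recorded as the polynomial variable $t_i$), and this Kronecker product acts on $V^{\otimes N}\otimes\bP$ precisely by the rule $v_{i_1}\otimes\cdots\otimes v_{i_N}\mapsto A_1v_{i_1}\otimes\cdots\otimes A_Nv_{i_N}$, which is by definition $f_{\bf A}$. There is no real obstacle: the only item to keep track of is the relabelling $t\leadsto t_i$ in each factor and the standard identification of iterated tensor products of matrix algebras with the matrix algebra of the tensor product, both of which are formal.
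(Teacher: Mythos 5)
Your proposal is correct and takes essentially the same approach as the paper: the paper's proof is the explicit basis-matching version of your argument, exhibiting the $\mathbb{C}$-basis $E_{a_1b_1}t^{r_1}\otimes\cdots\otimes E_{a_Nb_N}t^{r_N}$ of the left-hand side and $E_{\mathbf{a},\mathbf{b}}\,t_1^{r_1}\cdots t_N^{r_N}$ of the right-hand side and sending one to the other. Your packaging in terms of the canonical isomorphisms $\End_{\mathbb{C}[t]}(V[t])\cong\End_{\mathbb{C}}(V)\otimes\mathbb{C}[t]$ and $\End_{\bP}(V^{\otimes N}\otimes\bP)\cong\End_{\mathbb{C}}(V^{\otimes N})\otimes\bP$ is just a slightly more structured way of saying the same thing.
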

\begin{proof}
A $\Bbb{C}$-basis of $\End_{\Bbb C[t]}(V[t])$ is given by $E_{a,b}t^r$ where $1\leq a,b\leq 2$, $r\geq 0$, and their $N$-fold tensor products $B_{\bf a,\bf b}=E_{a_1b_1}t^{r_1}\otimes\ldots\otimes E_{a_N,b_N}t^{r_N}$ form a basis of the left hand side in \eqref{toshow}. The right hand side has basis
$E_{\bf a,\bf b}p$ where ${\bf a},{\bf b}\in \Lambda$ and $p$ runs through a basis of $\bP$. Then $B_{\bf a,\bf b}\mapsto E_{\bf a,\bf b}p$, where $p=t_1^{r_1}\cdots t_N^{r_N}$ defines an isomorphism of $\Bbb{C}$-vector spaces as claimed. 
\end{proof}
\begin{proof}[Proof of Proposition~\ref{surj}]

Note first that 
$\End_{\bH} (V[t]^{\otimes N})=\End_{\bH}(V^{\otimes N}\otimes {\bP})$ can be identified via Lemma~\ref{isovect} with $(\End_{\bP} (V^{\otimes N}\otimes {\bP}))^{S_N}$ where $S_N$ acts on endomorphisms $f$ as $(w\cdot f)(x)= w f(w^{-1}(x))$ for $w\in S_N$. By Lemma~\ref{isovect} this is isomorphic to $(\End_{\Bbb C[t]}(V[t])^{\otimes N})^{S_N}$ where the $S_N$-action is just the permutation of the tensor factors, such that $s_i$ swaps the $i$th and $(i+1)$th factor. 
Hence to establish the claim, is suffices to show that the $S_N$-invariants in $\End_{\Bbb C[t]}(V[t])^{\otimes N}$ are in the image of $\Psi$. By polarisation, \cite[Lemma B.2.3]{GW}, it is even enough to see that all $A\otimes\ldots\otimes A$ with $A\in\mg$ are in the image of $\Psi$. By definition elements of the form $\Delta(xt^a)$, $x\in \mg$, $a\in \Bbb Z_{\leq 0}$ are in the image. 
One easily verifies inductively that any $A\otimes\ldots\otimes A$ is contained in the subalgebra generated by these elements. For $N=1$ there is nothing to check, and then for instance 
\begin{eqnarray*}
A\otimes A&=&\frac{1}{2}\left(\Delta (A)^2-\Delta (A^2)\right),\\
A\otimes A\otimes A&=& \frac{1}{6}\left(\Delta (A)^3 - 3\Delta (A^2)\Delta (A) + 2\Delta (A^3)\right).
\end{eqnarray*}
Hence, $\Psi$ is a surjective algebra homomorphism. 
\end{proof}
\begin{remark}
We consider in Proposition~\ref{surj} only the case of $\mg[t]$ relevant to our setup, but the analogous statement holds for $\mathfrak{gl}_n[t]$ for any $n\geq 2$ with the obvious generalisation of the arguments. The vector space $V$ should be replaced by the natural representation of $\mathfrak{gl}_n$ and one should work with the modules $\bigoplus_\lambda e_{\ba}H$ where $\ba$ runs through all compositions of $N$ with not more than $n$ parts.
\end{remark}
\subsection{Localised Schur algebra and Yang-Baxter algebras}
\begin{definition}
Let $\cS\subset\cU$ be the subalgebra generated by all $(H_1+H_2)\otimes t^r\in\mg[t]$, $r\geq 0$. 
By definition $\cS$ is multiplicatively closed and central in $\cU$. Let $\cU_{\cS}$ be the (Ore) localisation of $\cU$ at $\cS$. (That means we formally make the elements in $\cS$ invertible).
\end{definition}
Note that $(H_1+H_2)\otimes t^r$ acts by multiplication with the $r$-th symmetric power sum $p_r(t_1,t_2,\ldots, t_N)=t_1^r+\ldots+t^r_N$ on $V^{\otimes N}\otimes {\bP}$. Since the $p_r$ generate the algebra of symmetric functions,  the image of $\cS$ under $\Psi$ are all endomorphisms given by multiplication with a symmetric polynomial, in particular $\Psi(\cS)$ agrees with the image of the action of $\cZ$ via the identification from Lemma~\ref{easylemma}, part \ref{center}. Let $\bH_\cZ$ denote the localisation of $\bH$ at $\cZ$ and consider the $\bH_\cZ$-module 
\begin{eqnarray}
\label{Vlocalisation}
(V^{\otimes N} \otimes\bP)_{\cZ}=(V^{\otimes N} \otimes\bP)_{\bP^{S_N}}
 \end{eqnarray}
   obtained by localisation at $\cZ=\bP^{S_N}$.  Since $(V^{\otimes N} \otimes\bP)$ is free over $\cZ$ by Lemma~\ref{permmodule} and Lemma~\ref{easylemma} of finite rank, localisation at $\cZ$ behaves well and we have canonical isomorphisms of algebras
$$\End_{\bH_{\cZ}}( (V^{\otimes N} \otimes\bP)_{\cZ})\cong\End_{\bH_{\cZ}}( (V^{\otimes N} \otimes\bP))_\cZ\cong\End_{\bH_{\cZ}}( V^{\otimes N} \otimes(\bP_{\cZ})). 
$$

Proposition~\ref{surj} implies now directly the following statement.
\begin{corollary}
\label{UmodI}
The algebra homomorphism $\Psi$ induces a surjective algebra homomorphism, $\mathcal \cU_{\cS}
\longrightarrow
\End_{\bH_{\cZ}}( V^{\otimes N} \otimes(\bP_{\cZ}))$. In particular
\begin{eqnarray*}
\End_{\bH_{\cZ}} (V^{\otimes N} \otimes(\bP_{\cZ}))&\cong& \mathcal \cU_{\cS}/I
\end{eqnarray*}
for some two-sided ideal $I$ in  $\cU_{\cS}$.
\end{corollary}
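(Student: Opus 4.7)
The plan is to deduce the corollary essentially formally from Proposition~\ref{surj}, combined with the universal property of Ore localisation and the fact (stated just before the corollary) that localisation commutes with $\End_\bH$ because $V^{\otimes N}\otimes\bP$ is finitely generated and free over $\cZ$.

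First I would identify $\Psi(\cS)$ explicitly. Since $H_1+H_2$ acts as the identity on $V=\mC^2$, the coproduct formula shows that $(H_1+H_2)\otimes t^r$ acts on $V[t]^{\otimes N}=V^{\otimes N}\otimes\bP$ as multiplication by the power sum $p_r(t_1,\ldots,t_N)=\sum_{i=1}^N t_i^r$. Since the $p_r$ generate $\bP^{S_N}=\cZ$ as a $\mC$-algebra, it follows that $\Psi(\cS)=\cZ$, viewed as the subalgebra of $\End_\bH(V[t]^{\otimes N})$ acting by multiplication (via the identification from Lemma~\ref{easylemma}).

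Second, I would extend $\Psi$ to the localisation. Since $\cS$ is central in $\cU$, the Ore conditions at $\cS$ are trivially satisfied and $\cU_\cS$ exists. Composing $\Psi$ with the canonical inclusion
\[
\End_\bH(V[t]^{\otimes N})\hookrightarrow \End_{\bH_\cZ}\bigl(V^{\otimes N}\otimes(\bP_\cZ)\bigr),
\]
every nonzero element of $\cS$ is sent to a nonzero element of $\cZ$, which is invertible in the target. The universal property of Ore localisation therefore extends $\Psi$ uniquely to an algebra homomorphism
\[
\Psi_\cS:\;\cU_\cS\longrightarrow \End_{\bH_\cZ}\bigl(V^{\otimes N}\otimes(\bP_\cZ)\bigr).
\]

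Third, I would deduce surjectivity of $\Psi_\cS$. By Lemma~\ref{permmodule} one has $V[t]^{\otimes N}\cong\bigoplus_{k=0}^N \bH e_{(k,N-k)}$ as an $\bH$-module, whence
\[
\End_\bH(V[t]^{\otimes N})=\bigoplus_{k,k'=0}^N e_{(k,N-k)}\bH e_{(k',N-k')}.
\]
Each summand is free of finite rank over $\cZ$ by Proposition~\ref{Homs}(\ref{twotwo}), so localisation at $\cZ$ commutes with taking these endomorphisms, giving the canonical isomorphisms already recorded immediately above the corollary. Surjectivity of $\Psi$ (Proposition~\ref{surj}) then transfers directly to surjectivity of $\Psi_\cS$, and the second assertion is the first isomorphism theorem applied to $\Psi_\cS$ with $I=\ker\Psi_\cS$.

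The only mildly subtle ingredient is the commutation of $\End_\bH$ with localisation at $\cZ$ in the third step; but this is precisely the content of the finiteness statement in Proposition~\ref{Homs}(\ref{twotwo}), so no genuine obstacle arises. Everything else is formal.
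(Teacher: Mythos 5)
Your overall route coincides with the paper's: the paper treats the corollary as immediate from Proposition~\ref{surj} together with the remarks just before the corollary identifying $\Psi(\cS)$ with $\cZ$ and recording the localisation isomorphisms, and your three steps spell out exactly that reasoning. The weak point is the assertion in your second step that ``every nonzero element of $\cS$ is sent to a nonzero element of $\cZ$.'' This is false: since $H_1+H_2$ is central in $\mg$, the elements $z_r:=(H_1+H_2)\otimes t^r$, $r\geq 0$, span an abelian Lie subalgebra of $\mg[t]$, and by PBW the subalgebra $\cS\subset\cU$ they generate is a polynomial ring in the countably many variables $z_r$, whereas $\cZ=\bP^{S_N}$ is a polynomial ring in just $N$ variables. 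As $\Psi(z_r)=p_r$ and the power sums $p_r$ with $r>N$ (and $p_0=N$) are polynomials in $p_1,\ldots,p_N$, the restriction $\Psi|_\cS$ is surjective but has nonzero kernel --- for instance $z_0-N\in\cS$ is nonzero yet $\Psi(z_0-N)=0$.

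This matters because the universal property of Ore localisation at the multiplicative set $\cS\setminus\{0\}$ needs every such element to become a unit in the target, and a nonzero ring cannot have $0$ as a unit; so with $\cS$ read literally as the subalgebra, $\Psi$ does \emph{not} extend to $\cU_\cS$. The repair is to localise at a smaller multiplicative set, e.g.\ $\cS\cap\Psi^{-1}(\cZ\setminus\{0\})$, or equivalently to observe that $\mC[z_1,\ldots,z_N]\subset\cS$ already maps isomorphically onto $\cZ=\mC[p_1,\ldots,p_N]$ and invert only that copy. Once that correction is made, your remaining steps --- the identification $\Psi(\cS)=\cZ$, the $\cZ$-freeness from Lemma~\ref{permmodule} and Proposition~\ref{Homs} used to commute $\End_\bH$ with localisation, and the first isomorphism theorem --- agree with the paper. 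Note that the paper's own wording ``the Ore localisation of $\cU$ at $\cS$'' glosses over the same point, so the gap is inherited rather than introduced; nevertheless you should not fill it with a false claim but with the corrected multiplicative set.
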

Because of Lemma~\ref{permmodule} we call $\mathcal \cU_{\cS}/I$ the {\it localised Schur algebra} for $\mg$ in analogy to the classical Schur algebra for $\mathfrak{gl}_n$, see \cite[Section 4]{Mathas}.
Passing to the localisation provides a natural framework with the fixed point or Bethe basis available, but at the same time a natural appearance of the denominators in terms of $\bH$.

We finally connect the Yang-Baxter algebras with the universal enveloping algebra $\cU$ of  $\mg[t]$. 

\begin{definition} Let $\mathcal{Y}_N$ be the subalgebra of endomorphisms of $\mV_N$ generated by both Yang-Baxter algebras, $\op{YB}_N$ and $\op{YB}'_N$, localised at the subalgebra generated by the $A(x)$ and $A'(x)$. 
\end{definition}

\begin{theorem}
\label{YBAandcurrent}
There is an isomorphism of algebras 
\begin{eqnarray}
\label{iso}
\mathcal{Y}_N&\cong& \mathcal  \mathcal \cU_{\cS}/I 
\end{eqnarray}
\end{theorem}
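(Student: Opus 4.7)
The plan is to establish the identity $\mathcal{Y}_N = \End_{\bH_\cZ}(V^{\otimes N} \otimes (\bP_\cZ))$; the theorem then follows immediately from Corollary~\ref{UmodI}, which identifies the right-hand side with $\cU_\cS/I$ via the action map $\Psi$.

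The inclusion $\mathcal{Y}_N \subseteq \End_{\bH_\cZ}(V^{\otimes N} \otimes (\bP_\cZ))$ is immediate: by Proposition~\ref{Propcommute} the generators of $\op{YB}_N$ and $\op{YB}'_N$ commute with each ${\bf s}_i$, and they are $\bP$-linear by construction, hence they lie in $\End_{\bH}(V^{\otimes N} \otimes \bP)$. Localising at $A(x)$ and $A'(x)$ is harmless because, by Corollary~\ref{Bethe} and~\eqref{BetheEV}, both operators act diagonally in the Bethe basis with non-zero polynomial eigenvalues $\prod_{j \in \bw}(x + t_j)$ and $\prod_{j \notin \bw}(x - t_j)$, so their inverses exist in $\End_{\bH_\cZ}(V^{\otimes N} \otimes (\bP_\cZ))$.

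The reverse inclusion is the substantive step. By Corollary~\ref{UmodI} it suffices to show that the image under $\Psi$ of each generator of $\cU_\cS$ lies in $\mathcal{Y}_N$. For the Cartan-type generators I would invoke the geometric translation of Theorem~\ref{main}: the coefficients of $\mathcal{A}(x) = \sum_i c_i(\mathcal{T}_n) x^{n-i}$ and $\mathcal{A}'(x) = \sum_i (-1)^i c_i(\mathcal{Q}_n) x^{k-i}$ act as multiplications by the equivariant Chern classes of the tautological and quotient bundles, and together with the $\bP$-action they exhaust every multiplication by an element of $H^*_T(\Xn)$. Newton's identities then recover the diagonal power-sum action realising $\Psi(H_1 \otimes t^r)$ and $\Psi(H_2 \otimes t^r)$, and the inverses of $\Psi(\cS) \subset \bP^{S_N}\cdot \op{Id}$ are accessible via the localisation at $\cZ$.

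For the raising/lowering generators $E \otimes t^r$ and $F \otimes t^r$, I would use the relations $\mathcal{B}(x) = b_n \mathcal{A}(x)$ and $\mathcal{C}(x) = \mathcal{A}(x) c_n$ from Definition~\ref{ABCDgeom} to extract, after inverting $\mathcal{A}(x)$ inside $\mathcal{Y}_N$, the bare convolution operators $b_n = \pi_{2*}\pi_1^*$ and $c_n = \pi_{1*}\pi_2^*$; the primed relations yield the dual variants. Since $H^*_T(\op{X}_{(n,1,N-n-1)})$ is generated over $\bP$ by $\pi_1^*(H^*_T(\Xn))$ and $\pi_2^*(H^*_T(\op{X}_{n+1}))$ — because the line-bundle contribution satisfies $c_1(M_2) = \pi_2^* c_1(\mathcal{T}_{n+1}) - \pi_1^* c_1(\mathcal{T}_n)$, which lies in the pullback subring — the projection formula expresses every convolution $\pi_{2*}(\alpha \cdot \pi_1^*(-))$ as a composition of $b_n$ and $c_n$ with Chern-class multiplications, producing $\Psi(E \otimes t^r)$ and $\Psi(F \otimes t^r)$ inside $\mathcal{Y}_N$.

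The principal obstacle is the combinatorial matching between products of YBA coefficients (and their inverses) and the specific images $\Psi(x \otimes t^r)$; this is tractable by cross-referencing the explicit Bethe-basis formulae of Section~\ref{sec:explicitformulas} and Corollary~\ref{cor:ABCDv0} against the coproduct form $\Delta(x \otimes t^r) = \sum_i x^{(i)}\, t_i^r$ of the current-algebra action from Proposition~\ref{surj}. Combining the matching with the easy inclusion and Corollary~\ref{UmodI} yields the asserted isomorphism $\mathcal{Y}_N \cong \cU_\cS/I$.
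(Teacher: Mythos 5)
Your proposal follows essentially the same route as the paper's proof: use Proposition~\ref{Propcommute} to get the embedding into $\End_{\bH_\cZ}(V^{\otimes N}\otimes\bP_\cZ)\cong\cU_\cS/I$ via Corollary~\ref{UmodI}, then check surjectivity by producing each generator of $\cU_\cS/I$ — the Cartan part from the $A$-operators (Chern classes, with Newton's identities giving power sums) and $E\otimes t^r$, $F\otimes t^r$ by dividing $\mathcal B(x)$, $\mathcal C(x)$ by $\mathcal A(x)$ to recover $b_n$, $c_n$ and twisting with further Chern-class multiplications. You spell out a few steps the paper leaves implicit (the invertibility of $A(x)$ in the Bethe basis, the Newton/power-sum translation), but the strategy and the key lemmas invoked coincide with the paper's proof.
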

\begin{proof}
Since the Yang-Baxter algebras commute with the $\bH$-action by Proposition~\ref{Propcommute}, they induce well-defined endomorphisms in $\End_{\bP}(V^{\otimes N}\otimes {\bP})$ and by Corollary~\ref{explicitformulas} it extends to a well-defined action of $\mathcal{Y}_N$ on $(V^{\otimes N} \otimes\bP)_{\cZ}$. Hence by Corollary~\ref{UmodI} there is an embedding of algebras
\begin{eqnarray*}
\mathcal{Y}_N&\hookrightarrow&\mathcal\mathcal \cU_{\cS}/I. 
\end{eqnarray*}
 To see that this is an isomorphism it is enough to show that the image contains the generators $E\otimes t^j$, $F\otimes t^j$, $H_1\otimes t^j$ and $H_2\otimes t^j$ for any $j\geq 0$ and the inverses of $\bP^{S_N}$. Apart from the $E\otimes t^j$ and  $F\otimes t^j$ this follows from the explicit formulae in Section~\ref{sec:explicitformulas} for the $A$-operators.  For the remaining operators we use the geometric Definition~\ref{ABCDgeom} of the operators. Since the classes of the tautological and quotient bundles generate the complete cohomology ring, we can in particular obtain $b_n$ and $c_n$via  $\op{YB}_N$, and hence $E\otimes 1$ and  $F\otimes 1$. Using the $A$-operators again we obtain also all other $t$-powers, see \cite[Appendix]{Varchenko} for explicit formulae. 
\end{proof}

\begin{remark}
Observe, that we need both Yang-Baxter algebras to obtain the Schur-algebra. It would be interesting to see if there is a general theory behind this.
\end{remark}

\section{Connection to COHAs}
In this section we will briefly explain the connection of our Yang-Baxter algebra with another important algebra, the cohomological Hall algebra (short COHA) introduced in \cite{KonS}, \cite{S}. The COHA is an associative algebra attached to a quiver (with possibly a potential).  The underlying space is the direct sum of the cohomology of the quotient stacks of isomorphism classes of representations for each fixed dimension vector. 

\subsection{The special example of interest: $\CH$}
For our setup it is enough to consider the easiest COHA, we call it $\CH$, of type $A_1$.  That is an algebra  attached to the one point and no loops quiver $Q$.  This quiver occurs because we work with Grassmannians only, instead of more general flag varieties. (The Dynkin type of $Q$ is the type of the current Lie algebra in Section~\ref{secYBAandcurrent}).

We consider for a fixed natural number $d$ the space $\op{M}_{d}$ of representations of $Q$, which is just the vector space of complex $d\times d$-matrices with the action of $\op{G}(d):=\op{GL}_d(\mathbb{C})$ by conjugation. The quotient stack is then given by conjugacy classes of matrices.

Then for $d_1,d_2\in \mathbb{Z}_{\geq 0}$ and $d=d_1+d_2$ consider the subspace $\op{M}_{d_1,d_2}\subseteq  \op{M}_{d}$ of upper triangular block matrices with block size $d_1,d_2$ together with the maps $p_i$ $i=1,2$, picking out the blocks  
\begin{eqnarray}
\label{matrices}
  \op{M}_{d_i}\stackrel{p_i}{\longleftarrow}\quad \op{M}_{d_1,d_2}\quad\stackrel{\op{incl}}\longrightarrow\op{M}_{d}&&D_i\leftmapsto\begin{pmatrix}D_1&B\\0&D_2\end{pmatrix}\;\;\in\;\;  \op{M}_{d_1+d_2}
\end{eqnarray}

Let $\op{BG}(d)$ be the classifying space of  $\op{G}(d)$. 
We choose the standard  model $\op{BG}(d)=\varinjlim_N \op{Gr}(d,\mC^N)=\op{Gr}(d,\mC^\infty)$, with the obvious embedding $\mathbb{C}^N\subseteq\mathbb{C}^{N+1}$ and identify $ H^*_{\op{G}(d)}(\op{M}_{d})= H^*_{\op{G}(d)}(\op{pt})= H^*(\op{BG}(d)).$

\begin{definition}
The COHA $\CH$ is the algebra with underlying (graded) vector space 
\begin{eqnarray}
  \CH&=&\bigoplus_{d\geq 1} H^*_{\op{G}(d)}(\op{M}_{d})\;=\;\bigoplus_{d\geq 1} H^*(\op{BG}(d)), 
  \end{eqnarray}
and the (graded) multiplication $H^*_{\op{G}(d_1)}(\op{M}_{d_1})\otimes H^*_{\op{G}(d_1)}(\op{M}_{d_2})\rightarrow H^*_{\op{G}(d_1+d_2)}(\op{M}_{d_1+d_2})$ defined as the push-forward in cohomology of the corresponding quotient stacks arising from the diagram \eqref{matrices}.  Explicitly this is given by the map $\op{m}_*$, where
\begin{eqnarray}
\op{m}:\quad \op{BG}(d_1)\times \op{BG}(d_2)\rightarrow \op{BG}(d_1+d_2), 
\end{eqnarray}
is the canonical map arising from the embedding $\op{G}(d_1)\times \op{G}(d_2)\subseteq \op{G}(d)$.
\end{definition}
We refer to see \cite{Xiao2} for explicit formulae. In contrast to the general case, $\CH$ has an explicit presentation, see \cite[Section 2.5]{KonS}. Namely,  $\CH$ is isomorphic to the infinite exterior algebra with generators $\psi_{2j+1}$ $j\in \Bbb \mathbb{Z}_{\geq0}$ corresponding to the basis vectors $x^j\in  H^*(\op{BG}(1))=H^*(\mathbb{C}P^\infty)\cong\mathbb{C}[x]$. Monomials of degree $d$ in these generators correspond to Schur polynomials in $H^*_{\op{G}(d)}(\op{pt})$, and suggest a connection with Schubert calculus.  

\subsection {The action on $H^*_T$}

For any partial flag variety $\op{X}_{(n,r,N-n-r)}$ we have be the classifying map 
$\xi: \op{X}_{(n,r,N-n-r)}\rightarrow \op{BG}(r)$ (explicitly it sends a partial flag 
$(F_n\subset F_{n+r}\subset \mathbb{C}^N)$ to $F_{n}^\perp\subset \mathbb{C}^N$ in $\op{BG}(r)$, where we take the orthogonal complement $F_{n}^\perp$ in  $F_{n+r}$ with respect to the standard scalar product).

\begin{proposition} 
\label{COHAactions}
There are two actions of $\CH$ on $H^*_T=\bigoplus_{n=0}^N H_T^*(\Xn)$:
The assignment $\psi_{2j+1}\longmapsto \gamma^+_j$ respectively the assignment $\psi_{2j+1}\longmapsto \gamma^-_j$ with $j\in \mathbb{Z}_{\geq0}$, where 
\begin{eqnarray}
\label{CPaction}
\gamma^+_j={\pi_2}_{*}( (\xi^* (x^j)\cdot) {\pi_1^*})&\text{and}&\gamma^-_j={\pi_1}_{*}((\xi^* (x^j)\cdot) {\pi_2^*}).
\end{eqnarray}
with the notation as in \eqref{correspondence} defines an action on $\bigoplus_{n=0}^N H^*(\Xn)$ (acting on all summands) which extends to $H^*_T$ by taking the equivariant version of $\xi^* (x^j)$ instead.
\end{proposition}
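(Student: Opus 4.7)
The plan. Since, by the explicit presentation recalled just above the statement, $\CH$ is the infinite exterior algebra on odd-degree generators $\psi_{2j+1}$, to see that the two assignments extend to algebra homomorphisms from $\CH$ to $\End(H^{*}_{T})$ it suffices to check the anti-commutation relations
\begin{equation*}
\gamma^{\pm}_{i}\gamma^{\pm}_{j} + \gamma^{\pm}_{j}\gamma^{\pm}_{i} \;=\; 0 \qquad \text{for all } i,j \ge 0.
\end{equation*}
The two cases $\pm$ are parallel (related by interchanging the tautological and quotient bundles, i.e.\ swapping the roles of $\pi_1$ and $\pi_2$ in \eqref{correspondence}), so I concentrate on $\gamma^{+}$.

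The key geometric input is the three-step flag variety $Y = \op{X}_{(n,1,1,N-n-2)}$ parametrising chains $F_n \subset F_{n+1} \subset F_{n+2} \subset \mathbb{C}^N$. Forgetting $F_{n+2}$ (respectively $F_n$) gives a smooth proper projection $r_1 \colon Y \to \op{X}_{(n,1,N-n-1)}$ (respectively $r_2 \colon Y \to \op{X}_{(n+1,1,N-n-2)}$), and $(r_1, r_2)$ identifies $Y$ with the fibre product of these two spaces over $\op{X}_{n+1}$ via $\pi_2$ and the analogous map $\pi'_1$ for $\op{X}_{(n+1,1,N-n-2)}$. Since all maps are smooth and proper, proper base change gives $\pi_1'^{*}\pi_{2*} = r_{2*}r_1^{*}$. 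Denoting by $\xi_1 = \xi \circ r_1$ and $\xi_2 = \xi' \circ r_2$ the classifying maps $Y \to \op{BG}(1)$ of the two tautological line bundles $L_1 = F_{n+1}/F_n$ and $L_2 = F_{n+2}/F_{n+1}$, and writing $u = c_1(L_1)$, $v = c_1(L_2)$, the projection formula together with the above base-change identity converts
\begin{equation*}
\gamma^{+}_{a}\gamma^{+}_{b}(\alpha) = \pi_{2*}'\bigl(\xi'^{*}(x^a)\cdot \pi_1'^{*}\pi_{2*}\bigl(\xi^{*}(x^b)\cdot \pi_1^{*}\alpha\bigr)\bigr)
\end{equation*}
into
\begin{equation*}
\gamma^{+}_{a}\gamma^{+}_{b}(\alpha) \;=\; (\pi_2' \circ r_2)_{*}\bigl(v^{a}\,u^{b}\cdot (\pi_1 \circ r_1)^{*}\alpha\bigr).
\end{equation*}
Consequently, $\gamma^{+}_{a}\gamma^{+}_{b} + \gamma^{+}_{b}\gamma^{+}_{a}$ is represented by the pushforward along $\pi_2'\circ r_2$ of the symmetric class $(u^a v^b + u^b v^a)\cdot (\pi_1\circ r_1)^{*}\alpha$.

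The final step factorises $\pi_2'\circ r_2\colon Y \to \op{X}_{n+2}$ through $Z = \op{X}_{(n,2,N-n-2)}$ via the $\mathbb{P}^1$-bundle $q\colon Y\to Z$ forgetting $F_{n+1}$. The tautological exact sequence $0\to L_1\to q^{*}\mathcal{Q}\to L_2\to 0$ (with $\mathcal{Q} = F_{n+2}/F_n$ on $Z$) shows that both $u+v$ and $uv$ are $q$-pullbacks, hence every symmetric polynomial in $u,v$ is of the form $q^{*}\beta$ for some $\beta\in H^{*}_{T}(Z)$; in particular this applies to $u^a v^b + u^b v^a$. Likewise $\pi_1\circ r_1\colon Y\to \op{X}_n$ factors through $Z$, so $(\pi_1\circ r_1)^{*}\alpha = q^{*}\gamma$ for some $\gamma$. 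Putting these together, the class to push forward equals $q^{*}(\beta\gamma)$, and the projection formula gives
\begin{equation*}
q_{*}\bigl(q^{*}(\beta\gamma)\bigr) \;=\; \beta\gamma \cdot q_{*}(1) \;=\; 0,
\end{equation*}
since $q_*$, being the fibre integral of a $\mathbb{P}^1$-bundle, annihilates the unit class. This proves the anti-commutation, and hence the proposition. The one technical point is the verification of the base-change square above; this is clean because all morphisms are smooth and proper, and the resulting symmetrisation trick is what converts the argument into a one-line $\mathbb{P}^1$-fibre computation.
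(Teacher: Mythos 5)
Your proof is correct, and it takes a genuinely different route from the paper's. You both start from the same geometric ingredient — base change over $\op{X}_{n+1}$ using the common roof $Y=\op{X}_{(n,1,1,N-n-2)}$ followed by the projection formula — but then the arguments diverge. The paper works at arbitrary ranks with the diagram \eqref{big}, pushes the full computation through to show that composing rank-$r$ and rank-$s$ convolution operators reproduces the rank-$(r+s)$ convolution twisted by $\op{m}_*(x_r\otimes x_s)$, and then observes that compatibility with the COHA product in particular forces the exterior-algebra relations to hold. You instead invoke the explicit presentation of $\op{CoHa}(A_1)$ as the free super-commutative algebra on the odd generators $\psi_{2j+1}$ and verify the anticommutation relations directly: after converting $\gamma^+_a\gamma^+_b+\gamma^+_b\gamma^+_a$ into the push-forward of the symmetric class $u^av^b+u^bv^a$ along $Y\to\op{X}_{n+2}$, you factor through the $\mathbb{P}^1$-bundle $q\colon Y\to Z=\op{X}_{(n,2,N-n-2)}$, note that symmetric polynomials in $u,v$ are $q$-pullbacks, and kill the whole expression with $q_*(1)=0$. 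Your argument is shorter and only needs the rank-$1$ correspondences plus a $\mathbb{P}^1$-fibre computation, but it does lean on the presentation of the COHA as an exterior algebra; the paper's argument is more bookkeeping-heavy but establishes the stronger statement that all of $\CH$ (not just the generators) acts via convolution, i.e.\ it produces a genuine COHA-module structure on $\bigoplus_n H^*_T(\Xn)$ compatible with the push-forward product $\op{m}_*$ at every rank.
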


Before we state and prove the proposition, we like to stress that our Grassmannians or the related partial flag varieties carry a natural $T$-action which allows us to consider their $T$-equivariant cohomology. There is no natural action of this torus on the spaces $\op{BG}(d)$ which go into the definition of the COHA. However, $\CH$ will act via twisting the convolutions from \eqref{correspondence} by the Chern classes of $G$-equivariant, and therefore $T$-equivariant, bundles. Thus, the above formulae give well-defined operators on $H^*_T$. In the non-equivariant setting the result can already be found in \cite{Xiao1}, \cite[Proposition 4.1.1]{S} and also deduced as a special case from \cite{Franzen}. We provide the full arguments which then extend to the equivariant case. Consider the following commutative diagram, with the obvious (and all proper) maps (and $r$, $s$ natural numbers such that the spaces make sense).
\begin{equation}
\label{big}
\begin{tikzcd}
{}&{}&\op{X}_{(n,r,s,N-n-r-s)} \arrow[bend left,red]{ddd}{p} \arrow{dl}{\pi_5} \arrow{dr}{\pi_6} &\\
{} & \op{X}_{(n,r,N-n-r)} \arrow{dr}{\pi_2} \arrow{dl}{\pi_1}&& \op{X}_{(n+r,s,N-n-r-s)}\arrow{dr}{\pi_4}\arrow{dl}{\pi_3}\\
\op{X}_n && \op{X}_{n+r} && \op{X}_{n+r+s}&\\
{} & {} & \op{X}_{(n,r+s,N-n-r-s)}  \arrow{urr}{\pi_8} \arrow{ull}{\pi_7} &
\end{tikzcd}
\end{equation}
\begin{proof}
We first prove the non-equivariant version. We only deal with the first case, the second works completely analogously.  We claim that ${\pi_2}_{*}( (\xi^* (x)\cdot) {\pi_1^*})$ with $x\in\op{BG}(r)$ defines the first action. Clearly, it satisfies \eqref{CPaction}. To check that it is indeed an action, consider the following classifying maps (for arbitrary $s$ such that the spaces make sense)
\begin{equation}
\begin{tikzcd}
 \op{X}_{(n,r,N-n-1)}\arrow{d}{\xi_1}&& \op{X}_{(n+1,s,N-n-2)}\arrow{d}{\xi_2}&& \op{X}_{(n,r+s,N-n-2)}\arrow{d}{\xi_3}\\
\op{BG}(r)&\times& \op{BG}(s)&\stackrel{\op{m}}\longrightarrow&\op{BG}(r+s)
\end{tikzcd}
\end{equation}
Then it is enough to verify the following claim:
Let $a=\xi_1^*(x_r)$, $b=\xi_2^*(x_s)$ for some $x_r\in \op{BG}(r)$, $x_s\in \op{BG}(s)$, and set $c=\xi_3^*(\op{m}_*(x_r\otimes x_s))$. Then it holds  
${\pi_{4}}_*(b\cdot\pi_3^*{\pi_2}_*(a\cdot \pi_{1}^*(y)))={\pi_{8}}_*(c\cdot\pi_{7}^*(y))$ for all $y\in H^*(\Xn)$.
To verify this we calculate
\begin{eqnarray*}
&{\pi_{4}}_*(b\cdot\pi_3^*{\pi_2}_*(a\cdot \pi_{1}^*(y)))
={\pi_{4}}_*(b\cdot{\pi_6}_*({\pi_5}^*(a)\cdot {\pi_5}^*(\pi_{1}^*(y))))
={\pi_{4}}_*{\pi_6}_*({\pi_5}^*(a)\cdot \pi_6^*(b)\cdot {\pi_5}^*\pi_{1}^*(y))&\\
&={\pi_{8}}_*{p}_*({\pi_5}^*(a)\cdot \pi_6^*(b)\cdot {p}^*\pi_{7}^*(y))
={\pi_{8}}_*({p}_*({\pi_5}^*(a)\cdot \pi_6^*(b))\cdot\pi_{7}^*(y))
={\pi_{8}}_*(c\cdot\pi_{7}^*(y)),&
\end{eqnarray*}
where we used first proper base change and the fact that $\pi_5^*$ is a ring homomorphism, then the projection formula, the commutativity of \eqref{big}, again the projection formula, and finally the definition of $c$ together with $\xi_3^*\op{m}_*=p_* (\xi_1\pi_5 \otimes \xi_2\pi_6)^*:H^*(\op{BG}(r))\otimes H^*(\op{BG}(s))\rightarrow H^*(\op{X}_{(n,r+s,N-n-2)}).$ The claim follows. In particular the action respects the exterior algebra relations in $\CH$. This provides a well-defines action in the non-equivariant case. Let now $t\in T$ and $a_t$ the action map by $t$, then by definition $\xi a_t=\xi$ and so $a_t^*(\xi^*(x^j)\cdot y)=a_t^*(\xi^*(x^j))\cdot a_t^*( y)=(\xi^*(x^j))\cdot a_t^*( y)$ for $y\in H^*_T$, and then also similarly for products in the generators $x^j$. Hence the equivariant version is well-defined as well and defines again an action of $\CH$. 
\end{proof}

\subsection{COHA action and geometric Yang-Baxter algebra action}
The connection to the Yang-Baxter algebras is given as follows.
\begin{theorem}
\label{ThmgammasYB}
The operators $\gamma^+_j$ and $\gamma^-_j$ from Proposition~\ref{COHAactions}  belong to the Yang-Baxter algebra  $\mathcal{YB}_N$ respectively  $\mathcal{YB}_N'$.
\end{theorem}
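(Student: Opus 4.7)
The plan is to rewrite each operator $\gamma_j^{\pm}$ as a finite sum of compositions of operators that are already among the generators of $\mathcal{YB}_N$ (respectively $\mathcal{YB}_N'$), by resolving the class $\xi^*(x^j)$ on $\op{X}_{(n,1,N-n-1)}$ through a short exact sequence and then applying the projection formula.

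First I would identify $\xi^*(x)$, up to a sign coming from the chosen convention on $H^*(\op{BG}(1))=\mathbb{C}[x]$, with the first Chern class $\mathfrak{c}_1(M_2)$ of the middle tautological line bundle $M_2$ on $\op{X}_{(n,1,N-n-1)}$, whose fibre at $(F_n\subset F_{n+1})$ equals $F_{n+1}/F_n$. The two natural short exact sequences on $\op{X}_{(n,1,N-n-1)}$,
\[
0\to\pi_1^*\mathcal{T}_n\to\pi_2^*\mathcal{T}_{n+1}\to M_2\to 0,\qquad 0\to M_2\to\pi_1^*\mathcal{Q}_n\to\pi_2^*\mathcal{Q}_{n+1}\to 0,
\]
then yield the two presentations
\[
\mathfrak{c}_1(M_2)=\pi_2^*\mathfrak{c}_1(\mathcal{T}_{n+1})-\pi_1^*\mathfrak{c}_1(\mathcal{T}_n)=\pi_1^*\mathfrak{c}_1(\mathcal{Q}_n)-\pi_2^*\mathfrak{c}_1(\mathcal{Q}_{n+1}).
\]

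Next, substituting these into the definition \eqref{CPaction} of $\gamma_j^{\pm}$, expanding the $j$-th power via the binomial theorem, and using the projection formula to absorb each $\pi_i^*$-factor on the appropriate side of the pushforward, I obtain, on the summand $H^*_T(\op{X}_n)$, the formulae
\begin{align*}
\gamma^+_j&=\sum_{i=0}^j\binom{j}{i}(-1)^{j-i}\,\mu\bigl(\mathfrak{c}_1(\mathcal{T}_{n+1})\bigr)^{i}\circ b_n\circ\mu\bigl(\mathfrak{c}_1(\mathcal{T}_n)\bigr)^{j-i},\\
\gamma^-_j&=\sum_{i=0}^j\binom{j}{i}(-1)^{j-i}\,\mu\bigl(\mathfrak{c}_1(\mathcal{Q}_n)\bigr)^{i}\circ c_n\circ\mu\bigl(\mathfrak{c}_1(\mathcal{Q}_{n+1})\bigr)^{j-i},
\end{align*}
where $\mu(\alpha)$ denotes multiplication by $\alpha$.

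Finally, each constituent operator is visibly a generator of the appropriate algebra by Definition~\ref{ABCDgeom}: $b_n$ is the leading coefficient (in $x$) of $\mathcal{B}_n(x)\in\mathcal{YB}_N$, and $c_n$ is the leading coefficient of $\mathcal{C}'_n(x)\in\mathcal{YB}_N'$; while multiplication by $\mathfrak{c}_1(\mathcal{T}_n)$ (respectively $\mathfrak{c}_1(\mathcal{Q}_n)$) appears, up to sign, as the coefficient of $x^{n-1}$ in $\mathcal{A}_n(x)\in\mathcal{YB}_N$ (respectively of $x^{N-n-1}$ in $\mathcal{A}'_n(x)\in\mathcal{YB}_N'$). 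Summing over $n$ and invoking closure under addition, product and composition then places $\gamma^+_j$ in $\mathcal{YB}_N$ and $\gamma^-_j$ in $\mathcal{YB}_N'$, as claimed. No serious obstacle is anticipated; the only care required is the projection-formula bookkeeping and the identification of the individual Chern-class multiplications with the explicitly listed coefficients of the defining power series of the two geometric Yang-Baxter algebras.
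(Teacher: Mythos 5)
Your proposal is correct and is essentially the paper's proof, written out more explicitly. The paper establishes $\gamma_1^+\in\mathcal{YB}_N$ from the short exact sequence $0\to\pi_1^*\mathcal{T}_n\to\pi_2^*\mathcal{T}_{n+1}\to M_2\to 0$ together with the projection formula, then remarks that powers of $e=\mathfrak{c}_1(M_2)$ handle general $j$ and that $\gamma_j^-$ is analogous; you fill in exactly those elisions by spelling out the binomial expansion and by supplying the dual short exact sequence $0\to M_2\to\pi_1^*\mathcal{Q}_n\to\pi_2^*\mathcal{Q}_{n+1}\to 0$, which is what makes the quotient-bundle Chern classes (and hence membership in $\mathcal{YB}_N'$) appear for $\gamma_j^-$. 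No new idea, just more careful bookkeeping, which is harmless and arguably clearer.
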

\begin{proof}
We use again the notation from \eqref{correspondence}. Then there is a short exact sequence of vector bundles on $\op{X}_{(n,1,N-n-1)}$ of the form
\begin{equation*}
0\rightarrow \pi_1^*(\mathcal{T}_n)\rightarrow \pi_2^*(\mathcal{T}_{n+1})\rightarrow M_2\rightarrow 0,
\end{equation*}
where $M_2$ is the tautological line bundle. Let $e=c_1(M_2)$ be its first Chern class.
Then by the Whitney sum formula and the definitions, $e=c_1(\pi_2^*(\mathcal T_{n+1}))-c_1(\pi_1^*(\mathcal{T}_n))=\pi_2^*(c_1(\mathcal T_{n+1}))-\pi_1^*(c_1(\mathcal T_n))$. Therefore, we have for any $x\in H^*(\Xn)$ (by the projection formula and functoriality)
\begin{eqnarray*}
{\pi_2}_*(e\cdot \pi_1^*(x))&=&{\pi_2}_*(\pi_2^*(c_1(\mathcal T_{n+1})\cdot \pi_1^*(x))-{\pi_2}_*(\pi_1^*(c_1(\mathcal T_n))\cdot \pi_1^*(x))\\
&=&c_1(\mathcal T_{n+1})\cdot{\pi_2}_*( \pi_1^*(x))-{\pi_2}_*(\pi_1^*(c_1(\mathcal T_n)\cdot x))=
c_1(\mathcal T_{n+1})\cdot b_n(x)-b_n(c_1(\mathcal T_n)\cdot x)
\end{eqnarray*}
and the latter is by definition in $\mathcal{YB}_N$. Hence  $\gamma^{+}_1$ is contained in  $\mathcal{Y}_N$, similarly for any $\gamma^{+}_j$ by taking powers of $e$. The argument for the $\gamma^{-}_j$ is analogous.
\end{proof}
\begin{remark}
Likewise ${\pi_2}_*(P\cdot \pi_1^*)$ is contained in $\mathcal{YB}_N$ for any polynomial $P$ in the $c_i(\mathcal{T}_n)$'s and $e$. In particular, the action map of $a_{21}\otimes t^k\in \mathfrak{gl}_2[t]$ from the representation of $\mathfrak{gl}_2[t]$ described in \cite{Varchenko}, \cite{Vasserot} belongs to  $\mathcal{YB}_N$. Similarly, $a_{12}\otimes t^k\in \mathfrak{gl}_2[t]$ acts by an operator in $\mathcal{YB}_N$.
\end{remark}

\begin{remark}
Calculating explicitly the maps $\gamma^{\pm}_j: H^*_T(\Xn) \rightarrow H^*_T(\op{X}_{n\pm1})$  is easy in the fixed point basis using the definitions and \eqref{AB}. Via the identification from Corollary~\ref{Bethe} and \eqref{beta}, we obtain for the vector $\bb_\zeta$ with $\zeta=(0,\ldots,0, 1,\ldots,1)$ the following formulae,
\begin{eqnarray}
\label{coha}
\gamma_j^+ (\bb_ \zeta)\;=\;\sum_{i=1}^k\prod_{r=k+1}^{N}
\frac{t^j_i}{t_{r}-t_{i}}~\bb_{0\ldots 0\underset{i}{1}0\ldots 01\ldots  1}&\text{and}&
\gamma^-_j( \bb_\zeta) \;=\;\sum_{i=k+1}^{N}\prod_{r=1}^{k}\frac{
t^j_i}{t_{i}-t_{r}}
~\bb_{0\ldots 01\ldots 1\underset{i}{0}1\ldots 1}.
\end{eqnarray}
Thanks to Theorem~\ref{ThmgammasYB} and Proposition~\ref{Propcommute} the maps $\gamma_j^\pm$ are $S_N$-equivariant and hence \eqref{coha} provides explicit formulae for all basis vectors. 
\end{remark}

 \addtocontents{toc}{\protect\setcounter{tocdepth}{0}}
\bibliographystyle{alpha}

\begin{thebibliography}{999999999}


\bibitem[And10]{Anderson} D. Anderson, \emph{Introduction to equivariant cohomology in algebraic geometry}, to appear in \emph{Contributions to Algebraic Geometry}, proceedings of the IMPANGA 2010 Summer School.

\bibitem[AB84]{AB} M. Atiyah and R. Bott, \emph{The moment map and equivariant cohomology},
Topology, {\bf 23}, 1984, no.1, 1--28, 
\bibitem[Bax82]{Bax} R. Baxter, \emph{Exactly solved models in statistical mechanics},  London Academic Press Inc, 1982.
\bibitem [BBT]{BBT} E. Beazley, A. Bertiger and K. Taipale, 
\emph{An equivariant rim hook rule for quantum cohomology of Grassmannians},  FPSAC 2014, 23--35, 
Discrete Math. Theor. Comput. Sci. Proc., AT, Assoc. Discrete Math. Theor. Comput. Sci., Nancy,  2014. 
\bibitem[BLM90]{BLM} A. Beilinson, G. Lusztig and R. MacPherson, 
\emph{A geometric setting for the quantum deformation of $\op{GL}_n$},
Duke Math. J. {\bf 61} (1990), no. 2, 655--677. 
\bibitem[BC-FF99]{BCFF} A. Bertram, I. Ciocan-Fontanine and W. Fulton, \emph{Quantum multiplication of Schur polynomials}, J. Algebra {\bf 219} (2) (1999) 728--746.
\bibitem  [BIK85]{BIK} M. Bogoliubov, A. G. Izergin, A. G and V. E. Korepin, \emph{Quantum inverse scattering method and correlation functions. Exactly solvable problems in condensed matter and relativistic field theory} (Panchgani, 1985), 220--316, Lecture Notes in Phys., {\bf 242}, Springer, 1985.
\bibitem[BMO11]{BMO} A. Braverman, D. Maulik, and A. Okounkov, \emph{Quantum cohomology of the Springer resolution.}  Adv. Math. {\bf 227} (2011), no. 1, 421--458.
\bibitem[Bri05]{Brion} M. Brion, 
\emph{Lectures on the geometry of flag varieties}, Topics in cohomological studies of algebraic varieties, {\bf 33–85}, 
Trends Math., Birkh{\"a}user, 2005. 
\bibitem[BBF11]{BBF} B. Brubaker, D. Bump, and S. Friedberg, \emph{Schur Polynomials and the Yang-Baxter equation}, Comm. Math. Phys. {\bf 308} (2011), no. 2, 281--301.
\bibitem[BM11]{BM}A. S. Buch and  L. Mihalceam, \emph{Quantum K-theory of Grassmannians},
Duke Math. J. Volume 156, Number 3 (2011), 501-538.
\bibitem[BNN14]{Bump} D. Bump, P. McNamara and M. Nakasuji,
\emph{Factorial Schur functions and the Yang-Baxter equation}, Comment. Math. Univ. St. Pauli {\bf 63} (2014), no. 1-2, 2--45. 
\bibitem[Dem74]{Demazure} M. Demazure, \emph{D\'esingularisation des vari\' et\' es de Schubert g\' en\' eralis\' ees}, 
Collection of articles dedicated to Henri Cartan on the occasion of his 70th birthday, I. 
Ann. Sci. École Norm. Sup. (4) {\bf 7} (1974), 53--88. 
\bibitem[Dri87]{Drinfeld} V. G. Drinfeld, \emph{Quantum groups}, Proceedings of the International Congress of Mathematicians, Vol. 1, 2 (Berkeley, Calif., 1986), 798--820, Amer. Math. Soc., Providence, RI, 1987
\bibitem [Fad85]{Fadeev}  L. Faddeev, \emph{Classical and quantum L-matrices. Exactly solvable problems in condensed matter and relativistic field theory} (Panchgani, 1985), 158--174, Lecture Notes in Phys., {\bf 242}, Springer, 1985.
\bibitem [FRT90] {FRT} L. Faddeev, N. Reshetikhin and L. Takhtadzhyan,  \emph{Quantization of Lie
groups and Lie algebras}, Leningrad Math. J. {\bf 1} (1990), no. 1, 193--225.
\bibitem [Han16]{Franzen} H. Franzen, \emph{On cohomology rings of non-commutative Hilbert schemes and CoHa-modules}, Math. Res. Lett. {\bf 23} (2016), no. 3, 805--840. 
\bibitem[FK97]{FK} I. Frenkel and M. Khovanov, {\it Canonical bases in tensor products and graphical calculus for $U_q(\mathfrak{sl}_2)$}, Duke Math. J. 87 (1997), no. 3, 409--480.
\bibitem[FKS06]{FKS}  I. Frenkel, M. Khovanov and C. Stroppel, \emph{A categorification of finite-dimensional irreducible representations of quantum $\mathfrak{sl}_2$ and their tensor products}, Selecta Math. (N.S.) {\bf 12} (2006), no. 3-4, 379--431.
\bibitem[Ful97]{Fulton} W. Fulton, \emph{
Young tableaux. With applications to representation theory and geometry}, LMS Student Texts {\bf 35}, Cambridge University Press, 1997.
\bibitem[FP97]{FP} W. Fulton, and R. Pandharipande, \emph{Notes on stable maps and quantum cohomology}, Algebraic geometry, Santa Cruz 1995, 45--96, Proc. Sympos. Pure Math., 62, Part 2, AMS 1997. 
\bibitem[Gin91]{Ginzburg} V. Ginzburg, \emph{Lagrangian construction of the enveloping algebra $U(\mathfrak{sl}_n)$}, C. R. Acad. Sci. Paris S{\'e}r. I Math. {\bf 312} (1991), no. 12, 907--912
\bibitem[GRV92]{GRV}  V. Ginzburg, N. Reshetikhin and E. Vasserot, \emph{Quantum groups and flag varieties}, Mathematical aspects of conformal and topological field theories and quantum groups 101--130, Contemp. Math., {\bf 175}, AMS, 1994.
\bibitem[Giv96] {G}A. Givental, \emph{Equivariant Gromov–Witten invariants}, Int. Math. Res. Not. {\bf 13} (1996) 613--663.
\bibitem[GiKi95]{GivKim} A. Givental and B. Kim, \emph{Quantum cohomology of flag manifolds and Toda lattices}, Comm. Math. Phys. {\bf 168} (1995) 609--641.
\bibitem[GW09]{GW}  R. Goodman and N. Wallach, \emph{Symmetry, representationsand invariants}, Graduate Texts in Mathematics, {\bf 255}, Springer, 2009.
\bibitem[GoKo17] {GoKo} V. Gorbounov and C. Korff, \emph{Quantum integrability and generalised quantum Schubert calculus}, Adv. Math.  {\bf 313},  (2017)  282--356.
\bibitem[GKM98]{GKM} M. Goresky, R. Kottwitz and R. MacPherson, \emph{Equivariant Cohomology, Koszul duality
and the localization theorem}, Invent. Math. {\bf 131} (1998), no. 1, 25--83.
\bibitem[GuK10]{GKassel} P. Guillot and C. Kassel, \emph{Cohomology of invariant Drinfeld twists on group algebras}, Int. Math. Res. Not. IMRN, (2010), no. 10, 1894--1939.
\bibitem[HGK10]{Haz} M. Hazewinkel, N. Gubareni and V. Kirichenko, \emph{Algebras, rings and modules. Lie algebras and Hopf algebras}, Math. Surveys and Monographs, {\bf 168}, AMS, 2010.
\bibitem [HWKK96]{HWKK} H. Huang, F. Y. Wu, H. Kunz, and D. Kim, 
\emph{Interacting dimers on the honeycomb lattice: an exact solution of the five-vertex model. }
Phys. A {\bf 228} (1996), no. 1-4, 1--32. 
\bibitem [Hum90]{Humphreys} J. Humphreys, \emph{Reflection groups and Coxeter groups},
Cambridge Studies in Advanced Mathematics, {\bf 29}, Cambridge University Press, Cambridge, 1990. 
\bibitem[Jim86]{Jimbo}  M. Jimbo, \emph{A q-analogue of $U(\mathfrak{𝔤𝔩}(N+1))$, Hecke algebra and the Yang-Baxter equation}, Lett. Math. Phys. {\bf 11} (1986), no. 3, 247--252.
\bibitem[Kan01]{Kane}  R. Kane, \emph{Reflection groups and invariant theory}, CMS Books in Mathematics, Springer, 2001.
\bibitem[Kim96]{K} B. Kim, \emph{On equivariant quantum cohomology},  Int. Math. Res. Not. {\bf 17},  (1996)  841--851.
\bibitem [KL72]{Schubertcalc} S. L. Kleiman and D. Laksov, \emph{Schubert calculus},  Amer. Math. Monthly {\bf 79} (1972), 1061--1082. 
\bibitem [KT03] {KT03} A. Knutson and T. Tao, \emph{Puzzles and (equivariant) cohomology of Grassmannians.} Duke Mathematical Journal {\bf 119}, no. 2 (2003) 221--260. 
\bibitem[KS211]{KonS} M. Kontsevich and Y. Soibelman, \emph{Cohomological Hall algebra, exponential Hodge structures and motivic
Donaldson-Thomas invariants}, Commun. Number Theory Phys. {\bf 5} (2011), no. 2, 231--352. 
\bibitem[KBI93]{quantumscatt} V. E. Korepin, N. M. Bogoliubov, and A. G. Izergin. \emph{Quantum inverse scattering method and correlation functions}, Cambridge Monographs on Mathematical Physics. Cambridge University Press, 1993. 
\bibitem[Kor14]{Korff} C. Korff, \emph{Quantum Cohomology via Vicious and Osculating Walkers}, Letters in Math. Phys, (2014), {\bf 104} (7), 771--810.
\bibitem[KS10]{KS} C. Korff and C. Stroppel, 
\emph{The $\mathfrak{sl}(n)_k$-WZNW fusion ring: a combinatorial construction and a realisation as quotient of quantum cohomology}, Advances in Mathematics {\bf 225}, (1), (2010) 200--268.
\bibitem[Kum02]{Kumar} S. Kumar, \emph{Kac-Moody groups, their flag varieties and representation theory}, 
Progress in Mathematics {\bf 204}, Birkh{\"a}user, 2002.
\bibitem[LLY99]{Lian} B. Lian, K. Liu, and S.-T. Yau, \emph{Mirror principle II}, Surveys Differ. Geom. {\bf 5} (1999) 455--509.
\bibitem[Lib07]{Libine} M. Libine, \emph{Lecture Notes on Equivariant Cohomology},  arXiv:0709.3615. 
\bibitem[Lus91]{Lusztig} G. Lusztig, \emph{Quivers, perverse sheavesand quantized enveloping algebras}, J. Amer. Math. Soc. {\bf 4} (1991), no. 2, 365--421.
\bibitem[Lus88]{Lusztigcentre} G. Lusztig, \emph{Cuspidal local systems and graded Hecke algebras}, I. Inst. Hautes {\'E}tudes Sci. Publ. Math. {\bf 67} (1988), 145--202. 
\bibitem [Mat99]{Mathas} A. Mathas,  \emph{Iwahori-Hecke algebras and Schur algebras of the symmetric group}, University Lecture Series, {\bf 15}, AMS, 1999.
\bibitem[MO12] {MO} D. Maulik and A. Okounkov, {\it Quantum groups and quantum cohomology}, math/arXiv:1211.1287.
\bibitem[Mic06]{Mich1} L. Mihalcea, \emph{Equivariant quantum Schubert calculus}, Adv. Math. {\bf 203} (1):1–33, 2006.
\bibitem[MS99]{MolevSagan} A. Molev  and B. Sagan, \emph{A Littlewood-Richardson rule for factorial Schur functions},  Trans. Amer. Math. Soc. {\bf 351} (1999), no. 11, 4429--4443. 
\bibitem[NS09] {NSh} N. A. Nekrasov and S. L. Shatashvili, \emph{Supersymmetric vacua and Bethe ansatz}, Nuclear Phys. B Proc. Suppl. {\bf 192/193} (2009), 91--112. 
\bibitem [Oko15] {O} A. Okounkov, \emph{Lectures on K-theoretic computations in enumerative geometry}, arXiv:1512.07363.
\bibitem[PR06]{PR}K. Palamarchuk and N. Reshetikhin
\emph{The 6-vertex model with fixed boundary conditions},
  in proceedings of the conference \emph{Bethe Ansatz: 75 years later}, Brussels (2006).
\bibitem [RTSV11]{Varchenko} R. Rimanyi, V. O. Tarasov, V. V. Shekhtman and A.V. Varchenko, \emph{Cohomology of a flag variety as a Bethe algebra}, Funct. Anal. Appl. {\bf 45} (2011), no. 4 , 252--264.
\bibitem[RTV14]{RTV} R. Rimanyi, V. Tarasov, and A. Varchenko, \emph{
Cohomology classes of conormal bundles of Schubert varieties and Yangian weight functions}, 
Math. Z. {\bf 277} (2014), no. 3-4, 1085--1104. 
\bibitem[Rin90]{Ringel} C. Ringel, \emph{Hall algebras and quantum groups}, Inv. Math. {\bf 101} (3): 583--591.
\bibitem[Sav03]{Savage} A. Savage,  \emph{The tensor product of representations of $U_q(\mathfrak{sl}_2)$ via quivers}, Adv. Math. {\bf 177} (2003), no. 2, 297--340.
\bibitem[SV17]{SV} O. Schiffmann and E. Vasserot, \emph{On cohomological Hall algebras of quivers : Yangian}, arXiv:1705.07491.
\bibitem[Soi14]{S} Y. Soibelman, \emph{Remarks on cohomological Hall algebras and their representations}, Arbeitstagung Bonn 2013, 355--385, Progr. Math., {\bf 319}, Springer, 2016.
\bibitem[Tak85]{Tak} L. A. Takhtajan, \emph{Introduction to algebraic Bethe ansatz. Exactly solvable problems in condensed matter and relativistic field theory} (Panchgani, 1985), 175--219, Lecture Notes in Phys., {\bf 242}, Springer, 1985.
\bibitem [Tym08]{Tym} J. Tymoczko, \emph{Permutation actions on equivariant cohomology of flag varieties}, Toric topology, 365--384, 
Contemp. Math., {\bf 460}, AMS 2008. 
\bibitem  [Xia14] {Xiao1} X. Xiao, \emph{The double of representations of Cohomological Hall Algebra for $A1$-quiver}, 
arXiv:1407.7593.
\bibitem  [Xia13] {Xiao2} X. Xiao, \emph{The product formula in cohomological Hall algebras}, Sao Paulo J. Math. Sci. {\bf 7} (2013), no. 1, 59--68.
\bibitem[Vas98]{Vasserot} E. Vasserot, \emph{Affine quantum groups
and equivariant K-theory},  Transform. Groups {\bf 3}, (1998), no. 3, 269–299.
\bibitem [Wu68]{Wu} F. Y. Wu, \emph{Remarks on the modified potassium dihydrogen phosphate model of a ferroelectric}, Phys. Rev. {\bf 168},  (1968), no. 3, 539-543.
\
\end{thebibliography}

\end{document}